\preprint{BRX-TH-6728}
\title{
Quantum Bit Threads and the Entropohedron
}
\author[a,b]{Matthew Headrick,}
\author[c]{Sreeman Reddy Kasireddy,}
\author[d]{and Andrew Rolph}
\affiliation[a]{Martin Fisher School of Physics, Brandeis University, Waltham MA 02139, USA}
\affiliation[b]{Institut des Hautes Etudes Scientifiques, 91440 Bures-sur-Yvette, France}
\affiliation[c]{Department of Particle Physics and Astrophysics, Weizmann Institute of Science, Rehovot 7610001, Israel}
\affiliation[d]{Vrije Universiteit Brussel (VUB) and The International Solvay Institutes, Pleinlaan 2, B-1050 Brussels, Belgium}
\emailAdd{headrick@brandeis.edu}
\emailAdd{sreeman@weizmann.ac.il}
\emailAdd{andrew.d.rolph@gmail.com}
\abstract{
We derive several new quantum bit thread prescriptions for holographic entanglement entropy, equivalent for static states to the quantum extremal surface formula. Our new prescriptions come in many varieties: vector field-based or based on measures over bulk curves, dependent or independent of the bulk UV regulator, loose and strict versions of constraints, and more. We also explore how bit threads behave in the presence of entanglement islands and baby universes. Finally, our prescriptions inspire new measures of entanglement that we call entanglement distribution functions, which can be packaged into a convex polytope that we call the entropohedron.
\newline
\newline
A video abstract is available at \url{https://youtu.be/xSyRAXkPpdw}.
}
\gdef\@fpheader{}
\begin{document}
\nolinenumbers
\maketitle
\flushbottom

\section{Introduction}

Holographic entanglement entropy, a cornerstone of our modern understanding of quantum gravity, reveals how bulk geometry encodes boundary entropy, and vice versa. There are two equivalent sets of prescriptions for holographic entanglement entropy, the older being surface-based, starting with the Ryu-Takayanagi (RT) formula \cite{Ryu:2006bv}.

In the RT formula, given a static (or time reflection-symmetric) slice $\Sigma$ of a classical holographic spacetime, the entanglement entropy of any boundary region $A$ equals a quarter of the area, in Planck units, of the minimal bulk surface homologous to $A$:
\be\label{RT}
S(A) = \min_{r\in\RR_A}\frac{|\eth r|}{4\GN}\,.
\ee
$\RR_A$ is the set of bulk regions $r\subseteq \Sigma$ coincident with $A$ on the boundary, $r\cap\partial\Sigma=A$, and $\eth r$ is the bulk part of the boundary of $r$, $\eth r:=\partial r\setminus\partial\Sigma$. With these two definitions, $\eth r$ is a surface homologous to $A$. We write the formula in terms of regions rather than surfaces because this will be more amenable to generalizations later.

The more recent formulation of holographic entanglement entropy are the bit thread prescriptions.
A classical bit thread is a continuous bulk curve connecting $A$ to its complement $A^c$. In the bit thread reformulation of the RT formula \cite{Freedman:2016zud}, $S(A)$ equals the maximum number of threads connecting $A$ to $A^c$, subject to a bound on their density, $\rho\le1/4\GN$. Classical bit threads can be described mathematically as the field lines of a divergenceless bulk vector field $v$. With the density of threads identified with the norm of $v$, the density bound becomes a norm bound $|v|\le1/4\GN$ and the number of threads connecting $A$ to $A^c$ is identified with the flux of $v$ on $A$, giving the \textit{classical 
flow} formula:
\be\label{maxflow}
S(A) = \max_{v}\int_An\cdot v\quad\text{subject to:}\quad|v|\le\frac1{4\GN}\,,\quad\nabla\cdot v=0\,,
\ee
where $n$ is the inward-directed unit normal vector. The equivalence between \eqref{maxflow} and the original RT formula \eqref{RT} is guaranteed by the Riemannian max flow-min cut theorem \cite{federer1974real, strang1983maximal, sullivan1990crystalline, nozawa1990max, Headrick:2017ucz}. With bit threads, the connection between boundary entanglement and bulk geometry is even more manifest than in the RT formula.  Bit thread prescriptions have both conceptual advantages over surface-based prescriptions, such as making the physical meaning of entropy inequalities manifest, and technical advantages, as they are expressed as convex optimisation problems.

Quantum entanglement between bulk fields also contributes to boundary entanglement entropy. Within the set of surface-based prescriptions, this is accounted for by the \emph{quantum extremal surface} (QES) formula, which replaces the area in the RT formula by the generalized entropy:\footnote{Although ``quantum minimal surface'' would be a more appropriate name here, in order to agree with the general literature, we will continue to call \eqref{QES} the QES formula. The covariant QES formula involves minimizing among surfaces that extremize the generalized entropy. The fact that the covariant QES reduces in the time-symmetric case to a minimization on the time-symmetric slice follows from the quantum focusing conjecture by the same reasoning as the reduction from the HRT to the RT formula: the quantum minimal surface $\gamma_1$ on that slice is necessarily extremal by the time symmetry; but if the minimal QES $\gamma_2$ is off the slice, then projecting it by light rays onto the slice produces a surface $\gamma_3$ with $S_{\rm gen}(\gamma_3)\le S_{\rm gen}(\gamma_2)< S_{\rm gen}(\gamma_1)$, a contradiction.}
\be\label{QES}
S(A) = \min_{r\in\RR_A}S_{\rm gen}(r)\,,
\ee
where the generalized entropy is
\be\label{Sgendef}
S_{\rm gen}(r) = \frac{|\eth r|}{4\GN}+S_{\rm b}(r)\,.
\ee
The QES formula has led to important insights in quantum gravity, such as, in the context of black hole evaporation, entropy curves consistent with unitarity, and the presence of entanglement islands \cite{Almheiri:2019psf,Penington:2019npb,Almheiri_2021}.

There are a number of subtle and interesting issues related to both the definition of the generalized entropy (such as the contribution of the gravitational field itself \cite{Colin-Ellerin:2025dgq} and additional terms that depend on the details and dimensionality of the bulk theory), and the QES formula (such as conditions on the state of the bulk fields for its validity \cite{Akers:2020pmf}). Nonetheless, in this paper, we will focus on the QES formula in the form~\eqref{QES}, as it captures the essence of the bulk entropy contribution and is rich enough to be worth studying on its own.

A bit thread reformulation of the QES formula~\eqref{QES} was given in the paper \cite{Rolph:2021hgz}.%
\footnote{Other bit-thread formulas including quantum corrections to $S(A)$ have been proposed. For example, the paper \cite{Agon:2021tia} gave a flow proposal that captures corrections to first order in $\GN$ (whereas the QES formula and \cite{Rolph:2021hgz} account for correction to all order in $\GN$). And \cite{Du:2024xoz} gave a flow reformulation of the ``generalised entanglement wedge" proposal of \cite{Bousso:2022hlz}.} Specifically, the definition of a flow was relaxed to allow for a non-zero divergence controlled by the bulk entropy:  
\be\label{qmaxflow1}
S(A) = \max_v\int_An\cdot v\quad\text{subject to:}\quad|v|\le\frac1{4\GN}\,,\quad\forall r\in\RR_A\,,\,\,-\int_r\nabla\cdot v\le S_{\rm b}(r)\,.
\ee
For reasons that we will shortly make clear, we call this the \emph{loose quantum flow} prescription.
In this quantum prescription, unlike the classical case, bit threads can start and end at points in the bulk, with the number of threads that can end in a given bulk region bounded by the region's entropy. This allows for more flux to pass through $A$ than can pass through any surface homologous to $A$.
The paper \cite{Rolph:2021hgz} proved the equivalence of \eqref{qmaxflow1} and \eqref{QES}, and explored its properties in general as well as in several examples. Note that the data required to evaluate \eqref{qmaxflow1} is precisely the same as that required to evaluate \eqref{QES}, namely the metric and the entropy of every bulk region coincident to $A$. It's also worth noting that the proof of equivalence relied essentially on the strong subadditivity property of the bulk entropies.

\begin{figure}
    \centering
    \includegraphics[width=0.75\linewidth]{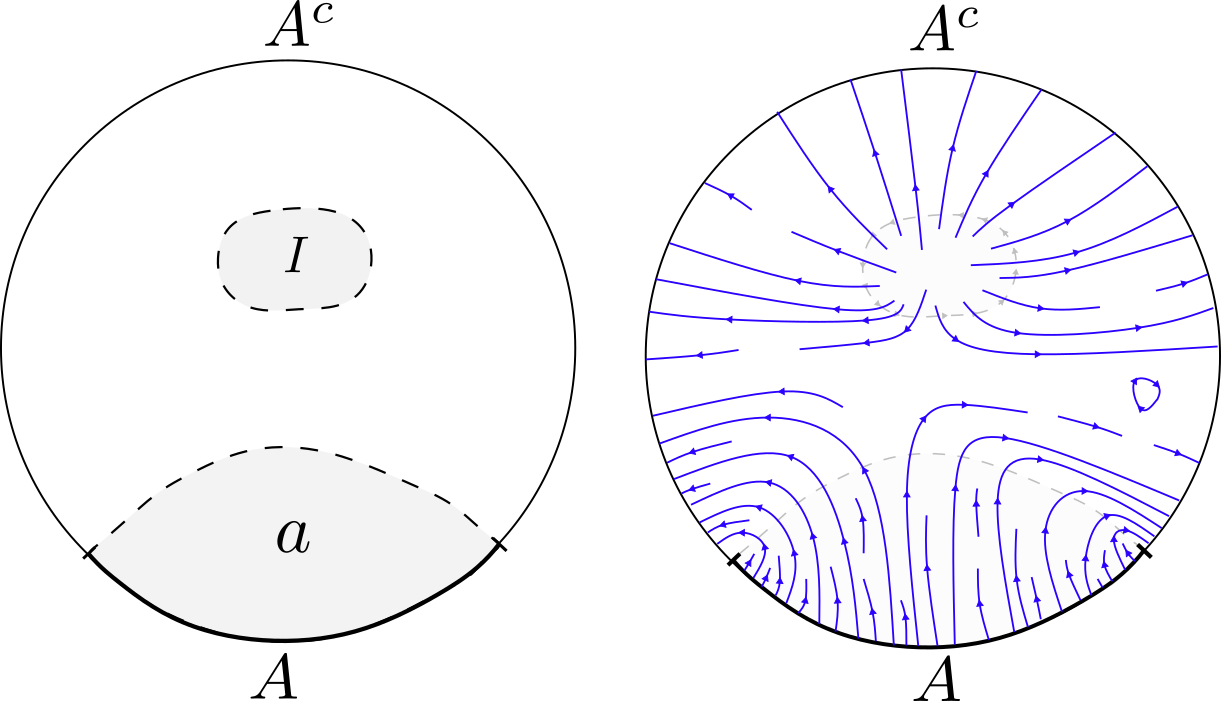}
    \caption{Contrasting the surface-based QES formula and a quantum bit thread prescription in an island setup. Both disks represent the same geometry, a time slice of an asymptotically AdS spacetime, and the same highly entangled bulk state. Left: the QES formula calculates $S(A)$ by minimising over bulk surfaces, $\displaystyle S(A) = \min_{\eth r \sim A } \Sgen  (r) = \Sgen  (a \cup I)$.
    Right: an optimal flow configuration for the quantum bit thread prescription~\eqref{qmaxflow1}. We maximise the boundary flux, $S(A) = \max_v \int_A n\cdot v$, with the $v$-constraints given in~\eqref{qmaxflow1}. The blue curves are the bit threads, which in this prescription are the integral curves of $v$, and the threads are maximally packed on $\eth a \cup \eth I$; the boundary of the island is a bit thread bottleneck.}
    \label{fig:QuantumThreadsIsland}
\end{figure}

\begin{figure}
    \centering
    \includegraphics[width=0.65\linewidth]{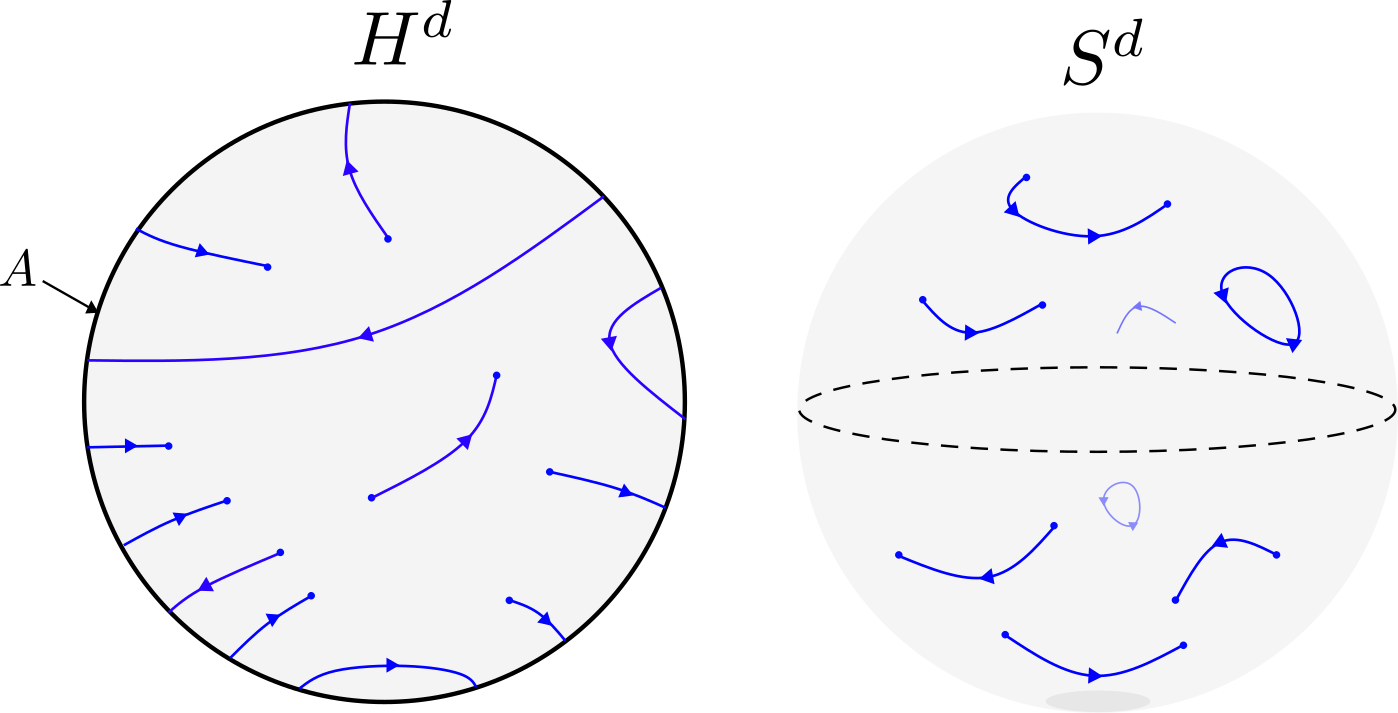}
    \caption{Quantum bit threads and closed universes I: the AdS bulk timeslice, the hyperbolic disk, is entangled with a closed universe $S^d$. The constraint in~\eqref{qmaxflow2} allows threads to end in the AdS bulk, but then also forces them to reappear in the closed universe, and then return to the AdS bulk, because it requires $\int_{\Sigma} \nabla \cdot v = 0$ for a pure boundary state, and we have taken $A = \del \Sigma$. As depicted, the net maximal boundary flux is $\int_A n\cdot v = S(A) = 0$. }
    \label{fig:ClosedUniverse1}
\end{figure}

\begin{figure}
    \centering
    \includegraphics[width=0.9\linewidth]{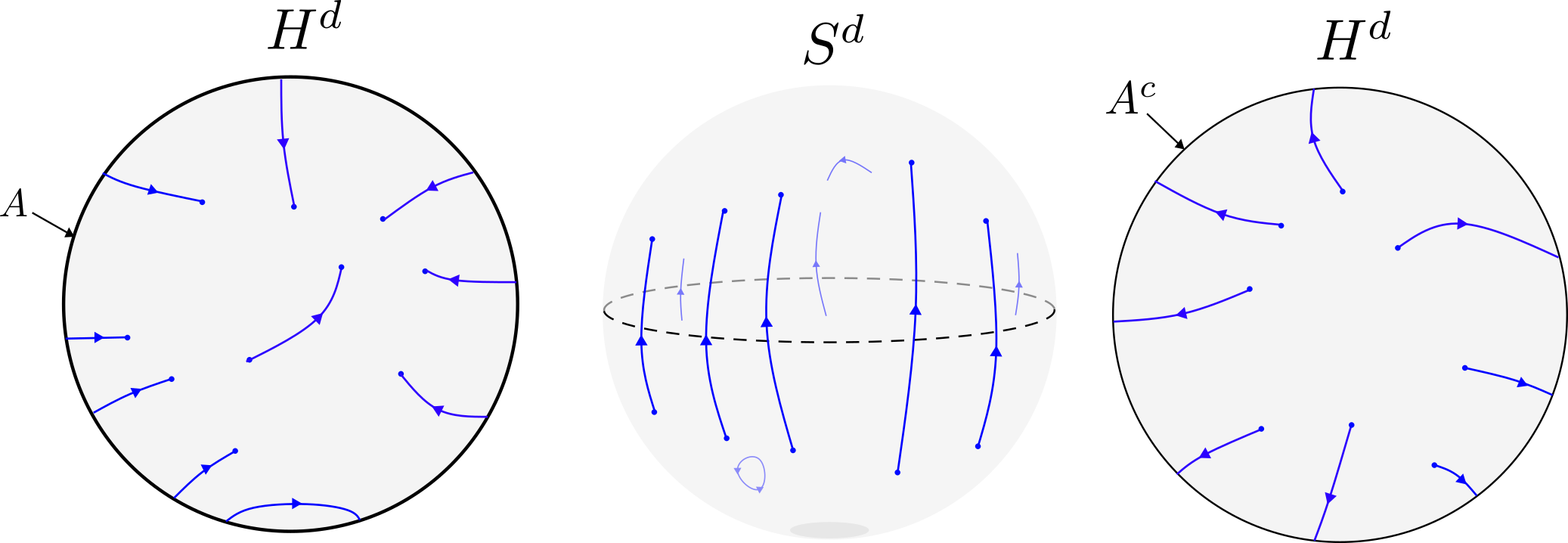}
    \caption{Quantum bit threads and closed universes II: the left AdS bulk is entangled with the southern hemisphere of the $S^d$, and the northern hemisphere is entangled with the right AdS bulk. The constraint in~\eqref{qmaxflow2} allows threads to end in the left bulk, but they have to reappear in the sphere's southern hemisphere. They can then end in the northern hemisphere, but they have to reappear in the right bulk. In this example, as depicted, quantum bit threads in optimal configurations have to pass through the closed universe when going from $A$ to $A^c$. }
    \label{fig:ClosedUniverse2}
\end{figure}

In this paper, we will extend the work of \cite{Rolph:2021hgz} with more bit thread reformulations of the QES formula \eqref{QES}. First, we will show that it is possible to impose a much stronger constraint on the divergence of $v$ without changing the maximum flux, giving the \textit{strict quantum flow} prescription:
\be\label{qmaxflow2}
S(A) = \max_v\int_An\cdot v\quad\text{subject to:}\quad|v|\le\frac1{4\GN}\,,\qquad\forall r\in\RR\,,\quad\left|\int_r\nabla\cdot v\right|\le S_{\rm b}(r)\,.
\ee
The constraint in \eqref{qmaxflow2} on the divergence of $v$ differs from that in \eqref{qmaxflow1} in two ways: first, it is applied to \emph{every} bulk region $r$, with $\RR$ defined as the set of all bulk regions, without any condition on their intersection with the boundary; second, the integrated divergence is bounded above by $S_{\rm b}(r)$ as well as below by $-S_{\rm b}(r)$. 
Both constraints are non-local and bound the flux into bulk subregions; the larger the bulk entropy of the region, the larger the number of bit threads that can end there. Unlike the loose constraint, but like classical flows, the strict constraint is independent of the choice of boundary region $A$. We will show that it is automatically obeyed in doubly holographic systems; for this purpose, we will prove a new theorem giving necessary and sufficient conditions for a function on the boundary of a Riemannian manifold to be the boundary flux of a classical flow.  More generally, the strict divergence constraint leads to an interesting generalization of the notion of an entanglement contour \cite{Vidal:2014aal}, which we call an \textit{entanglement distribution function (EDF)}. For a finite set of $N$ parties, the set of all EDFs for a given state is a convex polyhedron in $\R^N$ which we call the \emph{entropohedron}. We will determine its properties and study how it reflects the entanglement structure of the state. We also consider the generalization of strict quantum flows to multiflows.

From the bit thread perspective, \emph{entanglement islands} are highly entangled bulk regions where so many threads are forced to reappear because of the divergence constraint that the threads become maximally packed on a surface disconnected from the asymptotic boundary, and this bottleneck is the boundary of the island; see Fig.\ \ref{fig:QuantumThreadsIsland}. In Fig.\ \ref{fig:ClosedUniverse1}, we depict an example where the AdS bulk is entangled with a closed universe, and the quantum bit threads can jump across, but they have to return because the constraint in~\eqref{qmaxflow2} requires that $\int_{\Sigma} \nabla \cdot v = 0$ for a pure boundary state. For a given setup, there are generically many optimal flow configurations, and $v=0$ is the simplest for the example shown in Fig.\ \ref{fig:ClosedUniverse1}. In Fig.\ \ref{fig:ClosedUniverse2}, inspired by \cite{Antonini:2023hdh}, we depict an example where optimal quantum thread configurations in \eqref{qmaxflow2} must pass through an entangled closed universe on their way between the two AdS boundaries. 

Generalized entropy is independent of the value of UV regulator, and the separation of generalized entropy into classical and quantum pieces is, from a microscopic viewpoint, artificial, since both the Newton constant and the bulk entropy are regulator-dependent. In the quantum bit thread prescriptions discussed so far, $\GN$ appears in the density bound and $S_{\rm b}(r)$ appears in the divergence constraint, so the set of allowed thread configurations is regulator-dependent. However, we will see how they conspire to produce a macroscopic thread configuration that is regulator-independent. As an extreme example, we will investigate what happens to the threads in the induced-gravity limit $\GN\to\infty$ in which the area term disappears entirely.

On the topic of regulator-dependence, we will derive another formulation of quantum bit threads, in which the divergence and density constraints are combined into a single, \emph{cutoff-independent} bound. This is the \textit{strict, cutoff-independent quantum bit thread flow prescription:}
\bne S(A) = \sup_v \int_A v \qquad \text{subject to:} \quad \forall\, r\in\RR\,,\quad \int_{\eth r} |v| + \left| \int_r \nabla \cdot v \right |\leq \Sgen (r)\,. \label{eq:sciqb} \ene 
Like the cutoff-dependent quantum bit thread prescriptions~\eqref{qmaxflow1} and~\eqref{qmaxflow2}, this formulation has both strict and loose versions. 
The bound in~\eqref{eq:sciqb} depends on the generalized entropy of bulk regions, so it is regulator-independent. One thing that we will show is that every allowed thread configuration in~\eqref{qmaxflow1} and~\eqref{qmaxflow2} is an allowed thread configuration of \eqref{eq:sciqb}.

The last quantum bit thread flow prescription that we will mention here is both cutoff-independent and~\emph{divergenceless}:
\bne S(A) = \sup_v \int_A v \qquad \text{subject to:} \quad \nabla \cdot v = 0\,, \quad \forall\,r\in\RR_A \,,\quad \int_{\eth r} |v|  \leq \Sgen(r)\,.\label{eq:divloocu_intro} \ene
As the flow field is divergenceless, this prescription is, in a sense, the closest to the original classical formulation of~\cite{Freedman:2016zud}.

An alternative mathematical formulation of bit threads, instead of flows, is in terms of so-called \emph{thread distributions} \cite{Headrick:2022nbe}. In the classical case, a thread is a continuous bulk curve connecting boundary points, and a thread distribution is a measure on the set of threads subject to a density bound. Inspired by the quantum flows, we propose a quantum thread distribution formula, in which the quantum threads are allowed to jump from one point to another of the bulk, subject to the constraint that the number of threads leaving or entering any bulk region is bounded above by its entropy. We give some evidence, in particular based on double holography, that this thread distribution formula is equivalent to the QES formula. However, several statements concerning quantum bit thread distributions are left as conjectures.

In the form of the QES prescription given in \eqref{QES}, we are accounting for bulk quantum corrections but, implicitly, staying in a static or time reflection-symmetric setting --- as we will throughout this paper --- where the extremization part of the QES formula simplifies to a minimization over subregions of a single time slice.
In a forthcoming paper \cite{CQBT}, we will drop the static assumption, and derive \textit{covariant} quantum bit thread prescriptions, which synthesize and extend the quantum bit threads with the covariant classical bit threads of \cite{Headrick:2022nbe}, which are equivalent to the covariant Hubeny-Rangamani-Takayanagi (HRT) formula. 

The paper is organized as follows: In section \ref{sec:strictflows} we define and discuss strict quantum flows and entanglement distribution functions. We define and investigate our cutoff-independent formulations in section \ref{sec:cutin}, and our thread distribution formulations in section \ref{sec:qdist}. In section \ref{sec:entropohedron}, we define a new object called the entropohedron, determine its properties, and explain how it captures the entanglement structure of a given state. In appendix \ref{sec:gen_entropy}, we discuss generalised entropy, in particular its regulator independence and its value in the zero volume limit.

\subsection{Setup \& notation}

Throughout this paper, we fix a static or time-symmetric slice $\Sigma$ of a holographic spacetime, governed by Einstein gravity coupled to some set of quantum fields, whose state (which may be pure or mixed) we also fix. We assume that $\Sigma$ is complete, in the sense that its boundary $\partial\Sigma$ consists only of the conformal boundary, on which the dual field theory lives. We assume that the theory and state are such that the quantum extremal surface formula \eqref{QES} holds. For the purposes of this paper, \eqref{QES} may be taken as the definition of $S(A)$.

We use $A, B, C$ and so on to refer to boundary spatial regions. If such a region $A$ has a boundary (a non-empty entangling surface), then we assume that a regulator has been put in place to ensure that bulk surfaces homologous to $A$ have finite area. This is a UV cutoff from the boundary viewpoint.

In order to make the bulk entropy $S_{\rm b}(r)$ well-defined, we also need a bulk UV regulator. The associated length $\epsilon$ must be larger than the Planck length but smaller than the AdS length and any other geometrical invariant length scales of $\Sigma$. We consider only bulk regions that are larger than $\epsilon$ in all dimensions. We will not be precise about this restriction on bulk regions since, as we will show, the constraints associated with the smallest regions are not active. Except in subsection \ref{sec:changingcutoff}, where we investigate the effect of changing $\epsilon$, we will consider it to be fixed.

We will leave metric-derived measure factors in integrals ($\sqrt g$ etc.) implicit. For example, by $\int_An\cdot v$ we mean $\int_A\sqrt{h}\,n\cdot v$, where $h$ is the determinant of the induced metric on $A$. For orientations, for any bulk region $r$, including $\Sigma$, we take the unit normal vector $n$ to be outward-pointing on the $\eth r = \del r \backslash \del \Sigma$ component of the boundary, and inward-pointing on the $\del r \cap \del \Sigma$ component. In particular, $n$ is inward-pointing on $A \subset \del \Sigma$.

The bulk slice $\Sigma$ may include an end-of-the-world brane $Q$. Although topologically $Q$ is part of the boundary of $\Sigma$, physically it is part of the bulk, hence we reserve the term ``boundary'' and the symbol $\partial\Sigma$ to refer to the conformal boundary where the dual field theory lives. $Q$ may host quantum fields with their own entropy, and these will contribute to the bulk entropy $S_{\rm b}(r)$ appearing in the QES formula whenever $r$ intersects $Q$. We also do not include $Q$ in $\partial r$ or $\eth r$. To make the divergence theorem work, we then formally define the divergence $\nabla\cdot v$ to include a delta function on $Q$ proportional to $n\cdot v$, where $n$ is the inward-directed unit normal to $Q$. With these definitions in place, we will leave the possible presence of an end-of-the-world brane implicit in the rest of the paper.

For convenient reference, we collect our main notation here: 
\begin{align}
\Sigma&=\text{time-symmetric bulk slice }\\
x&=\text{point in $\Sigma$} \\
\partial\Sigma&=\text{slice of conformal boundary where dual CFT lives} \\
A&=\text{region of $\partial\Sigma$} \\
A^c&:=\partial\Sigma\setminus A  
\\
\A&:=\text{set of regions of $\partial\Sigma$} \\
n&:=\text{inward-directed unit normal vector on $\partial\Sigma$} \\
\epsilon&=\text{bulk UV cutoff length} \\
r&= \text{bulk region (codimension-0 subset of $\Sigma$, larger than $\epsilon$ in all dimensions)} \\
S_{\rm b}(r)&:=\text{entropy of bulk fields in $r$} \\
\eth r&:=\partial r\setminus\partial\Sigma \\
|\eth r|&:=\text{area of $\eth r$}\\
\RR&:=\text{set of bulk regions} \\
\RR_A&:=\{r\in\RR:r\cap\partial\Sigma=A\} \\
S(A)&:=\min_{r\in\RR_A}\left(\frac{|\eth r|}{4\GN}+S_{\rm b}(r)\right) 
\end{align}

\section{Strict quantum flows}
\label{sec:strictflows}

In this section, we will use a doubly holographic setup to motivate the strict quantum flows defined by the right-hand side of \eqref{qmaxflow2}. We will then study these flows, showing that they are equivalent to the QES formula and looking at a few simple examples. Along the way, a concept we call \emph{entanglement distribution function} will play an important role, so we will spend some time studying it. Some of the proofs are relegated to the last subsection to avoid interrupting the narrative. 

\subsection{Flows in double holography}
\label{sec:flowsdouble}

A doubly holographic theory is one where the bulk matter fields are themselves holographic, described by a higher-dimensional second bulk~\cite{Karch:2000gx,Karch:2000ct,Emparan:2002px,Almheiri:2019hni}.
Typically, it is assumed that the theory on the second bulk may be treated classically, hence that the RT formula applies for computing entropies of regions in the first bulk. The geometric nature of these bulk entropies makes the computation of entropies in doubly holographic theories more tractable than in generic holographic theories, and is a source of useful intuition. We will use double holography to motivate the definition of strict quantum flows.

\begin{figure}
    \centering
    \includegraphics[width=0.45\linewidth]{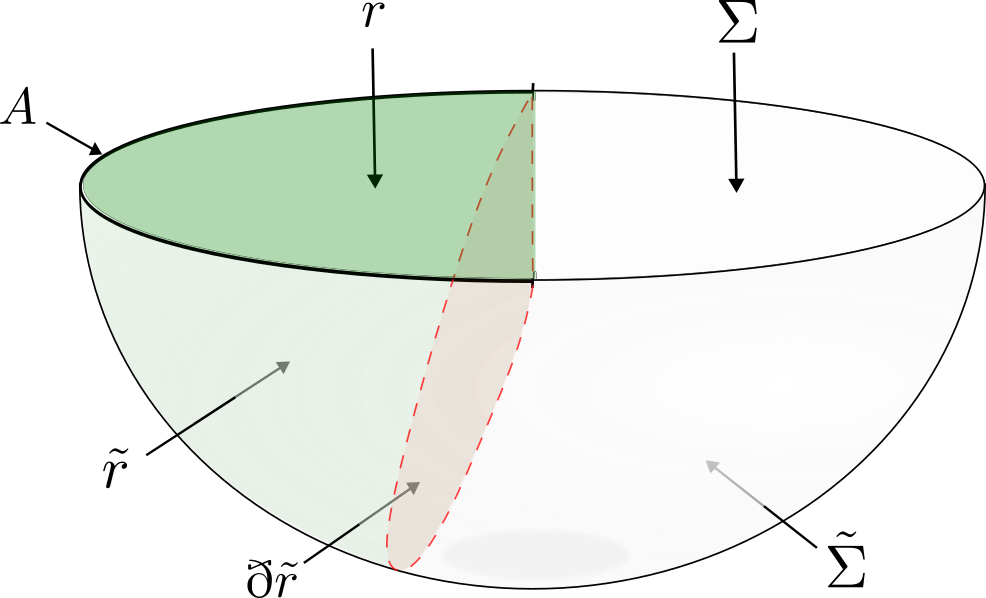}
    \caption{A doubly holographic setup. This figure illustrates the notation for geometric regions used in section~\ref{sec:flowsdouble}. The entropy of region $r\subseteq \Sigma$ is proportional to the area of the RT surface $\eth \tilde r$ in the higher-dimensional second bulk $\tilde \Sigma$. 
    }
    \label{fig:doubleh}
\end{figure}

We will call the second bulk $\tilde\Sigma$; the first bulk $\Sigma$ is thus part of $\partial\tilde\Sigma$, and the manifold $\partial\Sigma$ where the original field theory lives is a codimension-2 corner of $\tilde\Sigma$. See Fig.\ \ref{fig:doubleh}. (The boundary of $\tilde\Sigma$ may also include another region representing an auxiliary system with which the fields on $\Sigma$ may be entangled.) 

By the assumption that the theory on $\tilde\Sigma$ is classical, for any region $r\subseteq\Sigma$, the bulk entropy $S_{\rm b}(r)$ can be computed by applying the RT formula:
\be
S_{\rm b}(r) = \min_{\tilde r\in{\tilde\RR}_r}\frac{|\eth\tilde r|}{4\widetilde\GN}\,,
\ee
where: $\tilde\RR_r$ is the set of regions $\tilde r\subseteq\tilde\Sigma$ such that $\tilde r\cap\partial\tilde\Sigma=r$; $\eth\tilde r:=\partial\tilde r\setminus\partial\tilde\Sigma$; and $\widetilde\GN$ is the Newton constant in $\tilde\Sigma$. The QES formula \eqref{QES} can then be written as follows:
\be\label{RTdouble}
S(A) = \min_{\tilde r\in\tilde\RR_A}\left(\frac{|\eth(\tilde r\cap\Sigma)|}{4\GN}+\frac{|\eth\tilde r|}{4\widetilde\GN}\right).
\ee
Here $\tilde\RR_A$ is the set of regions $\tilde r\subseteq\tilde\Sigma$ such that $\tilde r\cap\partial\tilde\Sigma\subseteq\Sigma$ and $\tilde r\cap\partial\Sigma=A$.

\begin{figure}
    \centering
    \includegraphics[width=0.85\linewidth]{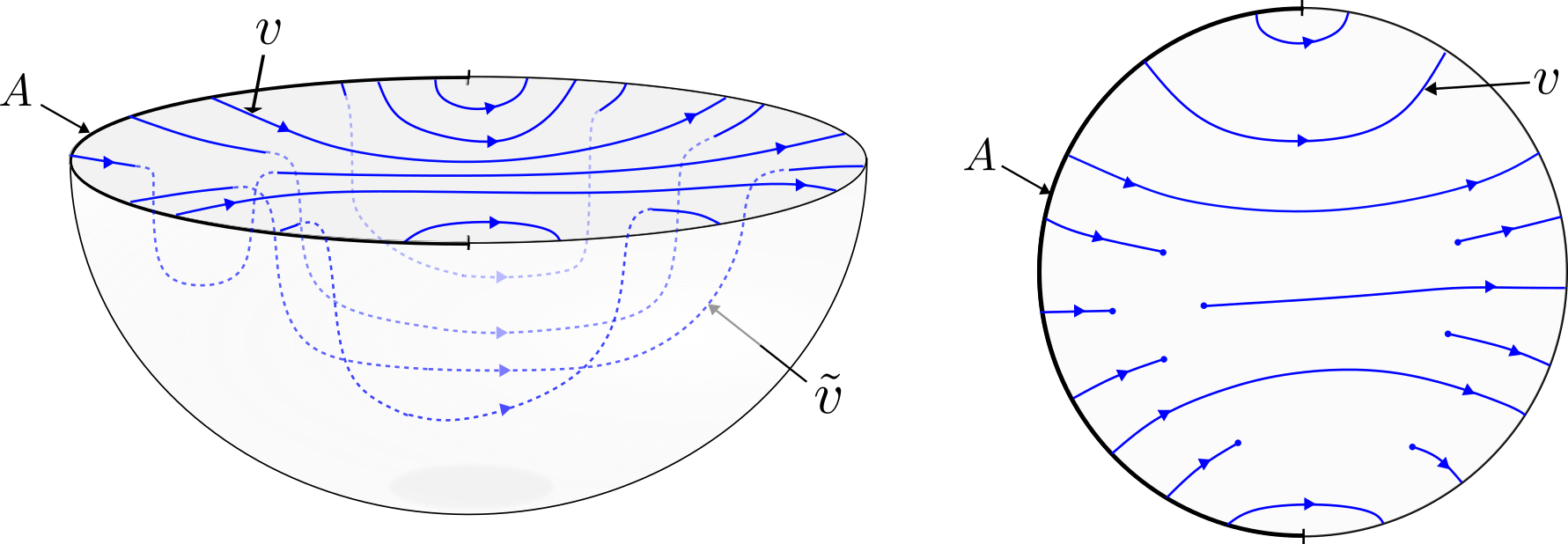}
    \caption{Quantum bit threads in double holography. Left: a maximal flow configuration from the highest-dimensional bulk perspective. The bit threads are classical (divergenceless) and their field-lines move between the two bulks $\Sigma$ and $\tilde{\Sigma}$. $\eth \tilde{r}_{\text{QES}}$, the classical RT surface, is the bottleneck to the double flow $(v,\tilde v)$. Right: the same flow configuration from the intermediate bulk perspective. Flow lines for $(v,\tilde v)$ that are continuous from the perspective of $\tilde{\Sigma}$ become discontinuous flow lines for $v$ from the perspective of $\Sigma$.
    }
    \label{fig:doubleh2}
\end{figure}

A \emph{classical flow} is a vector field $v$ obeying
\be
|v|\le\frac1{4\GN}\,,\qquad\nabla\cdot v=0\,.
\ee
Relaxing and dualizing the minimization problem \eqref{RTdouble}, as one does in the usual max flow-min cut theorem, yields a \emph{double flow}, which is a pair $(v,\tilde v)$ where $\tilde v$ is a classical flow on $\tilde\Sigma$ and $v$ is a vector field on $\Sigma$ obeying
\be
|v|\le\frac1{4\GN}\,,\qquad
\nabla\cdot v+\tilde n\cdot\tilde v=0\,,
\ee
where $\tilde n$ is the inward-directed unit normal to $\Sigma$ in $\tilde\Sigma$. We have:
\be\label{doublemaxflow}
S(A) = \max\int_An\cdot v\quad\text{over double flows $(v,\tilde v)$}\,.
\ee
(See theorem \ref{thm:doublemaxflow} for the proof of \eqref{doublemaxflow}.) The physics here is fairly intuitive: the field lines can pass through either the first or second bulk, respecting the local density bound, and can move back and forth between the first bulk and the second one. See Fig.\ \ref{fig:doubleh2}.

Our goal is to use \eqref{doublemaxflow} to inspire a flow formula for a holographic theory with general bulk matter. To do that, we need to rewrite \eqref{doublemaxflow} in terms of the bulk entropies $S_{\rm b}(r)$. What is the relation between the (classical) flow $\tilde v$ in the second bulk $\tilde\Sigma$ and the entropies $S_{\rm b}(r)$ in the first bulk $\Sigma$? By the max flow-min cut theorem, $S_{\rm b}(r)$ equals the maximal flux over $r$ for any flow $\tilde v$. Since the set of flows is invariant under $\tilde v\to-\tilde v$, we have
\be
\left|\int_r\tilde n\cdot\tilde v\right|\le S_{\rm b}(r)\,,
\ee
which implies
\be\label{nablavEDF}
\left|\int_r\nabla\cdot v\right|\le S_{\rm b}(r)\,.
\ee
In fact, the converse holds: if $v$ obeys \eqref{nablavEDF} for all regions $r\subseteq\Sigma$, then there exists a classical flow $\tilde v$ on $\tilde\Sigma$ such that
\be
\nabla\cdot v+\tilde n\cdot\tilde v=0\,.
\ee
See theorem \ref{thm:classicalflowEDF} for the general statement and proof. So maximizing $\int_An\cdot v$ over double flows $(v,\tilde v)$ is equivalent to maximizing it over vector fields $v$ on $\Sigma$ obeying
\be\label{strictdef}
|v|\le\frac1{4\GN}\,;\qquad
\forall\,r\in\RR\,,\quad\left|\int_r\nabla\cdot v\right|\le S_{\rm b}(r)\,,
\ee
where $\RR$ is the set of regions $r\subseteq\Sigma$.

Having thus succeeded in rewriting the max flow formula in double holography in a form that does not make direct reference to the second bulk, we can now ask whether the resulting formula is correct for a general bulk matter theory. We will turn to this question in the next subsection.

\subsection{Definition \& basic properties of strict flows}
\label{sec:strictflowsdef}

We now consider a general holographic theory, in which $S(A)$ is computed by the QES formula \eqref{QES}, with the bulk entropy $S_{\rm b}$ being a function on the set $\RR$ of bulk regions that obeys positivity and strong subadditivity but is otherwise general. Motivated by what we found in doubly holographic theories in the previous subsection, we define a \emph{strict quantum flow} as a vector field $v$ on $\Sigma$ obeying \eqref{strictdef}. 
This definition should be compared with that of the quantum flows defined in \cite{Rolph:2021hgz}, which we will call ``loose''.\footnote{One can also consider intermediate definitions, with $\RR_A$ and the absolute value, or with $\RR$ and no absolute value. Since loose and strict flows both have maximum flux equal to $S(A)$, these intermediate flows must also have this property.} A \emph{loose quantum flow} for a boundary region $A$ is a vector field $v$ on $\Sigma$ such that
\be
|v|\le\frac1{4\GN}\,,\qquad
\forall r\in\RR_A\,,\quad-\int_r\nabla\cdot v\le S_{\rm b}(r)\,,
\ee
where $\RR_A$ is the set of regions $r\subseteq\Sigma$ such that $r\cap\partial\Sigma=A$. The differences are, first, the divergence bound is applied only to regions in $\RR_A$, and second, the integrated divergence is bounded only in one direction. In \cite{Rolph:2021hgz}, it was shown that the maximum flux through $A$ for a loose flow equals $S(A)$.  
The same holds for strict flows:
\be\label{S(A)strict}
S(A) = \max\int_A n\cdot v\text{ over strict quantum flows $v$}\,.
\ee
See theorem \ref{thm:S(A)strict} for the proof of \eqref{S(A)strict}. In fact, we will give two proofs, both using Lagrange dualization: one dualizing from \eqref{S(A)strict} to the QES prescription, and one doing the reverse. Flow-to-surface dualizations are useful when one has a bit thread proposal in hand, like~\eqref{S(A)strict}, that one wishes to prove is equivalent to the corresponding surface-based prescription, while surface-to-flow dualizations are useful when, such as for the covariant QES prescription, there is not an obvious bit thread proposal, and one wishes to derive it starting from the surface-based prescription. 
So, the pair of proofs illustrate how the dualization works in both directions, and the surface-to-flow dualization will be useful for the covariant setup~\cite{CQBT}.
Also, in theorem~\ref{thm:looseS}, we will also give a new proof that the maximum flux over loose quantum flows equals $S(A)$. This proof is complementary to the proof given in~\cite{Rolph:2021hgz}, because the Lagrange dualization is in the reverse direction, and it is complementary to our second proof of theorem~\ref{thm:S(A)strict} because it smears out bulk regions a different way, using level sets rather than measures.
Notably, all of these proofs, as well as those of theorems \ref{thm:strictsubcontour} and \ref{thm:strictnesting}, make use of the assumption that $S_{\rm b}$ obeys strong subadditivity (SSA). 

Given \eqref{S(A)strict}, the set of maximal loose flows contains some strict flows. Since they both calculate $S(A)$, one can choose to work with either loose or strict flows. The loose flows have the virtue that they require only the same data as the QES itself, namely the metric and the entropy of every region in $\RR_A$. The strict flows require the entropy of every bulk region; the flip side is that they therefore encode much more information about the quantum state of the bulk fields. Furthermore, the definition doesn't depend on the region $A$; as with classical flows, one can speak of strict flows without specifying the region.

Because of the flow's divergence constraint, in particular the absolute value, the strict flow satisfies $\int_\Sigma \nabla \cdot v = 0$, assuming that the bulk state is pure, which means the net number of flow lines that end in any bulk region $r$ equals the net number that start in $r^c$. This is important for the pictorial interpretation that the quantum bit threads jump across the QES. In contrast, with the loose divergence constraint, max-flow lines end in the entanglement wedge of $A$%
\footnote{
Throughout this paper, what we mean by the entanglement wedge of $A$ is the intersection of the time-reflection-symmetric bulk time slice with the common definition of the entanglement wedge of $A$ (the domain of dependence of any codim-1 spacelike surface whose boundary is $A \cup \text{QES}_A$). Our entanglement wedge is a codim-0 region with respect to the bulk time slice.} 
but do not have to reappear in the complement region. 

In general, there is no advantage to have sources in the entanglement wedge of $A$, because bit thread flow lines attached to them cannot increase the flux through $A$. So, does there always exist a member in the set of optimal flow configurations such that the entanglement wedge of $A$ contains only sinks and no sources? No, and Fig.~\ref{fig:ClosedUniverse2} is a counterexample: the threads have to reappear in the southern hemisphere of the sphere, which is part of A's entanglement wedge. This is not just a peculiarity of disconnected bulks because, even if we connected the southern hemisphere to the left disk in Fig~\ref{fig:ClosedUniverse2} with a narrow wormhole, some threads would still have to be sourced in the sphere's southern hemisphere. 

The constraint on $\nabla\cdot v$ in the definition of a strict flow is quite interesting. We will now take a slight detour to study this constraint by defining something called an \emph{entanglement distribution function}. The results we find will be useful when we pursue our study of strict quantum flows.

\subsection{Entanglement distribution functions}
\label{sec:subcontours}

The setting for this subsection is a quantum field theory in a given state on a spatial manifold $M$, and its entanglement entropies $S(A)$ as a function of the spatial region $A\subseteq M$. When we return to holographic systems, the constructs we study here will be applicable to both the bulk and the boundary. More generally, the setting could be any multipartite quantum system in a fixed state, in which case the integrals below should be replaced by sums. We study this kind of system extensively in section \ref{sec:entropohedron}. Here, we will use the continuum language as that corresponds to our intended applications. However, the reader should imagine that the manifold $M$ has been discretized in some way at an ultraviolet scale $\epsilon$, so that every region $A$ we consider has size at least $\epsilon$ in every dimension and therefore has a well-defined entropy. We denote the set of such regions $\A$. Since this effectively turns the continuous system into a discrete one, for the proofs of some of the statements below, we will refer to the proofs of the corresponding theorems in the discrete setting given in section \ref{sec:entropohedron}.

We define an \emph{entanglement distribution function (EDF)}\footnote{EDFs are closely related to the notions of \emph{entanglement contour} and \emph{partial entanglement entropy} \cite{Vidal:2014aal, Wen:2019iyq, Kudler-Flam:2019oru, Rolph:2021nan}. Given a region $A$, these are non-negative function $s_A$ on $A$ obeying $s_A\ge0$, $\int_As_A=S(A)$, and several other conditions. In our case, since we are imposing \eqref{subcontourdef} for \emph{all} regions, we must allow $f$ to be negative; otherwise, for example, if the full state is pure, the only solution would be $f=0$. However, as we will see, for any fixed region $A$, an EDF can be found that is non-negative on $A$ and integrates to $S(A)$. For a discussion of entanglement contours, see \cite{Rolph:2021nan}. Other related notions include various quantities that have been given the name ``entanglement density'' \cite{Nozaki:2013wia, Bhattacharya:2014vja, Gushterov:2017vnr, Jeong:2022zea}.} as a function $f$ on $M$ such that, for all $A\in\A$,
\be\label{subcontourdef}
\left|\int_Af\right|\le S(A)\,.
\ee
Knowing the set of all EDFs is equivalent to knowing $S(A)$ for all $A\in\A$. The set of EDFs is convex, non-empty (since it contains the zero function), and symmetric under $f\to-f$. The extremal points of this set --- EDFs that are not convex combinations of other EDFs --- in some sense carry maximal information about the entropies. These are the functions that saturate \eqref{subcontourdef} on a maximal set of regions. This raises the question of what sorts of regions, or sets of regions, can be saturated.

First, consider a single region $A$. If we divide $A$ into subregions $A_1$, $A_2$ (so $A=A_1A_2$) then we have
\be
\int_Af=\int_{A_1}f+\int_{A_2}f\le S(A_1)+S(A_2)\,.
\ee
If it happened that $S(A_1)+S(A_2)<S(A)$, then we would not be able to saturate on $A$. Thankfully, the subadditivity property tells us that this is impossible, so at least this particular obstruction does not occur. Indeed, we will show using both subadditivity and strong subadditivity (SSA) that it \emph{is} possible to saturate on any given region.

In the same example, it is clear that if $\int_{A_1}f=S(A_1)$ and $\int_{A_2}f=S(A_2)$ then necessarily $\int_{A}f=S(A)$ as well, and moreover this can only happen if $I(A_1:A_2)=0$. For a QFT in a finite-energy state on a connected manifold, we do not expect the mutual information between spatial regions ever to vanish strictly. Similar reasoning shows that we can saturate on partially overlapping regions $AB$ and $BC$ only if the conditional mutual information $I(A:C|B)$ vanishes, which again we do not expect to happen in a QFT. So we should not expect to be able to saturate arbitrary sets of regions.

It turns out that the key property is \emph{nesting}: we can saturate on any nested set of regions $A_1\subset A_2\subset\cdots$, $\int_{A_i}f=S(A_i)$. Actually, we can do better than that, and have a second set of regions $B_1\subset B_2\subset\cdots$, disjoint from the $A_i$s, that are saturated with the other sign, $\int_{B_j}f=-S(B_j)$. And on top of that, we can require $f\ge0$ on $A_1$ and $f\le0$ on $B_1$. For the proof of this statement, see theorem \ref{thm:saturatepos} in subsection \ref{sec:EDFproofs}. It's easy to see why we cannot require $f\ge0$ on any of the $A_i$s except $A_1$ or $f\le0$ on any of the $B_i$s except $B_1$. For example, if $S(A_2)<S(A_1)$ and we saturate on $A_1$, then the integral of $f$ on $A_2\setminus A_1$ must be negative. Note that, as expected from the arguments above, the proof invokes the (strong) subadditivity property of the entropy.

Another useful fact is that, if $M$ is embedded in a larger manifold $M'$, then any EDF $f$ on $M$ can be extended to an EDF $f'$ on $M'$ such that $f'=f$ on $M$. Again, this theorem relies on SSA. For the proof, see theorem \ref{thm:restriction} in subsection \ref{sec:EDFproofs}.

\subsubsection{Examples}
\label{examples}
To get a feeling for EDFs and extremal EDFs, let us start with a few discrete, finite-dimensional examples before turning to QFTs:
\begin{itemize}
\item For a Bell pair on two qubits $a,b$, an EDF must obey $|f(a)|\le\ln2$, $|f(b)|\le\ln2$, and $f(a)=-f(b)$. The extremal EDFs are $f(a)=-f(b)=\pm\ln2$.
\item A 4-party perfect tensor is a pure state where every individual party has entropy $s_0>0$ and every pair has entropy $2s_0$. Then the extremal EDFs are
\be
f(a)=f(b)=-f(c)=-f(d)=s_0
\ee
and permutations thereof. (Note that this state has vanishing mutual information between any pair of parties, so they can be simultaneously saturated even though they're not nested. Also note that when we look at pairs of parties, these EDFs do not saturate except for the pairs $ab$ and $cd$.)
\item An $n$-party GHZ state is a pure state such that any subset of between 1 and $n-1$ parties has entropy $\ln 2$. The extremal EDFs are
\be\label{GHZEDF}
f(a)=-f(b)=\ln2\,,\qquad f(c)=\cdots=0
\ee
and permutations thereof.
\end{itemize}

Among QFTs, a simple example is provided by any (1+1)-dimensional CFT in the vacuum on the line. Given the classic formula $S(A)=(c/3)\ln(L/\epsilon)$ for a single interval of length $L>\epsilon$, an example of an EDF is
\be
f(x)=\begin{cases}
0\,,&|x|<\epsilon \\
\frac c{3x}\,,&|x|>\epsilon 
\end{cases}\,.
\ee
This saturates on all intervals of the form $[-L,0]$ or $[0,L]$. It is easy to check that any single interval obeys \eqref{subcontourdef}. For a region consisting of $n>1$ intervals, the entropy has a divergent piece $n\ln(1/\epsilon)$ and is therefore automatically larger than the integral of $f$.

Finally, as we already noted in subsection \ref{sec:flowsdouble}, in classical holography, a boundary function $f$ can equal the boundary flux of a classical flow if and only if $f$ is an EDF. (See  theorem \ref{thm:classicalflowEDF}.)

\subsection{More on strict flows}
\label{sec:morestrict}

Armed with the intuition concerning EDFs gathered in the previous subsection, we return to strict quantum flows and note a few properties before turning to examples.

First, strict flows share with classical flows the fact that their boundary flux is a (boundary) EDF, and conversely, every boundary EDF is the boundary flux of a strict flow. The first statement 
follows directly from the definitions: Fix a boundary region $A$. For any region $r\in\RR_A$,
\be
\int_An\cdot v=\int_{\eth r}n\cdot v-\int_r\nabla\cdot v\le \frac{|\eth r|}{4\GN}+S_{\rm b}(r)=S_{\rm gen}(r)\,.
\ee
Hence
\be
\int_An\cdot v\le\min_{r\in\RR_A}S_{\rm gen}(r)=S(A)\,.
\ee
The proof that $-\int_An\cdot v\le S(A)$ is similar. The converse is harder to prove; see theorem \ref{thm:strictsubcontour}. 

Also, as in the classical case, strict flows obey the nesting property: 
Given regions $A_1\subset \cdots\subset A_m$, $B_1\subset\cdots \subset B_n$ in $\RR_{\partial\Sigma}$ such that $A_m\cap B_n=\emptyset$, there exists a strict quantum flow $v$ such that
\be
\forall\, i=1,\ldots,m\,,\quad\int_{A_i}n\cdot v=S(A_i)\,;\qquad\forall\, j=1,\ldots,n\,,\quad\int_{B_j}n\cdot v=-S(B_j)\,.
\ee
For the proof, see theorem \ref{thm:strictnesting}. 
(Nesting was proven for loose flows in \cite{Rolph:2021hgz}.)

Alternatively, nesting follows as a corollary of two theorems quoted above: an EDF exists that saturates on any nested set of regions (see subsection \ref{sec:subcontours}); and a strict flow exists whose boundary flux equals any given EDF (see just above). However, the latter route assumes that $S$ obeys SSA (used in proving the existence of a saturating EDF), whereas the direct proof of the theorem on nesting of strict flows does not. The distinction is noteworthy because nesting can itself be used to prove SSA, by the following argument \cite{Freedman:2016zud}. Let $A, B, C$ be disjoint regions, and let $v$ be a max flow for both $B$ and $ABC$. Then we have $S(AB)\ge\int_{AB}n\cdot v$, $S(BC)\ge\int_{BC}n\cdot v$, so
\be
S(AB)+S(BC)\ge \int_{AB}n\cdot v+\int_{BC}n\cdot v=\int_Bn\cdot v+\int_{ABC}n\cdot v=S(B)+S(ABC)\,.
\ee

Classical flows also obey a stronger property, the existence of a so-called \emph{max multiflow} \cite{Cui:2018dyq}, which implies that $S$ obeys the monogamy of mutual information (MMI) inequality \cite{Hayden:2011ag},
\be\label{MMI}
S(AB)+S(BC)+S(AC)\ge S(A)+S(B)+S(C)+S(ABC)\,.
\ee
Unlike SSA, MMI is \emph{not} generally valid for the QES formula (a counterexample will be provided in the next subsection), so it must \emph{not} be the case that quantum max multiflows always exist. However, there are some circumstances where we would expect them to exist, such as when $S_{\rm b}$ obeys MMI. We will define and discuss quantum multiflows in subsection \ref{sec:nomultiflows}.

\subsection{Point-particle examples}
\label{sec:pointparticle}

Let us look at a few simple examples corresponding to the boundary examples in \ref{examples} to see how the EDF condition naturally encodes the entanglement structure of the bulk. We will focus on the contribution of a set of entangled point particles to the bulk entropy.

First, suppose the bulk contains two particles at locations $x_1, x_2$ in a Bell state. For any region $r$ containing $x_1$ but not $x_2$, there is a contribution of $\ln 2$ to $S_{\rm b}(r)$. 
Therefore, a strict flow may include flow lines with flux $\le\ln2$, beginning or ending on $x_1$. Similarly, flow lines with flux $\le\ln2$ may begin or end on $x_2$. 
However, the entropy of any bulk point's neighbourhood, except for $x_1$ and $x_2$, has zero entropy, so the flow lines cannot begin or end at any bulk point besides $x_1$ and $x_2$. 
Furthermore, if $r$ includes both $x_1$ and $x_2$, then the Bell pair makes no contribution to $S_{\rm b}(r)$, and so, however much flux ends on $x_1$, the same amount must begin on $x_2$; we can imagine that the flux ``jumps'' from $x_1$ to $x_2$.  Note that the same result could have been obtained by a purely geometric application of the RT formula (without the bulk entropy term) if the manifold $\Sigma$ had an extra Planckian handle attached, of cross section $4\GN\ln2$, connecting $x_1$ to $x_2$.

Next, suppose there are four particles at $x_1,\ldots,x_4$ jointly in a perfect tensor state with individual entropies $S(x_i)=s_0$. We can refer to the analysis of extremal EDFs in subsection \ref{sec:subcontours} to analyze the general strict flow. Every such flow is a convex combination of one with sources of flux $s_0$ at $x_1,x_2$ and sinks of flux $s_0$ at $x_3,x_4$, and permutations thereof. The same result would have been obtained by a purely geometrical application of the RT formula if a handle were attached to $\Sigma$ consisting of a sphere connected to each $x_i$ by a tube of area $4\GN s_0$.

Finally, for $n$ particles jointly in a GHZ state, a general strict flow is a convex combination of ones with a source of flux $\ln2$ at any one particle, an equal sink at any other particle, and no flux at the remaining particles. Thus, putting maximal flux through any pair of particles blocks the other ones from having any flux. For $n=3$, these entropies can be realized geometrically by attaching a sphere with three tubes to $\Sigma$, similar to the perfect tensor state of the last paragraph. However, for $n>3$, the entropies \emph{cannot} be realized geometrically by adding to $\Sigma$ a handle of any form, as they violate the MMI inequality. In this sense, the GHZ state is ``truly quantum''. The boundary entropies will likewise violate MMI if the classical and area terms in $S_{\rm gen}$ saturate the inequality, for example, if the boundary regions are sufficiently far apart that every QES is the union of QESs of individual regions.

To put this discussion in the context of the conceptual interpretation of bit threads, we have shown that, for some bulk states, it is possible to connect entangled bulk degrees of freedom with Planckian wormholes in such a way that classical bit threads have the same maximal boundary flux as the quantum bit threads. For these states, it is equivalent to have the threads either ``jump" between or to create and travel through Planckian wormholes connecting entangled bulk degrees of freedom. This is an intriguing picture and reminiscent of the ER/EPR paradigm; however, as demonstrated by the $n>3$ GHZ state example in the previous paragraph, this picture is not possible for all bulk states.

\subsection{Changing the cutoff}
\label{sec:changingcutoff}

We will now discuss, at a qualitative level, the behavior of maximal (strict or loose) flows, and what happens to them as we change the UV cutoff $\epsilon$. Given a boundary region $A$, let $\gamma_A$ be the QES and $r_A$ its homology region (so $\eth r_A=\gamma_A$). A maximal flow must saturate both the norm bound on $\gamma_A$ and the divergence bound for $r_A$. The bulk entropy contains a divergent area-law term $|\gamma_A|/\epsilon^{D-2}$, reflecting entanglement of ultraviolet modes down to a wavelength $\epsilon$ across the QES.

In simple cases, the QES coincides, or nearly coincides, with the classical RT surface. This happens when $S_{\rm b}$ is dominated by the area term, which is, of course, minimized at the RT surface, so $S_{\rm b}$ does not have a large gradient there. Then the area term can be accounted for by giving $v$ a negative divergence on one side of the QES and a positive divergence on the other; for example, a shell of thickness of order $\epsilon$ with a divergence of order $\epsilon^{1-D}$ will work. The corresponding flow lines thus ``jump across'' the QES. If the bulk entropy has a large gradient, moving the QES significantly away from its classical position, then the divergence will have to be positive (or negative) on both sides of the QES.

Now let us consider what happens if we change the cutoff $\epsilon$. Recall that, as usual in physics, the cutoff is not a physical parameter but a calculational device we impose. Under this change, the area term $|\eth r|/\epsilon^{D-2}$ and therefore the bulk entropy change, while $\GN$ makes a compensating change so that the physical quantity $S_{\rm gen}(r)$ is invariant \cite{Susskind:1994sm,Cooperman:2013iqr}. (Recall that $\GN$ is the bare parameter appearing in the Lagrangian.) In particular, if we decrease $\epsilon$, then both $S_{\rm b}$ and $\GN$ increase, so the divergence constraint becomes looser while the norm bound becomes tighter. The flux of $v$ through $A$ does not change, but what happens to that flux in the bulk changes, as less of it can go through the QES and more of it must jump across.

In so-called ``induced gravity'' theories, the Einstein-Hilbert term in the action is generated entirely by quantum effects of the matter fields. Hence, in the limit that $\epsilon$ approaches the Planck length, $\GN$ goes to infinity. When $\GN$ is very large, the norm bound largely excludes the flow from the bulk; only near the boundary, where space is very large so the flow can be very dilute, can there be an appreciable flux (see Fig.\ \ref{fig:inducedgrav}). In this limit, the maximal flux is controlled entirely by the divergence bound, and is equal to $\min_{r\in\RR_A}S_{\rm b}(r)$. Given that, in this limit, $S_{\rm gen}=S_{\rm b}$, the result agrees with the QES formula.

\begin{figure}
    \centering
    \includegraphics[width=0.5\linewidth]{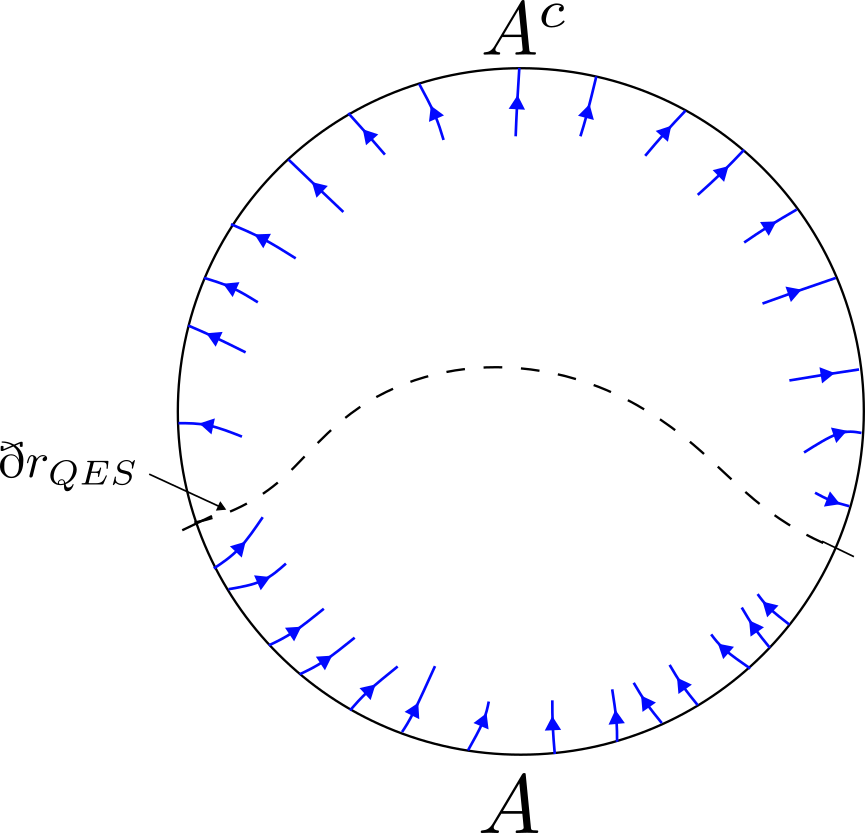}
    \caption{A maximal flow configuration in the $G_N \to \infty$ limit.}
    \label{fig:inducedgrav}
\end{figure}

\subsection{Theorems \& proofs}
\label{sec:proofs}

In this subsection, we state and prove several theorems that were used in the rest of the section. These proofs use the same tools of convex analysis, especially convex relaxation and dualization, as in the Riemannian max flow-min cut theorem and related theorems, which may be found in \cite{Headrick:2017ucz}. Therefore, we will be brief. We will also not make any attempt to be rigorous, particularly as regards the measure theory and functional analysis aspects of the proofs.

We start by defining the relevant terms and notation, before turning to the proofs.

\subsubsection{Definitions}

\begin{definition}Given a compact manifold-with-boundary $M$, a \emph{region} of $M$ is a compact codimension-0 submanifold-with-boundary.
\end{definition}

\begin{definition}
A \emph{classical holography setup} consists of a compact Riemannian manifold-with-boundary $\Sigma$ and positive constant $\GN$. $\RR$ is the set of regions of $\Sigma$ and $\mathcal{A}$ is the set of regions of $\partial\Sigma$. Given a region $r\in\RR$, $\eth r:=\partial r\setminus\partial\Sigma$. Given a region $A\in\mathcal{A}$,
\be\label{Radef}
\RR_A:=\{r\in\RR:r\cap\partial\Sigma=A\}\,.
\ee
$S$ is the function on $\A$ defined by \eqref{RT}. A \emph{classical flow} is a vector field $v$ on $\Sigma$ obeying
\be
|v|\le\frac1{4\GN}\,,\qquad \nabla\cdot v=0\,.
\ee
The inward-directed unit normal vector on $\partial\Sigma$ is denoted $n$.
\end{definition}

\begin{definition}
A \emph{quantum holography setup} consists of a compact Riemannian manifold-with-boundary $\Sigma$, a positive constant $\GN$, a subset\footnote{We allow for the possibility that $\Sb$ is defined only on a subset $\RR$ of the full set of bulk regions; for example, regions smaller than some cutoff size $\epsilon$ may not have a well-defined entropy.} $\RR$ of the set of regions of $\Sigma$, and a non-negative function $\Sb$ on $\RR$. $\RR$ must include the empty set and be closed under finite unions; 
and, for any region $A$ of $\partial\Sigma$, the set $\RR_A$, defined as in \eqref{Radef}, must not be empty. $\Sb$ must satisfy $\Sb(\emptyset)=0$ and, for any $r_1,r_2,r_3\in\RR$,
\begin{align}
\Sb(r_1r_2)+\Sb(r_2r_3)&\ge \Sb(r_2)+\Sb(r_1 r_2r_3)\label{SSA}\\
\Sb(r_1r_2)+\Sb(r_2r_3)&\ge \Sb(r_1)+\Sb(r_3)\,.\label{WM}
\end{align}
Given a region $r\in\RR$, $\eth r := r\setminus\partial\Sigma$. $\A$ is the set of regions of $\partial\Sigma$. $S$ is the function on $\A$ defined by \eqref{QES}, \eqref{Sgendef}. A \emph{strict quantum flow} is a vector field $v$ on $\Sigma$ obeying
\be
|v|\le\frac1{4\GN}\,,\qquad
\forall\,r\in\RR\,,\quad\left|\int_r\nabla\cdot v\right|\le\Sb(r)\,.
\ee
The inward-directed unit normal vector on $\partial\Sigma$ is denoted $n$.
\end{definition}

\begin{definition}
A \emph{double holography setup} consists of a quantum holography setup together with a compact Riemannian manifold-with-boundary $\tilde\Sigma$, such that $\Sigma$ is a region of $\partial\tilde\Sigma$, and a positive constant $\widetilde\GN$, with $\RR$ the set of all regions of $\Sigma$ and $\Sb$ defined by \eqref{RT} applied to $\tilde\Sigma$, $\widetilde\GN$. A \emph{double flow} is a classical flow $\tilde v$ on $\tilde\Sigma$ together with a vector field $v$ on $\Sigma$ obeying
\be
|v|\le\frac1{4\GN}\,,\qquad
\nabla\cdot v+\tilde n\cdot\tilde v=0\,,
\ee
where $\tilde n$ is the inward-directed unit normal to $\Sigma$ in $\tilde\Sigma$.
\end{definition}

\subsubsection{Theorems}
Lagrange duality will be used in many of our proofs to establish equivalence of optimisation problems. To aid the reader, let us give a one paragraph summary (see the textbook~\cite{boyd2004convex} for further details).
We start from a optimisation problem called the primal,
\bne \min_x f_0(x) \text{ subject to } f_i (x) \leq 0 \text{ for } (i = 1,\dots, m),\, h_j(x) = 0 \text{ for } (j=1,\dots,n), \ene
construct a Lagrangian with Lagrange multiplier terms to enforce the primal's constraints, then  optimise over the variables in the primal problem to arrive at the Lagrange dual problem. A sufficient condition for the primal and dual problems to be equivalent, i.e. to have to same optimal value, for strong duality to hold, is for the primal problem to be convex and for Slater's condition to be satisfied. This condition requires there to exist a strictly feasible point in  the domain of the primal problem, meaning a point that strictly satisfies the nonlinear constraints.

\begin{theorem}\label{thm:doublemaxflow}
In a double holographic setup, for any $A\in\A$,
\be
S(A) = \max\int_An\cdot v\quad\text{over double flows $(v,\tilde v)$}\,.
\ee
\end{theorem}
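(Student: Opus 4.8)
The plan is to prove Theorem~\ref{thm:doublemaxflow} by recognizing that the right-hand side is itself a classical max-flow problem on the larger manifold $\tilde\Sigma$, and then invoking the Riemannian max flow--min cut theorem there. Concretely, consider the pair $(v,\tilde v)$ as a single vector field $w$ on $\tilde\Sigma$ that equals $\tilde v$ on the interior of $\tilde\Sigma$ and has a tangential component $v$ supported on the codimension-1 locus $\Sigma\subset\partial\tilde\Sigma$. The condition $\nabla\cdot v+\tilde n\cdot\tilde v=0$ is precisely the statement that $w$ is divergence-free in the distributional sense across $\Sigma$ (the flux of $\tilde v$ into $\Sigma$ is absorbed and redistributed along $\Sigma$ by $v$), while the two norm bounds $|v|\le 1/4\GN$ and $|\tilde v|\le 1/4\widetilde\GN$ are the local density bounds for $w$ with a position-dependent coefficient. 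The boundary flux $\int_A n\cdot v$ is then the flux of $w$ through the corner region $A=\partial\Sigma\cap\tilde r$. So the whole problem is an instance of max flow on a manifold with a stratified density bound.

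The cleanest route, rather than developing max flow--min cut on a stratified space from scratch, is direct Lagrange dualization, in parallel with the MFMC proof in~\cite{Headrick:2017ucz}. First I would write the maximization $\max_{(v,\tilde v)}\int_A n\cdot v$ with the constraints imposed via Lagrange multipliers: a function $\psi$ on $\tilde\Sigma$ for $\nabla\cdot\tilde v=0$, a function $\phi$ on $\Sigma$ for $\nabla\cdot v+\tilde n\cdot\tilde v=0$, and multipliers for the two norm bounds. Integrating by parts moves derivatives off $v$ and $\tilde v$; the norm-bound terms, when the $v$ and $\tilde v$ fields are eliminated, produce the constraints $|\nabla\phi|\le 1/4\GN$ on $\Sigma$ and $|\nabla\psi|\le 1/4\widetilde\GN$ on $\tilde\Sigma$, together with a matching condition relating $\phi$ to the restriction of $\psi$ to $\Sigma$, and boundary conditions $\phi=1$ on $A$, $\phi=0$ on $A^c$ (after the usual rescaling of the level-set function). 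The dual objective then collapses, via the coarea formula applied on both $\Sigma$ and $\tilde\Sigma$, to a minimization over level sets: a level set $\{\psi>t\}$ in $\tilde\Sigma$ whose corner locus on $\partial\Sigma$ is $A$ contributes $|\eth(\{\psi>t\}\cap\Sigma)|/4\GN + |\eth\{\psi>t\}|/4\widetilde\GN$, which is exactly the functional minimized in~\eqref{RTdouble}. Thus the dual equals $\min_{\tilde r\in\tilde\RR_A}\bigl(|\eth(\tilde r\cap\Sigma)|/4\GN+|\eth\tilde r|/4\widetilde\GN\bigr) = S(A)$, and strong duality (which holds because the primal is a convex program with a strictly feasible point, e.g.\ $v=\tilde v=0$) gives the claimed equality, with the max attained.

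The main obstacle I anticipate is handling the stratified geometry carefully in the convex-analysis step: the field $w$ lives partly in the bulk of $\tilde\Sigma$ and partly on the codimension-1 wall $\Sigma$, so the integration by parts produces boundary terms on $\Sigma$ that must be matched across the two sectors, and the coarea/level-set argument has to be run simultaneously on $\Sigma$ and $\tilde\Sigma$ with consistent level values. One must check that the level sets $\{\psi>t\}$ restrict to well-defined regions of $\tilde\Sigma$ in $\tilde\RR_A$ and that no contribution is double-counted at the corner $\partial\Sigma$. Since the paper explicitly disclaims rigor in the measure-theoretic and functional-analytic details, I would treat these points at the same heuristic level as the classical MFMC argument: set up the Lagrangian, perform the formal manipulations, identify the dual program, and note that it reproduces~\eqref{RTdouble}. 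A useful sanity check along the way is the physical picture of Fig.~\ref{fig:doubleh2}: flow lines continuous in $\tilde\Sigma$ look like flow lines in $\Sigma$ that can jump, and the bottleneck is the RT surface $\eth\tilde r_{\rm QES}$ in the combined geometry.
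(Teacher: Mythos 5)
Your proof is correct in substance, but it runs the dualization in the opposite direction from the paper. The paper starts on the surface side: it writes $S(A)$ as \eqref{RTdouble}, convex-relaxes it with level-set functions $\phi$ on $\Sigma$ and $\tilde\phi$ on $\tilde\Sigma$ (matched on $\Sigma$, with $\phi|_A=1$, $\phi|_{A^c}=0$), applies the coarea formula \emph{before} dualizing to show the relaxed program already has value $S(A)$, and then introduces auxiliary fields $w=\nabla\phi$, $\tilde w=\nabla\tilde\phi$ as equality constraints; the Lagrange multipliers for those constraints are precisely $(v,\tilde v)$, and the dual is \eqref{doublemaxflow}, with Slater trivial because there are only equality constraints. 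You instead take the flow program as the primal, introduce scalar multipliers for the two divergence constraints, eliminate $(v,\tilde v)$, and apply coarea to the resulting dual to recover \eqref{RTdouble}, with Slater checked at $v=\tilde v=0$. This is the "flow-to-surface" direction, which the paper itself uses elsewhere (Proof 1 of theorem \ref{thm:S(A)strict}); it buys a more direct verification that a given flow proposal reproduces the known cut formula, while the paper's "surface-to-flow" direction is the one that \emph{derives} the double-flow program without guessing it, which is why the paper favors it here and for the covariant generalization. Two small imprecisions in your write-up, at the level of detail the paper also glosses: eliminating $v,\tilde v$ does not produce constraints $|\nabla\phi|\le\frac1{4\GN}$, $|\nabla\psi|\le\frac1{4\widetilde\GN}$ — those quantities appear in the dual \emph{objective}, $\frac1{4\GN}\int_\Sigma|\nabla\phi|+\frac1{4\widetilde\GN}\int_{\tilde\Sigma}|\nabla\psi|$, which is what the coarea step then converts into the cut functional; and the condition guaranteeing that the level sets of $\psi$ lie in $\tilde\RR_A$ (i.e.\ meet $\partial\tilde\Sigma$ only within $\Sigma$) comes from the boundary term on $\partial\tilde\Sigma\setminus\Sigma$ in the integration by parts, which forces $\psi$ to vanish there since the flux of $\tilde v$ through that portion of the boundary is unconstrained — you flag this check but should note that this is how it is discharged. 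Your opening observation, that the pair $(v,\tilde v)$ is a single flow on a stratified space with a wall-supported component on $\Sigma$, is a nice way to see why the statement is just classical max flow--min cut in disguise, and is consistent with how the paper treats delta-function divergences on end-of-the-world branes.
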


\begin{proof}
In this setup, the definition of $S(A)$ can be written as \eqref{RTdouble}. We convex-relax this minimization problem by 
introducing real functions $\phi$ on $\Sigma$ and $\tilde\phi$ on $\tilde\Sigma$, with boundary conditions
\be
\left.\tilde\phi\right|_\Sigma=\phi\,,\qquad
\left.\phi\right|_{A} = 1\,,\quad
\left.\phi\right|_{A^c} = 0\,.
\ee
The level sets for $\phi$ in the range $[0,1]$ are homologous to $A$, while those outside that range are null-homologous. The level sets for $\tilde\phi$ are homologous to the homology regions in $\Sigma$ for the level sets of $\phi$. The coarea formula says that $\int_\Sigma|\nabla\phi|$ equals the integrated area of the $\phi$ level sets, and $\int_{\tilde\Sigma}|\nabla\tilde\phi|$ equals the integrated area of the $\tilde\phi$ level sets. Therefore
\be
\min_{\phi,\tilde\phi}
\frac1{4\GN}\int_\Sigma|\nabla\phi|+\frac1{4\widetilde\GN}\int_{\tilde\Sigma}|\nabla\tilde\phi| = S(A)\,,
\ee
with the minimum achieved by setting $\phi$, $\tilde\phi$ equal to step functions supported on the minimal homology regions.

We now introduce vector field $w$, $\tilde w$ on $\Sigma$, $\tilde\Sigma$ respectively, and consider the convex program
\begin{multline}
\text{Minimize }
\frac1{4\GN}\int_\Sigma|w|+\frac1{4\widetilde\GN}\int_{\tilde\Sigma}|\tilde w|
\text{ over $w,\tilde w$, $\phi$, $\tilde\phi$ subject to:} \\
w=\nabla\phi\,,\quad
\tilde w=\nabla\tilde\phi\,,\quad
\left.\tilde\phi\right|_\Sigma=\phi\,,\quad
\left.\phi\right|_{A} = 1\,,\quad
\left.\phi\right|_{A^c} = 0\,.
\end{multline}
Introducing Lagrange multipliers $v$, $\tilde v$ (vector fields on $\Sigma$, $\tilde\Sigma$ respectively) for the first two constraints, and imposing the rest of them implicitly, we obtain \eqref{doublemaxflow} as the dual program. As there are no inequality constraints, Slater's condition is automatically satisfied.
\end{proof}

\begin{theorem}\label{thm:classicalflowEDF}
Given a classical holography setup, let $f$ be a function on $\partial\Sigma$. There exists a classical flow $v$ such that $n\cdot v=f$ 
if and only if, for all $A\in\A$,
\be\label{EDFdef1}
\left|\int_Af\right|\le S(A)\,.
\ee
\end{theorem}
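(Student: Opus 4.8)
The statement is an ``if and only if.'' The forward direction is immediate and I would dispatch it first: if $v$ is a classical flow with $n\cdot v=f$, then for any $A\in\A$ and any homology region $r\in\RR_A$, the divergence theorem (recalling that $n$ is outward on $\eth r$ and inward on $A$) gives $\int_A f=\int_{\eth r}n\cdot v-\int_r\nabla\cdot v=\int_{\eth r}n\cdot v\le|\eth r|/4\GN$; minimizing over $r\in\RR_A$ yields $\int_A f\le S(A)$, and applying the same to the flow $-v$ gives $-\int_A f\le S(A)$, hence \eqref{EDFdef1}. All the real work is in the converse.

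For the converse I would set things up as a convex feasibility problem and Lagrange-dualize, exactly in the style of Theorem \ref{thm:doublemaxflow} and the Riemannian max flow-min cut theorem. Existence of a classical flow with boundary flux $f$ is equivalent to $\mathcal P\ge1$, where
\be
\mathcal P:=\max\left\{\lambda\ :\ \exists\,v,\ \nabla\cdot v=0,\ n\cdot v=\lambda f\text{ on }\partial\Sigma,\ |v|\le\tfrac1{4\GN}\right\},
\ee
the maximum being attained by weak-$*$ compactness of the set of norm-bounded divergenceless vector fields (given $\mathcal P\ge1$ with maximizer $v^*$, the field $v^*/\lambda^*$ has the right flux and still obeys the norm bound, so it is the desired flow). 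Introducing a scalar multiplier $\phi$ on $\Sigma$ for $\nabla\cdot v=0$ and a scalar $\psi$ on $\partial\Sigma$ for $n\cdot v=\lambda f$, integrating by parts ($\int_\Sigma\phi\,\nabla\cdot v=-\int_{\partial\Sigma}\phi\,(n\cdot v)-\int_\Sigma\nabla\phi\cdot v$, since $n$ is inward-directed), maximizing freely over $\lambda$ (which forces the dual constraint $\int_{\partial\Sigma}\psi f=1$) and over $v$ subject to the norm bound, one arrives at the dual program
\be
\mathcal P=\inf_{\phi,\psi\,:\,\int_{\partial\Sigma}\psi f=1}\ \frac1{4\GN}\left(\int_\Sigma|\nabla\phi|+\int_{\partial\Sigma}|\psi-\phi|\right),
\ee
with $\phi$ ranging over (BV) functions on $\Sigma$ and $\psi$ over functions on $\partial\Sigma$; Slater's condition holds since $v=0$ is strictly feasible, so there is no duality gap.

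It then remains to show the dual value is at least $1$, i.e.\ that
\be\label{keyineq}
\int_{\partial\Sigma}\psi f\ \le\ \frac1{4\GN}\int_\Sigma|\nabla\phi|+\frac1{4\GN}\int_{\partial\Sigma}|\psi-\phi|
\ee
for all $\phi,\psi$. I would split $\int_{\partial\Sigma}\psi f=\int_{\partial\Sigma}(\psi-\phi)f+\int_{\partial\Sigma}\phi f$. For the first piece, note that \eqref{EDFdef1} applied to a shrinking family of small regions, together with the elementary bound $S(A)\le|A|/4\GN$ (use a thin collar of $A$ as homology region), forces $|f|\le1/4\GN$ pointwise, so $\int_{\partial\Sigma}(\psi-\phi)f\le\tfrac1{4\GN}\int_{\partial\Sigma}|\psi-\phi|$. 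For the second piece, apply the coarea formula: with $r_t:=\{\phi>t\}\subseteq\Sigma$ and $A_t:=r_t\cap\partial\Sigma$, we have $\int_\Sigma|\nabla\phi|=\int_{-\infty}^\infty|\eth r_t|\,dt$, and since $r_t\in\RR_{A_t}$, $|\eth r_t|\ge 4\GN\,S(A_t)$; then using $S(A^c)=S(A)$ (complementary homology regions share the same bulk boundary) and the layer-cake identity $\phi=\int_0^\infty(\mathbf 1_{\{\phi>t\}}-\mathbf 1_{\{\phi\le-t\}})\,dt$, one gets $\int_{\partial\Sigma}\phi f\le\int_{-\infty}^\infty S(A_t)\,dt\le\tfrac1{4\GN}\int_\Sigma|\nabla\phi|$, where the first inequality is \eqref{EDFdef1} applied layer by layer. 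Combining the two pieces gives \eqref{keyineq}, hence $\mathcal P\ge1$, hence the converse.

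The main obstacle is not conceptual but technical: as for the other theorems in this section, the derivation presumes strong duality and attainment of the primal maximum in an infinite-dimensional setting (weak-$*$ compactness of norm-bounded divergenceless fields, continuity of the normal-trace map on them, and validity of the coarea formula for BV $\phi$), all of which we would treat only heuristically, following the paper's stated level of rigor. A cleaner but less self-contained alternative would be a separating-hyperplane argument: $f$ fails to be a boundary flux only if some weight $g$ satisfies $\int_{\partial\Sigma}gf>\sup_v\int_{\partial\Sigma}g\,(n\cdot v)$; the right-hand side is a weighted max flow equal, by the layered max flow-min cut theorem, to $\int_0^\infty S(\{g>t\})\,dt+\int_{-\infty}^0 S(\{g<t\})\,dt$, and the same layer-cake-plus-\eqref{EDFdef1} estimate shows this can never be exceeded.
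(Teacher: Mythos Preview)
Your argument is correct, but it takes a somewhat different route from the paper's. The paper sets up the unconstrained scalar program
\be
\min_\phi\left(\frac1{4\GN}\int_\Sigma|\nabla\phi|-\int_{\partial\Sigma}\phi\,f\right),
\ee
shows directly via coarea that its value is $0$ if \eqref{EDFdef1} holds and $-\infty$ otherwise, and then dualizes (introducing $w=\nabla\phi$ with multiplier $v$) to obtain the pure feasibility program ``maximize $0$ over classical flows with $n\cdot v=f$''; strong duality then gives both directions at once. You instead prove the forward direction by hand, and for the converse introduce a scaling parameter $\lambda$ and dualize the flow program, picking up an extra boundary multiplier $\psi$ and hence the extra term $\int_{\partial\Sigma}|\psi-\phi|$ in the dual. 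Controlling that term forces you to invoke the auxiliary pointwise bound $|f|\le 1/(4\GN)$ via a thin-collar estimate, an ingredient the paper's argument does not need. Both routes rest on the same coarea/layer-cake estimate and on $S(A^c)=S(A)$; the paper's is shorter because the feasibility formulation lands on a one-variable dual and handles both implications simultaneously, while yours separates the two implications and carries slightly more machinery. Your closing separating-hyperplane sketch is in fact closer in spirit to the paper's approach than the $\lambda$-scaling argument you develop in detail.
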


\noindent Remarks:
\begin{itemize}
\item The graph version of this theorem can be found in theorem \ref{thm:graphEDF} in section \ref{sec:entropohedron}. The two theorems have similar proofs.
\item By theorem \ref{thm:restriction}, a function $f$ defined only within a region $M\subset\partial\Sigma$, and such that \eqref{EDFdef1} is obeyed on subregions $A\subseteq M$, may be extended to a function on all of $\partial\Sigma$ obeying \eqref{EDFdef1}. Therefore, theorem \ref{thm:classicalflowEDF} holds with such an $M$ and such an $f$.
\item Technically, this theorem is a corollary of theorem \ref{thm:strictsubcontour}, obtained by setting $\RR$ to be the set of all regions of $\Sigma$ and $S_{\rm b}=0$. However, we believe the proof will be easier to understand if we do the classical case first.
\end{itemize}

\begin{proof}
We consider the following convex program:
\be\label{classicalflowconvex}
\text{Minimize }\frac1{4\GN}\int_\Sigma|\nabla\phi|-\int_{\partial\Sigma}\phi\,f\text{ over $\phi$}\,.
\ee
We will first solve this program; we will then dualize it and relate its solution to the existence of a classical flow $v$ such that $n\cdot v=f$.

Given a function $\phi$, we define, for  $\hat\phi\in\R$, the following bulk and boundary regions
\be
r(\hat\phi):=\left\{x\in \Sigma:0\le \phi(x)\le \hat\phi\text{ or }\hat\phi\le\phi(x)\le0\right\},\qquad
A(\hat\phi):=r(\hat\phi)\cap\partial\Sigma\,,
\ee
and the objective of \eqref{classicalflowconvex} can be written as
\be
\int d\hat\phi\left[\frac{|\eth r(\hat\phi)|}{4\GN}-
\sgn(\hat\phi)\int_{A(\hat\phi)}f\right],
\ee
where the first term in the integrand was obtained using the coarea formula. That term is bounded below by $S(A(\hat\phi))$ by the definition of $S$. If \eqref{EDFdef1} holds, then the second term is bounded below by $-S(A(\hat\phi))$. Hence, the objective is non-negative, and its minimum value is 0, achieved by $\phi=0$. On the other hand, if \eqref{EDFdef1} is violated for some $A$, then by choosing $\phi$ to be constant on the minimal homology region for $A$ and 0 elsewhere, the dual objective can be made negative and arbitrarily large, so the minimum is $-\infty$.

Before dualizing \eqref{classicalflowconvex}, we introduce a vector field $w$ which we constrain to equal $\nabla\phi$:
\be\label{classicalflowconvex2}
\text{Minimize }\frac1{4\GN}\int_\Sigma|w|-\int_{\partial\Sigma}\phi\,f\text{ over $\phi,w$ subject to }w=\nabla\phi\,.
\ee
Note that this program has no inequality constraints, so Slater's condition is automatically obeyed and therefore strong duality holds. The Lagrangian is
\begin{align}
L[\phi,w,v]&=\int_\Sigma\left[\frac{|w|}{4\GN}+v\cdot(w-\nabla\phi)\right]-\int_{\partial\Sigma}\phi\,f \\
&= \int_\Sigma\left[\frac{|w|}{4\GN}+v\cdot w+\phi\nabla\cdot v\right]+\int_{\partial\Sigma}\phi(n\cdot v-f)\,.
\end{align}
Minimizing over $w$ and $\phi$ gives the following dual program:
\be\label{EDFprogram}
\text{Maximize 0 over $v$ subject to:} \qquad|v|\le\frac1{4\GN}\,,
\qquad\nabla\cdot v=0\,,
\qquad\left.n\cdot v\right|_{\partial\Sigma}=f\,.
\ee
This is a feasibility-type program: the maximum is 0 if the feasible set is non-empty and $-\infty$ if it is empty. This proves the theorem.
\end{proof}

For the next few theorems, we will need the following ``disentangling'' lemma, which can be obtained as the continuum limit of corollary \ref{thm:disentangling}, which is the analogous statement in the discrete setting. Corollary \ref{thm:disentangling} rests upon two lemmas: Lemma \ref{thm:disentangling1} states that, by applying the weak monotonicity property \eqref{WM}, one can remove overlaps between the sets of regions $\RR_+$, $\RR_-$, without changing $\phi$ and without increasing the integral of $S_{\rm b}$; and lemma \ref{thm:disentangling2} states that, by applying the strong subadditivity property \eqref{SSA}, one can obtain nested sets of regions, again without changing $\phi$ and without increasing the integral of $S_{\rm b}$. To prove these lemmas rigorously in the continuum would involve analysis and measure theory that are beyond the scope of this paper.

\begin{lemma}\label{thm:disentanglingcont}
Given a quantum holography setup, let $\mu_\pm$ be measures on $\RR$, and define
\be\label{phidef}
\phi:=\int_{\RR}(d\mu_--d\mu_+)\chi_r\,,
\ee
with $\chi_r$ the characteristic function, also known as the indicator function, of $r$. There exist nested sets of regions $\RR_+,\RR_-\subset\RR$ such that, for all $r_+\in\RR_+$ and $r_-\in\RR_-$, $r_+\cap r_-=\emptyset$ and measures $\mu_\pm'$ such that
\be
\mu_+(\RR\setminus\RR_+)=0\,,\qquad\mu_-(\RR\setminus\RR_-)=0
\ee
\be
\phi'=\phi
\ee
\be
\int_{\RR}(d\mu'_++d\mu'_-)S_{\rm b}(r)\le
\int_{\RR}(d\mu_++d\mu_-)S_{\rm b}(r)\,.
\ee
\end{lemma}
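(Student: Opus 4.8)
The plan is to prove this ``disentangling'' lemma by reducing it to the discrete analogue (corollary \ref{thm:disentangling}) via a discretization of $\Sigma$ at the cutoff scale $\epsilon$, and then to run the discrete argument, which itself factors through the two discrete lemmas \ref{thm:disentangling1} and \ref{thm:disentangling2}. First I would replace the measures $\mu_\pm$ by finitely (or countably) supported approximations on the finite set $\RR$ of admissible bulk regions, so that $\phi = \sum_r (\mu_- - \mu_+)(r)\,\chi_r$ becomes a finite nonnegative combination; since $\RR$ is itself effectively finite once $\Sigma$ is discretized, this step is essentially free, modulo the measure-theoretic caveats the authors have already disclaimed. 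The target function $\phi$ is fixed throughout, so it suffices to manipulate the two collections of regions carrying the ``$+$'' and ``$-$'' weights while preserving the alternating-sum identity \eqref{phidef} and not increasing $\int (d\mu_+ + d\mu_-) S_{\rm b}$.

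The two-step core is the following. Step one (removing overlaps between $\RR_+$ and $\RR_-$): whenever there is a region $r_+$ in the support of $\mu_+$ and a region $r_-$ in the support of $\mu_-$ with $r_+ \cap r_- \ne \emptyset$, peel off a common amount of weight $\delta$ and replace the pair $(r_+, r_-)$ by the pair $(r_+ \cup r_-,\ r_+ \cap r_-)$ with the same weight $\delta$ on each. One checks that $\chi_{r_+} + \chi_{r_-} = \chi_{r_+ \cup r_-} + \chi_{r_+ \cap r_-}$, so $\phi$ is unchanged; and weak monotonicity \eqref{WM}, $\Sb(r_+ r_-) + \Sb(r_+ \cap r_-) \ge \dots$ — wait, more precisely one uses \eqref{WM} in the form $\Sb(r_+) + \Sb(r_-) \ge$ nothing directly, so instead one invokes \emph{subadditivity-type} control: actually the relevant inequality is $\Sb(r_+ \cup r_-) + \Sb(r_+ \cap r_-) \le \Sb(r_+) + \Sb(r_-)$, which is exactly submodularity, i.e. \eqref{SSA} with $r_1 = r_+\setminus r_-$, $r_2 = r_+\cap r_-$, $r_3 = r_-\setminus r_+$. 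This keeps the weights on the ``$+$'' side and ``$-$'' side separate (we moved $r_+\cup r_-$ to $\RR_+$ and $r_+\cap r_-$ to $\RR_-$, or vice versa — one has to track which side each new region joins so that the signed sum still reproduces $\phi$), and does not increase $\int S_{\rm b}$. Iterating drives the total overlap measure to zero. Step two (nesting within each of $\RR_+$ and $\RR_-$): within $\RR_+$, whenever two regions $r, r'$ in the support are incomparable, replace $(r, r')$ by $(r \cup r',\ r \cap r')$ with equal transferred weight; again $\chi_r + \chi_{r'} = \chi_{r\cup r'} + \chi_{r\cap r'}$ preserves $\phi$, and submodularity \eqref{SSA} gives $\Sb(r\cup r') + \Sb(r\cap r') \le \Sb(r) + \Sb(r')$. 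A standard ``rearrangement'' / monotone-reordering argument (as in the proof of nesting for classical flows) shows this process terminates with a chain, i.e.\ a nested family; the same is done independently inside $\RR_-$. Crucially, step two must be shown not to reintroduce overlaps between $\RR_+$ and $\RR_-$: since unions and intersections of sets disjoint from a fixed $r_-$ are again disjoint from $r_-$, the disjointness established in step one is preserved.

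The main obstacle I expect is \textbf{termination and convergence of the iterative uncrossing}, especially in the continuum where the supports of $\mu_\pm$ need not be finite. In the genuinely discrete setting of section \ref{sec:entropohedron} one can use a potential function — e.g.\ $\sum_r \mu(r)\,|r|$ or a lexicographic count of incomparable pairs — that strictly decreases (or at least does not increase while a secondary potential strictly decreases) under each uncrossing move, forcing termination in finitely many steps; this is the content of lemmas \ref{thm:disentangling1} and \ref{thm:disentangling2}. In the continuum one would want a transfinite or measure-valued version: order the moves, show the ``crossing mass'' is monotonically nonincreasing and tends to zero, and appeal to weak-$*$ compactness of the space of measures on the compact set $\RR$ to extract a limit $(\mu_+', \mu_-')$; continuity of $\phi \mapsto \phi$ and lower semicontinuity of $\int S_{\rm b}$ then close the argument. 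As the authors explicitly note, making this rigorous ``would involve analysis and measure theory beyond the scope of this paper,'' so the honest proof strategy is: prove it cleanly in the discrete case via the two potential-function lemmas, and invoke the continuum statement as the $\epsilon \to 0$ (or discretization-refinement) limit, which is exactly what the excerpt announces.
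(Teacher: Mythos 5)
Your overall strategy is the same as the paper's: reduce the continuum statement to the discrete corollary \ref{thm:disentangling}, prove that via two uncrossing-type lemmas with potential functions controlling convergence, and accept (as the authors do) that the continuum limit is not made rigorous. Your step two is also essentially the paper's lemma \ref{thm:disentangling2}: same-sign pairs $(r,r')$ are replaced by $(r\cup r',r\cap r')$, which preserves $\phi$ because $\chi_r+\chi_{r'}=\chi_{r\cup r'}+\chi_{r\cap r'}$ and does not increase $\int \Sb$ by SSA, and you correctly observe that this step cannot reintroduce overlaps with the other side.

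However, your step one contains a genuine error. For a region $r_+$ weighted by $\mu_+$ and a region $r_-$ weighted by $\mu_-$, the quantity that must be preserved is the \emph{signed} combination $\chi_{r_-}-\chi_{r_+}$, not the sum $\chi_{r_+}+\chi_{r_-}$. No assignment of the pair $(r_+\cup r_-,\,r_+\cap r_-)$ to the two sides reproduces $\chi_{r_-}-\chi_{r_+}$: e.g.\ putting the union on the $-$ side and the intersection on the $+$ side gives $\chi_{r_+\cup r_-}-\chi_{r_+\cap r_-}$, which disagrees with $\chi_{r_-}-\chi_{r_+}$ on $r_+\setminus r_-$ (and the opposite assignment fails on $r_-\setminus r_+$). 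Moreover, since $r_+\cap r_-\subset r_+\cup r_-$, the move does not even remove the overlap between the two supports. The correct move for opposite-sign overlaps — the one used in lemma \ref{thm:disentangling1} — transfers weight $\alpha=\min(\mu_+(r_+),\mu_-(r_-))$ from $(r_+,r_-)$ to $(r_+\setminus r_-,\,r_-\setminus r_+)$ on the respective sides; this preserves $\chi_{r_+}-\chi_{r_-}$ pointwise, and the entropy cost is controlled by weak monotonicity \eqref{WM}, $\Sb(r_+)+\Sb(r_-)\ge \Sb(r_+\setminus r_-)+\Sb(r_-\setminus r_+)$, not by SSA as you invoke. (Your hesitation in the middle of step one is pointing at exactly this; the fix is the set-difference move, not submodularity.) A secondary, smaller point: the paper does not prove finite termination of either iteration; it shows convergence using explicit potentials, namely $O=\sum_x\phi_\mu(x)\phi_\nu(x)$ for overlap removal and $V=\sum_{(A,B)}\mu(A)\mu(B)\,|A\setminus B|\,|B\setminus A|$ for nesting, each shrinking by a factor bounded away from $1$ at every step — closer to the ``measure-valued'' fallback you sketch than to a finite rearrangement argument.
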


\begin{theorem}\label{thm:S(A)strict}
Given a quantum holography setup,
\be\label{S(A)strict2}
S(A) = \max\int_A n\cdot v\text{ over strict quantum flows $v$}\,.
\ee
\end{theorem}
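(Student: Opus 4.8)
The plan is to prove \eqref{S(A)strict2} by Lagrange duality, in the spirit of theorems \ref{thm:doublemaxflow} and \ref{thm:classicalflowEDF}. One inequality is immediate and was already recorded in subsection \ref{sec:morestrict}: for any strict quantum flow $v$ and any $r\in\RR_A$, the divergence theorem gives $\int_A n\cdot v=\int_{\eth r}n\cdot v-\int_r\nabla\cdot v\le|\eth r|/4\GN+S_{\rm b}(r)=S_{\rm gen}(r)$, so $\int_A n\cdot v\le\min_{r\in\RR_A}S_{\rm gen}(r)=S(A)$. All the content is in the reverse inequality, for which I would relax and dualize.

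For the dualization, keep the norm bound $|v|\le1/4\GN$ implicit and introduce non-negative measures $\mu_+,\mu_-$ on $\RR$ as multipliers for the two halves $\pm\int_r\nabla\cdot v\le S_{\rm b}(r)$ of the strict divergence constraint. With $\phi:=\int_\RR(d\mu_--d\mu_+)\chi_r$ as in \eqref{phidef}, the multiplier terms assemble into $\int_\Sigma\phi\,\nabla\cdot v$; integrating by parts and combining with the objective $\int_A n\cdot v$, the $v$-dependent part of the Lagrangian becomes $\int_{\partial\Sigma}(\chi_A-\phi)\,n\cdot v-\int_\Sigma v\cdot\nabla\phi$. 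Restricting the multipliers so that $\phi|_{\partial\Sigma}=\chi_A$ (consistent: take $\mu_-$ a unit mass on any region of $\RR_A$ and $\mu_+=0$) kills the boundary term, and maximizing over $|v|\le1/4\GN$ leaves $\frac1{4\GN}\int_\Sigma|\nabla\phi|$. So the dual program is
\be
D:=\min\left(\frac1{4\GN}\int_\Sigma|\nabla\phi|+\int_\RR(d\mu_++d\mu_-)\,S_{\rm b}(r)\right)
\ee
over $\mu_\pm\ge0$ with $\phi$ as above and $\phi|_{\partial\Sigma}=\chi_A$. Once the norm bound is eliminated there are no active inequality constraints, so Slater's condition holds, $D$ equals the maximal flux, and it remains to show $D=S(A)$. (As elsewhere in the paper, I would not belabour the measure-theoretic and functional-analytic aspects of this strong-duality step.) The bound $D\le S(A)$ is the feasible point above with $\mu_-=\delta_{r_A}$ for $r_A$ the QES homology region: there $\phi=\chi_{r_A}$ and the objective equals $|\eth r_A|/4\GN+S_{\rm b}(r_A)=S_{\rm gen}(r_A)=S(A)$.

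The crux is $D\ge S(A)$, and here I expect the real difficulty. Given feasible $\mu_\pm$, apply the disentangling lemma \ref{thm:disentanglingcont} to replace them by $\mu'_\pm$ supported on nested families $\RR_\pm$ with every region of $\RR_+$ disjoint from every region of $\RR_-$, keeping $\phi$ fixed and not increasing $\int_\RR(d\mu_++d\mu_-)S_{\rm b}$. Disjointness forces $\phi_\pm:=\int_\RR d\mu'_\pm\,\chi_r$ to have disjoint supports, so $\int_\Sigma|\nabla\phi|=\int_\Sigma|\nabla\phi_-|+\int_\Sigma|\nabla\phi_+|$ and $\phi_+|_{\partial\Sigma}=0$, $\phi_-|_{\partial\Sigma}=\chi_A$; dropping the non-negative $\phi_+$ piece, it suffices to bound $\frac1{4\GN}\int_\Sigma|\nabla\phi_-|+\int_\RR d\mu'_-\,S_{\rm b}$ below. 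Since $\RR_-$ is nested, each superlevel set $\{\phi_-\ge s\}$ coincides with one of its regions, call it $r_s$, and as $s$ runs from the total mass $M$ down to $0$ the $r_s$ sweep out $\RR_-$; the coarea formula gives $\int_\Sigma|\nabla\phi_-|=\int_0^M|\eth r_s|\,ds$ and a change of variable gives $\int_\RR d\mu'_-\,S_{\rm b}=\int_0^M S_{\rm b}(r_s)\,ds$. Feasibility forces $M\ge1$ and, for $s\in(0,1]$, $r_s\cap\partial\Sigma=A$, i.e.\ $r_s\in\RR_A$; for $s>1$, $r_s$ has empty boundary intersection, so $S_{\rm gen}(r_s)\ge0$. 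Hence $\frac1{4\GN}\int_\Sigma|\nabla\phi_-|+\int_\RR d\mu'_-\,S_{\rm b}=\int_0^M S_{\rm gen}(r_s)\,ds\ge\int_0^1 S(A)\,ds=S(A)$, which closes the loop and, with strong duality, proves the theorem.

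The second proof promised in the text runs the dualization the other way — relax the QES minimization \eqref{QES} over regions to a minimization over $(\phi,\mu_\pm)$ built from measures, dualize to recover the strict-flow program, and show the relaxation is tight using the same disentangling lemma — and I would present it too, since that direction is the template for the covariant case \cite{CQBT}. In both proofs the genuinely nontrivial input is the disentangling lemma \ref{thm:disentanglingcont}, whose proof (via its discrete avatar, corollary \ref{thm:disentangling}) is where strong subadditivity \eqref{SSA} and weak monotonicity \eqref{WM} of $S_{\rm b}$ enter: establishing that lemma carefully is the main obstacle, while the coarea bookkeeping above and the strong-duality manipulation are routine modulo the analytic rigor the paper disclaims.
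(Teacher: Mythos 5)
Your overall strategy is the paper's Proof 1 (flow-to-surface dualization, the disentangling lemma, then coarea/level-set bookkeeping to recognize the QES minimum), and the easy direction and the final lower bound $\int_0^M S_{\rm gen}(r_s)\,ds\ge S(A)$ are fine and essentially identical to the paper's argument. However, there is a genuine gap in your dualization step. Because you keep the norm bound $|v|\le 1/4\GN$ \emph{implicit}, the supremum of the $v$-dependent part of the Lagrangian, $\int_{\partial\Sigma}(\chi_A-\phi)\,n\cdot v-\int_\Sigma v\cdot\nabla\phi$, over the ball $|v|\le 1/4\GN$ is finite for \emph{every} choice of multipliers: it equals $\frac1{4\GN}\bigl(\int_\Sigma|\nabla\phi|+\int_{\partial\Sigma}|\chi_A-\phi|\bigr)$, i.e.\ the true dual objective carries a boundary \emph{penalty} term rather than the hard constraint $\phi|_{\partial\Sigma}=\chi_A$. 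Imposing $\phi|_{\partial\Sigma}=\chi_A$ by hand, as you do, is a restriction of the dual feasible set; since the dual is a minimization, your restricted optimum $D_{\rm res}$ only satisfies $D_{\rm res}\ge D_{\rm true}=P$ (with $P$ the maximal flux). Showing $D_{\rm res}=S(A)$ therefore yields only $P\le S(A)$, which you already had from the divergence theorem; it does \emph{not} give the needed $P\ge S(A)$. To get that, you would have to show that the \emph{unrestricted} dual objective, including the $\frac1{4\GN}\int_{\partial\Sigma}|\chi_A-\phi|$ term, is bounded below by $S(A)$ (equivalently, that the restriction is without loss of generality at the optimum) --- and that is not routine: enforcing the boundary condition by adding thin sliver regions to $\mu_\pm$ costs their bulk entropy, which is not controlled by the boundary-area penalty at the level of generality of a ``quantum holography setup''. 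Your sentence ``Once the norm bound is eliminated there are no active inequality constraints, so Slater's condition holds, $D$ equals the maximal flux'' conflates the restricted program with the Lagrange dual; strong duality identifies $P$ with $D_{\rm true}$, not with your $D$.

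The fix is exactly what the paper does: introduce a scalar multiplier $\psi\ge0$ for the norm bound as well, so that $v$ is unconstrained in the Lagrangian. Then maximizing over the (now free) boundary values of $n\cdot v$ forces $\phi|_{\partial\Sigma}=\chi_A$ as a genuine dual constraint, the dual objective becomes $\frac1{4\GN}\int_\Sigma\psi+\int_\RR(d\mu_++d\mu_-)S_{\rm b}(r)$ with $\psi\ge|\nabla\phi|$, and after setting $\psi=|\nabla\phi|$ your disentangling-plus-coarea argument goes through verbatim and closes the proof. With that correction your argument coincides with the paper's Proof 1; your closing sketch of the reverse (surface-to-flow) dualization matches the paper's Proof 2 in spirit but is only a sketch.
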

As discussed in subsection \ref{sec:strictflowsdef}, where this theorem is used, we provide two proofs, both involving Lagrangian dualization but in opposite directions.
\begin{proof}[Proof 1: Flow to surface dualization]
The proof is similar to that in \cite{Rolph:2021hgz} for the corresponding theorem for loose flows. We write the RHS of \eqref{S(A)strict2} as a concave program:
\be\label{strictconcave}
\text{Maximize }\int_An\cdot v\text{ over $v$ subject to: }
|v|\le\frac1{4\GN}\,,\qquad
\forall\,r\in\RR,\quad\left|\int_r\nabla\cdot v\right|\le S_{\rm b}(r)\,.
\ee
The feasible configuration $v=0$ shows that Slater's condition is obeyed, so strong duality holds. We dualize \eqref{strictconcave}, introducing as Lagrange multipliers a scalar field $\psi\ge0$ for the first constraint, and measures $\mu_\pm$ on $\RR$ for $\pm\int_r\nabla\cdot v\le S_{\rm b}(r)$ respectively. The Lagrangian is
\begin{align}
&L[v,\psi,\mu_+,\mu_-] \nonumber\\
&=\int_An\cdot v
+\int_\Sigma\psi\left(\frac1{4\GN}-|v|\right)
+\int_\RR d\mu_+\left(S_{\rm b}(r)-\int_r\nabla\cdot v\right)
+\int_\RR d\mu_-\left(S_{\rm b}(r)+\int_r\nabla\cdot v\right) \nonumber\\
&=\int_{\partial\Sigma}n\cdot v\,(\chi_A-\phi)
+\int_\Sigma\left(\frac\psi{4\GN}-\psi|v|-v\cdot\nabla\phi\right)
+\int_\RR(d\mu_++d\mu_-)S_{\rm b}(r)\,,
\end{align}
where $\phi$ is defined in \eqref{phidef}. Maximizing the Lagrangian with respect to $v$, we find the following dual program:
\be
\text{Minimize }
\frac1{4\GN}\int_\Sigma\psi+\int_\RR(d\mu_++d\mu_-)S_{\rm b}(r)
\text{ over $\psi,\mu_\pm$ subject to: }\psi\ge|\nabla\phi|\,,\quad \left.\phi\right|_{\partial\Sigma}=\chi_A\,.
\ee
Minimizing over $\psi$ by setting it equal to $|\nabla\phi|$ gets us:
\be\label{finalconvex}
\text{Minimize }
\frac1{4\GN}\int_\Sigma|\nabla\phi|+\int_\RR(d\mu_++d\mu_-)S_{\rm b}(r)
\text{ over $\mu_\pm$ subject to: }
\left.\phi\right|_{\partial\Sigma}=\chi_A\,.
\ee
Thanks to lemma \ref{thm:disentanglingcont}, we can assume that the supports of $\mu_\pm$ are nested and mutually disjoint. Since $\phi$ is non-negative on $\partial\Sigma$, this implies that the support of $\mu_+$ includes only regions that do not intersect $\partial\Sigma$. To minimize the objective, we should therefore simply set $\mu_+=0$. The support of $\mu_-$, meanwhile, includes regions in $\RR_A$ with total weight 1, along with regions that do not intersect $\partial\Sigma$. To minimize the objective, we should zero out $\mu_-$ on the latter set. This leaves only regions in $\RR_A$. \eqref{finalconvex} is therefore equivalent to minimizing
\be
\int_{\RR_A}d\mu_-\left(\frac{|\eth r|}{4\GN}+S_{\rm b}(r)\right)
\ee
over measures $\mu_-$ on $\RR_A$ with nested support and total measure 1 (where we used the coarea formula to replace the integral of $|\nabla\phi|$ with the integrated area of its level sets). This is equivalent to the QES formula.
\end{proof}

\begin{proof}[Proof 2: Surface-to-flow dualization]
We start from the QES prescription written in the form~\eqref{finalconvex}. This is a convex problem, and Slater's condition is satisfied. The corresponding Lagrangian, leaving the definition of $\phi$ as an implicit constraint, is
\bne L[w,\mu_\pm,\psi,v]  = \int_\Sigma \frac{|w|}{4G_N} + \int_\cR (d\mu_+ + d\mu_-)S_b(r) + \int_\Sigma v\cdot  (\nabla \phi - w) + \int_{\del \Sigma} \psi (\chi_A - \phi) .\ene
The resulting dual problem is
\bne \sup_{v,\psi} \left(\int_A \psi \right), \qquad \text{ subject to } |v| \leq \frac{1}{4G_N} , \quad \forall r \in \cR:  \left|\int_{\eth r} v - \int_{\del \Sigma \cap r} \psi \right| \leq S_b (r). \ene
$\psi$ is a scalar function on $\del \Sigma$, and we can eliminate it by evaluating the second inequality on infinitesimal neighbourhoods of points on the boundary of $\Sigma$, which forces $\psi$ to equal to boundary flux density of $v$. Then, the dual problem simplifies to the strict quantum bit thread prescription
\bne \sup_v \int_A v \qquad \text{ subject to } |v| \leq \frac{1}{4G_N} , \quad \forall r \in \cR:  \left|\int_r \nabla \cdot v \right| \leq S_b (r). 
\ene
\end{proof}

\begin{theorem}\label{thm:looseS}
Given a quantum holography setup,
\be\label{eq:looseS}
S(A) = \max\int_A n\cdot v\text{ over loose quantum flows $v$}\,.
\ee
\end{theorem}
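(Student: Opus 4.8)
The plan is to give a \emph{surface-to-flow} dualization, running parallel to Proof~2 of Theorem~\ref{thm:S(A)strict}, but smearing the bulk regions with the level sets of a single scalar potential rather than with a pair of measures. First I would rewrite the QES formula \eqref{QES} as a convex program over a scalar field $\phi$ on $\Sigma$ subject to $\phi|_A=1$, $\phi|_{A^c}=0$, $0\le\phi\le1$:
\be
S(A)=\min_\phi\left[\frac1{4\GN}\int_\Sigma|\nabla\phi|+\int_0^1 dt\,\Sb\big(\{\phi\ge t\}\big)\right].
\ee
For $t\in(0,1)$ the superlevel set $r(t):=\{\phi\ge t\}$ contains $A$ and is disjoint from $A^c$, hence $r(t)\in\RR_A$; by the coarea formula the objective equals $\int_0^1 dt\,\Sgen(r(t))\ge S(A)$, with equality for the step function $\phi=\chi_{r_A}$ on the QES homology region $r_A$, so the program does compute $S(A)$. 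At this stage I would also note that the bulk-entropy functional $\phi\mapsto\int_0^1 dt\,\Sb(\{\phi\ge t\})$ is convex on the box: for a convex combination the superlevel sets are assembled from unions and intersections, so convexity reduces to submodularity of $\Sb$, i.e.\ to \eqref{SSA}. This is where strong subadditivity enters the proof.

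The second step is the dualization. Introduce $w$ constrained to equal $\nabla\phi$, with a vector-field Lagrange multiplier $v$, and a scalar multiplier $\psi$ on $\partial\Sigma$ for the boundary condition $\phi|_{\partial\Sigma}=\chi_A$. Integrating by parts sends $-\int_\Sigma v\cdot\nabla\phi$ to $\int_\Sigma\phi\,\nabla\cdot v+\int_{\partial\Sigma}\phi\,n\cdot v$. Minimizing over $w$ produces the density bound $|v|\le1/4\GN$; minimizing over the interior values of $\phi$ --- equivalently over the nested family $\{r(t)\}_{t\in(0,1)}\subset\RR_A$ of its superlevel sets, which one may take nested by a continuum ``disentangling'' step using SSA (the analogue of Lemma~\ref{thm:disentanglingcont}) --- forces the loose divergence constraint $-\int_r\nabla\cdot v\le\Sb(r)$ for all $r\in\RR_A$. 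As in Proof~2 of Theorem~\ref{thm:S(A)strict}, one then eliminates $\psi$ by testing the remaining inequality on infinitesimal boundary neighbourhoods, which identifies $\psi$ with the boundary flux density $n\cdot v$, and the objective $\int_A\psi$ becomes $\int_A n\cdot v$. The dual program is thus to maximize $\int_A n\cdot v$ over loose quantum flows, and since $\phi$ with $0<\phi<1$ in the interior is strictly feasible, strong duality gives \eqref{eq:looseS}.

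I would also record the direct ``$\le$'' half, which needs no duality: for any loose flow $v$ and the QES region $r_A\in\RR_A$, the divergence theorem gives
\be
\int_A n\cdot v=\int_{\eth r_A}n\cdot v-\int_{r_A}\nabla\cdot v\le\frac{|\eth r_A|}{4\GN}+\Sb(r_A)=S(A).
\ee

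The step I expect to be the main obstacle is the faithful dualization of the nonlinear bulk-entropy functional: establishing its convexity on the box from submodularity/SSA, justifying the reduction to nested superlevel families, and carrying out the attendant measure-theoretic bookkeeping of level sets --- which, as elsewhere in this section, I would treat at the level of rigour of the Riemannian max-flow/min-cut literature rather than fully rigorously. A related point to watch is that the boundary conditions and the box on $\phi$ are exactly what guarantee that every superlevel set appearing in the layer-cake decomposition lies in $\RR_A$; this is why the dualization lands on the \emph{loose} constraint set of \cite{Rolph:2021hgz} (divergence bound imposed only on $\RR_A$, and only one-sided) rather than on the strict one, and it is the essential way in which this argument differs from, and complements, Proof~2 of Theorem~\ref{thm:S(A)strict}.
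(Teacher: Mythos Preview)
Your setup and the first half of the dualization match the paper's proof closely: the paper also rewrites the QES formula as a convex program over a scalar potential $\psi$ with superlevel sets $r_s\in\RR_A$, invokes SSA for convexity, introduces $w=\nabla\psi$ with multiplier $v$, and obtains $|v|\le1/4\GN$ from the $w$-minimization.

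There is, however, a genuine gap at the step where you say that minimizing over the interior values of $\phi$ ``forces the loose divergence constraint $-\int_r\nabla\cdot v\le\Sb(r)$ for all $r\in\RR_A$.'' It does not. Because $\phi$ lives in the box $[0,1]$ (not in a cone), the inner minimization does not produce a constraint on $v$; it produces a term in the dual objective. Carrying out the layer-cake decomposition exactly as you describe, the dual one actually lands on is
\be
\sup_v\left(\int_A n\cdot v+\inf_{r\in\RR_A}\Big(\int_r\nabla\cdot v+\Sb(r)\Big)\right)\quad\text{subject to}\quad|v|\le\frac1{4\GN}\,,
\ee
which is a sup--inf problem, not the loose flow program. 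The paper then does substantial extra work---absent from your proposal---to show that there exists an optimal $v^*$ for which the inner infimum equals~$0$. This is done by an adversarial source/sink-removal argument (classical first, then quantum), and it requires an additional assumption of large volume near $\partial\Sigma$ so that one can add short boundary-anchored field lines without violating the norm bound. Only after establishing $\inf_{r\in\RR_A}\big(\int_r\nabla\cdot v^*+\Sb(r)\big)=0$ can one impose the loose constraint on all $v$ without changing the optimum. Your proposal skips this step entirely; the analogy with Proof~2 of Theorem~\ref{thm:S(A)strict} breaks down precisely because there the primal variables $\mu_\pm$ range over a cone, so minimizing the Lagrangian over them does yield constraints, whereas here $\phi$ does not.
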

This theorem was proved in \cite{Rolph:2021hgz}. Here, we give an alternate proof that involves dualizing from surfaces to flows, similar to the second proof of theorem \ref{thm:S(A)strict}. On a technical note, this proof requires an additional assumption,
that there is a large volume near $\del \Sigma$, such as when $\Sigma$ is the region within a large radial cutoff surface of an asymptotically hyperbolic manifold.
\begin{proof}
We start from the QES prescription, written in the form~\eqref{QES}.
This is not a convex problem, so, before we dualize, we rewrite~\eqref{QES} in an equivalent convex-relaxed form. We smear out the homology surfaces with a scalar function $\psi$ whose level sets, where $\psi = s$, are surfaces $\eth r_s \in \cR_A$. We get
\bne S(A)
= \inf_\psi \left[ \int_\Sigma \frac{|d\psi|}{4G_N} + \int_0^1 ds S_b (r_s) \right] \text{ subject to } \psi|_{\del \Sigma} = \chi_A\ene
and where $r_s := \{x \in \Sigma: \psi(x) \geq s \}$. This is convex in $\psi$ because of the strong subadditivity of bulk entropies, and Slater's condition is satisfied.

The Lagrangian is
\bne L[\psi,w,v] = \int_\Sigma \left(\frac{|w|}{4G_N} + v\cdot (w-d\psi)\right) + \int_0^1 ds S_b (r_s), \ene
and the dual problem is
\bne \sup_v \inf_\psi \left[ \int_\Sigma \psi \nabla \cdot v + \int_0^1 ds S_b (r_s) + \int_A v\right], \qquad \text{subject to }  |v| \leq \frac{1}{4G_N}. \ene
We rewrite this, using $\psi(x) = \int_0^{\psi(x)} ds = \int_0^1 ds \chi_{r_s} (x)$, and get 
\bne \begin{split} 
=&\sup_v \left( \int_A v + \inf_{r \in \cR_A} \left( \int_r \nabla \cdot v + S_b (r) \right) \right) ,\qquad \text{ subject to } |v|\leq \frac{1}{4G_N}. \label{eq:qbitha} 
\end{split}\ene

Next, we will show that there is guaranteed to exist an optimal flow $v^*$ for~\eqref{eq:qbitha} satisfying
\bne \label{eq:odb}  \inf_{r\in \cR_A} \int_{r} \left( \nabla \cdot v^* + S_b(r )\right) = 0, \ene 
because this implies that
\bne  \forall r \in \cR_A : \quad\int_{r } \nabla \cdot v^* + S_b(r) \geq 0, \label{eq:qbvco} \ene
and then, since~\eqref{eq:qbvco} is satisfied by an optimal $v$, we can impose~\eqref{eq:qbvco} as a constraint on all $v$'s in~\eqref{eq:qbitha}, which gives us the loose quantum bit thread prescription that we are looking for:
\bne \sup_v \int_A v, \quad \text{subject to:}\quad |v| \leq \frac{1}{4G_N}, \quad \forall r \in \cR_A: \int_r \nabla \cdot v + S_b(r) \geq 0. \ene

\textit{(1) Classical.}
First, we consider the classical case, setting $S_b$ to zero.
In~\eqref{eq:qbitha}, think of $r$ as an adversary to $v$, with $r$ trying to minimise $\int_r \nabla \cdot v$ while $v$ is trying to maximise it. It is disadvantageous to $v$ to have $\nabla \cdot v < 0$ at any point, because $r$ can decrease $\int_r \nabla \cdot v^*$ by including that point. So, we cannot have $\nabla \cdot v^* < 0$ anywhere; we necessarily have $\nabla \cdot v^* \geq 0$. Furthermore, there is no advantage to $v$ to have $\nabla \cdot v > 0$ at any point, because $r$ will simply avoid that point, and it does not help with the norm bound; because of the norm bound, it can only be disadvantageous to $v$ to have $\nabla \cdot v >0$ anywhere. So, there must exist a $v^*$ with $\nabla \cdot v^* = 0$.

\textit{(2) Quantum.}
Now we extend the argument to include quantum corrections. 
Suppose, contrary to~\eqref{eq:odb}, that we are given optimal $v^*$ and $r^*$ for which 
\bne \int_{r^*} \nabla \cdot v^* + S_b(r^*) <0. \label{eq:suppose} \ene
Since $S_b (r) \geq 0$ for all $r$, there must be more sinks than sources in $r^*$. We can define a new optimal flow by removing sinks in $r^*$ and their attached field lines from $v^*$, without risk of violating the norm bound. 
The field lines from the sources have to end on $A$ or in the interior of $r^*$, otherwise $v^*$ was not optimal. 
The new flow is still optimal because removing sinks in $r^*$ cannot change the optimal flux, which is $\int_{{\eth r}^*} v^* + S_b(r^*)$,
since removing sinks cannot increase the flux through ${\eth r}^*$, and it cannot decrease the flux because that is inconsistent with the optimality of $v^*$. 
Furthermore, $r^*$ is also still optimal for this new flow, because 
the flux through ${\eth r}^*$ is unchanged and the flux through other ${\eth r}$'s can only increase as sinks with $A$-attached field are removed, so the set of minimising ${\eth r}$'s can only shrink, not grow. 
From $v^*$, we keep removing sources and their attached field lines in $r^*$ until we have reached an optimal $v$ for which~\eqref{eq:odb} is satisfied, and then we are done. We cannot run out of sources to remove before we are done, as can be seen from~\eqref{eq:suppose} and the positivity of bulk entropies.

Next, if the LHS of~\eqref{eq:suppose} is positive, then we modify $v^*$ by adding field lines that are anchored on $A$ and extend to sinks that are infinitesimally into the bulk. This decreases the integrated divergence in $r^*$ and we keep doing so until~\eqref{eq:odb} is satisfied. Both the new flow remains in the set of optimal flows and $r^*$ remains optimal, because the flux through $\eth r^*$ is unaffected. It is possible to modify $v^*$ this way without violating the norm bound, assuming that there is sufficient space in the neighbourhood of $\del \Sigma$.
\end{proof}

\begin{theorem}\label{thm:strictsubcontour}
Given a quantum holography setup, let $f$ be a function on $\partial\Sigma$. There exists a strict quantum flow $v$ such that $n\cdot v=f$ 
if and only if, for all $A\in\A$,
\be\label{strictEDFcondition}
\left|\int_Af\right|\le S(A)\,.
\ee
\end{theorem}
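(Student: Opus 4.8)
The plan is to prove both implications, treating this as the bulk-entropy generalization of Theorem~\ref{thm:classicalflowEDF} (the remark there records that Theorem~\ref{thm:classicalflowEDF} is its $S_{\rm b}=0$ case) and reusing the dualization machinery of Proof~1 of Theorem~\ref{thm:S(A)strict}. The ``only if'' direction is immediate: given a strict quantum flow $v$ with $n\cdot v=f$, fix $A\in\A$ and any $r\in\RR_A$ (non-empty by assumption); the divergence theorem with the paper's orientation conventions gives $\int_A f=\int_{\eth r}n\cdot v-\int_r\nabla\cdot v\le|\eth r|/4\GN+S_{\rm b}(r)=S_{\rm gen}(r)$, using $n\cdot v\le|v|\le 1/4\GN$ on $\eth r$ and $-\int_r\nabla\cdot v\le S_{\rm b}(r)$; minimizing over $r\in\RR_A$ gives $\int_A f\le S(A)$, and the same estimate with $v\to-v$ gives $-\int_A f\le S(A)$. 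This is the one-line argument already sketched in subsection~\ref{sec:morestrict}.

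For the ``if'' direction I would introduce the convex primal program
\[
\text{Minimize }\ \frac1{4\GN}\int_\Sigma|\nabla\phi|+\int_\RR(d\mu_++d\mu_-)S_{\rm b}(r)-\int_{\partial\Sigma}\phi\,f\quad\text{over measures }\mu_\pm\text{ on }\RR,
\]
with $\phi:=\int_\RR(d\mu_--d\mu_+)\chi_r$ exactly as in Lemma~\ref{thm:disentanglingcont} (this is the program \eqref{finalconvex}, but with $\phi|_{\partial\Sigma}$ free and a source term $-\int_{\partial\Sigma}\phi f$ in place of the boundary condition $\phi|_{\partial\Sigma}=\chi_A$), and first solve it. Lemma~\ref{thm:disentanglingcont} --- whose proof uses SSA and weak monotonicity --- lets me assume $\mu_\pm$ have nested, mutually disjoint supports without increasing the objective, so $\phi\ge0$ on $\bigcup\mathrm{supp}\,\mu_-$, $\phi\le0$ on $\bigcup\mathrm{supp}\,\mu_+$, and $\phi=0$ elsewhere. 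Writing $r^-_s:=\{\phi\ge s\}$, $r^+_s:=\{\phi\le-s\}$ for $s>0$ (nested regions lying in $\mathrm{supp}\,\mu_\mp$, with $A^\pm_s:=r^\pm_s\cap\partial\Sigma$), the coarea formula and the layer-cake identity turn the objective into an integral $\int_0^\infty(\cdots)\,ds$ whose integrand is the sum of $S_{\rm gen}(r^-_s)-\int_{A^-_s}f$ and $S_{\rm gen}(r^+_s)+\int_{A^+_s}f$ --- the key point being that on nested, disjoint supports the bulk-entropy term collapses exactly onto $\int_0^\infty[S_{\rm b}(r^-_s)+S_{\rm b}(r^+_s)]\,ds$, with no cross terms. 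Since $r^\pm_s\in\RR_{A^\pm_s}$ we have $S_{\rm gen}(r^\pm_s)\ge S(A^\pm_s)$, so \eqref{strictEDFcondition} makes both summands non-negative and the primal value is $\ge0$, attained at $\mu_\pm=0$. If instead \eqref{strictEDFcondition} fails for some $A_0$, taking $\mu_-=\lambda\,\delta_{r_0}$ (or $\mu_+=\lambda\,\delta_{r_0}$, according to the sign of the violation) for a region $r_0\in\RR_{A_0}$ with $S_{\rm gen}(r_0)=S(A_0)$ drives the objective to $-\infty$. Hence the primal value is $0$ precisely when \eqref{strictEDFcondition} holds.

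To finish, I would dualize: introduce $w=\nabla\phi$ with a vector-field Lagrange multiplier $v$ (the program is convex with no dualized inequality constraints, so Slater's condition holds, as in Theorem~\ref{thm:classicalflowEDF}). Minimizing the Lagrangian over $w$ forces $|v|\le1/4\GN$, while minimizing over the measures $\mu_\pm\ge0$ forces, for every $r\in\RR$, $\big|\int_{\eth r}n\cdot v-\int_{r\cap\partial\Sigma}f\big|\le S_{\rm b}(r)$. Evaluating this on infinitesimal half-ball regions at boundary points (where $S_{\rm b}\to0$) forces $n\cdot v|_{\partial\Sigma}=f$, and given that, the divergence theorem turns the constraint for a general $r$ into $\big|\int_r\nabla\cdot v\big|\le S_{\rm b}(r)$ --- the same elimination step used for the scalar $\psi$ in Proof~2 of Theorem~\ref{thm:S(A)strict}. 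So the dual is the feasibility program ``maximize $0$ over strict quantum flows $v$ with $n\cdot v=f$'', whose value is $0$ iff such a flow exists; by strong duality this equals the primal value, which is $0$ iff \eqref{strictEDFcondition} holds, completing the proof.

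I expect the main obstacle to be the primal lower bound: concretely, checking that after applying Lemma~\ref{thm:disentanglingcont} the bulk-entropy integral genuinely collapses onto the level-set integral $\int_0^\infty[S_{\rm b}(r^-_s)+S_{\rm b}(r^+_s)]\,ds$ (so that the objective becomes a manifestly non-negative integral over level sets belonging to $\RR$, after approximating $\mu_\pm$ by finitely supported measures), which is exactly the structural step in Proof~1 of Theorem~\ref{thm:S(A)strict}. The remaining coarea/layer-cake and measure-theoretic points are routine and, following the rest of this subsection, I would not attempt to make them rigorous.
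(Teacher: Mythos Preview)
Your proposal is correct and follows essentially the same route as the paper: the identical primal program over measures $\mu_\pm$, solved via Lemma~\ref{thm:disentanglingcont} plus coarea to show its value is $0$ exactly when \eqref{strictEDFcondition} holds, then dualized to a feasibility program for strict flows with prescribed boundary flux. The one technical difference is in the dualization: the paper introduces an extra auxiliary scalar $\tilde\phi$ (constrained to equal $\phi$, with Lagrange multiplier $\psi$), which makes $n\cdot v=f$ drop out directly from minimizing over $\tilde\phi|_{\partial\Sigma}$, whereas you extract it afterwards by evaluating the dual constraint on infinitesimal boundary half-balls --- the same move used in Proof~2 of Theorem~\ref{thm:S(A)strict}, so equally acceptable at this level of rigor.
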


\begin{proof}
The proof is similar to that of theorem \ref{thm:classicalflowEDF}, except that since the divergence constraint is on the set of regions $\RR$, rather than points, we will have to enforce it using measures rather than a function.

We start with the following convex program:
\be\label{strictflowconvex}
\text{Minimize }\frac1{4\GN}\int_\Sigma|\nabla\phi|+\int_{\RR}(d\mu_++d\mu_-)S_{\rm b}(r)-\int_{\partial\Sigma}\phi\,f\text{ over $\mu_\pm$}\,,
\ee
where $\mu_\pm$ are measures on $\RR$ and $\phi$ is defined in \eqref{phidef}. We will first solve \eqref{strictflowconvex}, and then dualize it and relate its solution to the existence of a strict flow $v$ such that $n\cdot v=f$.

To solve \eqref{strictflowconvex}, we first note that, by lemma \ref{thm:disentanglingcont}, we can assume that the supports of $\mu_\pm$ are nested and mutually disjoint. Using the coarea formula on the first term, the objective can then be written
\be
\int_{\RR}d\mu_+\left(\frac{|\eth r|}{4\GN}+S_{\rm b}(r)+\int_{r\cap\partial\Sigma}f\right)+
\int_{\RR}d\mu_-\left(\frac{|\eth r|}{4\GN}+S_{\rm b}(r)-\int_{r\cap\partial\Sigma}f\right).
\ee
If \eqref{strictEDFcondition} holds, then the objective is non-negative; its minimum value is 0, achieved by $\mu_\pm=0$. On the other hand, if \eqref{strictEDFcondition} is violated for some $A$, then by choosing either $\mu_+$ or $\mu_-$ to be supported on the corresponding homology region, the objective can be made negative and arbitrarily large, so the minimum value is $-\infty$.

We now dualize \eqref{strictflowconvex}, first introducing as new variables a function $\tilde\phi$ and vector field $w$, constrained to equal $\phi$ and $\nabla\tilde\phi$ respectively:
\begin{multline}\label{strictflowconvex2}
\text{Minimize }\frac1{4\GN}\int_\Sigma|w|+\int_{\RR}(d\mu_++d\mu_-)S_{\rm b}(r)-\int_{\partial\Sigma}\tilde\phi\,f\text{ over $\mu_\pm,\tilde\phi,w$ subject to:}\\ 
w=\nabla\tilde\phi\,,\qquad \tilde\phi=\int_{\RR}(d\mu_--d\mu_+)\chi_r\,.
\end{multline}
The program involves only equality constraints, so Slater's condition is automatically satisfied and strong duality is guaranteed. To dualize, we introduce as Lagrange multipliers a vector field $v$ for the first constraint and a function $\psi$ for the second one. The Lagrangian is:
\begin{align}
L[\mu_+,\mu_-,\tilde\phi,w,v,\psi] 
&=\int_\Sigma\frac{|w|}{4\GN}+\int_{\RR}(d\mu_++d\mu_-)S_{\rm b}(r)-\int_{\partial\Sigma}\tilde\phi\,f\nonumber\\
&\qquad\qquad+\int_\Sigma v\cdot(w-\nabla\tilde\phi)+\int_\Sigma\psi\left(\tilde\phi-\int_{\RR}(d\mu_--d\mu_+)\chi_r\right)\nonumber \\
&=\int_\Sigma\left(\frac{|w|}{4\GN}+v\cdot w+\tilde\phi\nabla\cdot v+\psi\tilde\phi\right)+\int_{\partial\Sigma}\tilde\phi(n\cdot v-f) \nonumber\\
&\qquad\qquad
+\int_{\RR}d\mu_+\left(S_{\rm b}(r)+\int_r\psi\right)
+\int_{\RR}d\mu_-\left(S_{\rm b}(r)-\int_r\psi\right)
\end{align}
Minimizing with respect to $\mu_\pm,\tilde\phi,w$, we find that the dual program is
\begin{multline}
\text{Maximize }0\text{ over $v,\psi$ subject to:} \\
|v|\le\frac1{4\GN}\,,\qquad
\psi=-\nabla\cdot v\,,\qquad
n\cdot v=f\,,\qquad
\forall r\in\RR\,,\quad\pm\int_r\psi\le S_{\rm b}(r)\,.
\end{multline}
This is a feasibility-type program: if there exists a pair $v,\psi$ obeying the constraints, then the optimal value is 0; if not, then it is $-\infty$. Existence of such a pair is equivalent to the existence of a strict quantum flow obeying $n\cdot v=f$.
\end{proof}

\begin{theorem}\label{thm:strictnesting}
Given a quantum holography setup, let $A_1\subset \cdots\subset A_m$, $B_1\subset\cdots \subset B_n$ be regions in $\A$ such that $A_m\cap B_n=\emptyset$. There exists a strict quantum flow $v$ such that
\be
\forall\, i=1,\ldots,m\,,\quad\int_{A_i}n\cdot v=S(A_i)\,;\qquad\forall\, j=1,\ldots,n\,,\quad\int_{B_j}n\cdot v=-S(B_j)\,.
\ee
\end{theorem}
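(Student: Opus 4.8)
The plan is to reduce the statement to a single scalar maximization over strict flows and then dualize it to a convex program over measures, following exactly the template of theorem \ref{thm:S(A)strict} (Proof 1) and theorem \ref{thm:strictsubcontour}. The starting observation is that the boundary flux of any strict quantum flow is a boundary EDF (shown in subsection \ref{sec:morestrict}), so $\int_{A_i}n\cdot v\le S(A_i)$ and $-\int_{B_j}n\cdot v\le S(B_j)$ for every strict flow $v$. Consequently it suffices to exhibit one strict flow attaining
\[\sum_{i=1}^m\int_{A_i}n\cdot v-\sum_{j=1}^n\int_{B_j}n\cdot v=\sum_{i=1}^m S(A_i)+\sum_{j=1}^n S(B_j),\]
since each summand on the left is individually bounded by the corresponding term on the right, so saturation of the total forces simultaneous saturation of all of them, which is precisely the claim. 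Thus I would prove that the maximum of the left-hand side over strict flows equals the right-hand side.

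Next I would write this maximization as a concave program; the feasible point $v=0$ supplies Slater's condition. Introducing a scalar $\psi\ge0$ for the norm bound and measures $\mu_\pm$ on $\RR$ for $\pm\int_r\nabla\cdot v\le S_{\rm b}(r)$, the dualization is identical to that in the proof of theorem \ref{thm:S(A)strict}, the only change being that the linear functional maximized is $\int_{\partial\Sigma}(n\cdot v)\,g$ with $g:=\sum_i\chi_{A_i}-\sum_j\chi_{B_j}$, a step function on $\partial\Sigma$ taking value $m-i+1$ on $A_i\setminus A_{i-1}$, $-(n-j+1)$ on $B_j\setminus B_{j-1}$ (the nesting together with $A_m\cap B_n=\emptyset$ is what makes these contributions non-overlapping), and $0$ elsewhere. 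Eliminating $v$ and $\psi$ as before gives the dual program: minimize $\frac1{4\GN}\int_\Sigma|\nabla\phi|+\int_\RR(d\mu_++d\mu_-)S_{\rm b}(r)$ over $\mu_\pm$, with $\phi:=\int_\RR(d\mu_--d\mu_+)\chi_r$ subject to $\phi|_{\partial\Sigma}=g$.

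To finish I would evaluate this dual. Applying lemma \ref{thm:disentanglingcont}, we may assume the supports of $\mu_\pm$ are nested and mutually disjoint, whereupon $\phi$ is a staircase built from stacked nested regions and the coarea formula rewrites the objective as $\int_\RR(d\mu_++d\mu_-)S_{\rm gen}(r)$. The boundary condition $\phi|_{\partial\Sigma}=g$, combined with the nesting of the $A_i$ and $B_j$ and $A_m\cap B_n=\emptyset$, forces $\mu_-$ to place unit weight on a region homologous to each $A_i$ (nested, disjoint from $B_n$) and $\mu_+$ to place unit weight on a region homologous to each $B_j$; any additional mass can only increase the objective and is therefore set to zero. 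The objective then splits as $\sum_i S_{\rm gen}(r_i)+\sum_j S_{\rm gen}(r_j')$, whose minimum over the allowed (nested, mutually disjoint) homology regions is $\sum_i S(A_i)+\sum_j S(B_j)$. Strong duality then yields the desired value of the maximum over strict flows, and an actual maximizing $v$ is recovered from complementary slackness (or, less rigorously, from compactness of the set of strict flows).

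The step I expect to be the main obstacle is the last one: verifying that after disentangling the dual objective genuinely decomposes over the two chains and that the cross-chain interaction is harmless. This is where strong subadditivity and weak monotonicity enter, through lemma \ref{thm:disentanglingcont}, and where the hypothesis $A_m\cap B_n=\emptyset$ must be used to guarantee the measure supports can be made simultaneously nested within each chain and disjoint across chains; one must also check that imposing homology-region disjointness does not raise the per-chain minimum above the QES value, which again rests on strong subadditivity. (An alternative route, which sidesteps this analysis at the cost of also invoking strong subadditivity for the boundary entropy $S$, is to combine the existence of an EDF saturating on any nested set of regions, theorem \ref{thm:saturatepos}, with theorem \ref{thm:strictsubcontour}; the direct dualization above is preferable precisely because it does not require assuming strong subadditivity of the boundary entropy.)
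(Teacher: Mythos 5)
Your proposal is correct and follows essentially the same route as the paper: it sets up the same combined concave program with objective $\sum_i\int_{A_i}n\cdot v-\sum_j\int_{B_j}n\cdot v$, uses the same termwise bound to reduce the theorem to attaining the total $\sum_i S(A_i)+\sum_j S(B_j)$, dualizes with the same multipliers $\psi,\mu_\pm$ and boundary step function $\chi=\sum_i\chi_{A_i}-\sum_j\chi_{B_j}$, and closes the argument with the disentangling lemma exactly as the paper does. The worry you flag about cross-chain interactions is resolved the same way in the paper (the boundary condition forces unit-weight support on each $\RR_{A_i}$ and $\RR_{B_j}$ after disentangling, and each such integral is bounded below by the corresponding QES value), and the alternative EDF-based route you mention is likewise noted in the paper with the same caveat about its extra reliance on strong subadditivity of the boundary entropy.
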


\begin{proof}
The theorem is a generalization of theorem \ref{thm:S(A)strict}. We consider the following concave program:
\begin{multline}\label{strictnestingprimal}
\text{Maximize }\sum_{i=1}^m\int_{A_i}n\cdot v-\sum_{j=1}^n\int_{B_j}n\cdot v\text{ over $v$ subject to:}\\
|v|\le\frac1{4\GN}\,,\qquad
\forall r\in\RR\,,\quad
\left|\int_r\nabla\cdot v\right|\le S_{\rm b}(r)\,.
\end{multline}
Since each term in the objective is bounded above by $S(A_i)$ or $S(B_j)$, if the objective attains the value
\be\label{target}
\sum_{i=1}^mS(A_i)+\sum_{j=1}^nS(B_j)\,,
\ee
then each term must achieve its upper bound, and the theorem is proven.

We now dualize \eqref{strictnestingprimal}, noting that the feasible configuration $v=0$ obeys Slater's condition, guaranteeing strong duality. Introducing as Lagrange multipliers a scalar field $\psi\ge0$ for the first constraint and measures $\mu_\pm$ on $\RR$ for $\pm\int_r\nabla\cdot v\le S_{\rm b}(r)$ respectively, the Lagrangian is:
\begin{align}
&L[v,\psi,\mu_+,\mu_-] \nonumber\\
&=\sum_{i=1}^m\int_{A_i}n\cdot v-\sum_{j=1}^n\int_{B_j}n\cdot v \nonumber \\
&\qquad\qquad+\int_\Sigma\psi\left(\frac1{4\GN}-|v|\right)
+\int_\RR d\mu_+\left(S_{\rm b}(r)-\int_r\nabla\cdot v\right)
+\int_\RR d\mu_-\left(S_{\rm b}(r)+\int_r\nabla\cdot v\right) \nonumber\\
&=\int_{\partial\Sigma}n\cdot v\,(\chi-\phi)
+\int_\Sigma\left(\frac\psi{4\GN}-\psi|v|-v\cdot\nabla\phi\right)
+\int_\RR(d\mu_++d\mu_-)S_{\rm b}(r)\,,
\end{align}
where $\phi$ is defined in \eqref{phidef}, and
\be
\chi:=\sum_{i=1}^m\chi_{A_i}-\sum_{j=1}^n\chi_{B_j}\,.
\ee
Maximizing the Lagrangian with respect to $v$, we find the following dual program:
\be
\text{Minimize }
\frac1{4\GN}\int_\Sigma\psi+\int_\RR(d\mu_++d\mu_-)S_{\rm b}(r)
\text{ over $\psi,\mu_\pm$ subject to: }\psi\ge|\nabla\phi|\,,\quad \left.\phi\right|_{\partial\Sigma}=\chi\,.
\ee
Minimizing over $\psi$ by setting it equal to $|\nabla\phi|$ gets us:
\be
\text{Minimize }
\frac1{4\GN}\int_\Sigma|\nabla\phi|+\int_\RR(d\mu_++d\mu_-)S_{\rm b}(r)
\text{ over $\mu_\pm$ subject to: }
\left.\phi\right|_{\partial\Sigma}=\chi\,.
\ee
Thanks to lemma \ref{thm:disentanglingcont}, we can assume that the supports of $\mu_\pm$ are nested and mutually disjoint. Given the boundary condition $\phi|_{\partial\Sigma}=\chi$, this implies that the supports include three types of regions: for each $i$, the support of $\mu_-$ includes regions in $\RR_{A_i}$ with total weight 1; for each $j$, the support of $\mu_+$ includes regions in $\RR_{B_j}$ with total weight 1; and there may be other regions that do not intersect $\partial\Sigma$. To minimize the objective, we should zero out the last type of region. The objective thus becomes
\be
\sum_{i=1}^m\int_{\RR_{A_i}}d\mu_-\left(\frac{|\eth r|}{4\GN}+S_{\rm b}(r)\right)
+\sum_{j=1}^n\int_{\RR_{B_j}}d\mu_+\left(\frac{|\eth r|}{4\GN}+S_{\rm b}(r)\right).
\ee
By the definition of $S$, the minimum of the objective equals \eqref{target}.
\end{proof}

\subsection{Multiflows}
\label{sec:nomultiflows}

The pattern followed by the theorems in the previous subsection was that every statement that was true about classical flows was true about strict quantum flows as well. We will now give an important example where this pattern breaks down, involving so-called multiflows.

\begin{definition}
Given a classical holographic setup with $\partial\Sigma$ partitioned into regions $A_1,\ldots,A_n$, a \emph{classical multiflow} is a set of vector fields $v_{ij}$ ($i,j=1,\ldots,n$) on $\Sigma$ such that
\be\label{multiflowdef}
v_{ij}=-v_{ji}\,,\qquad
\sum_{i<j}|v_{ij}|\le\frac1{4\GN}\,,\qquad
\nabla\cdot v_{ij}=0\,,\qquad
\left.n\cdot v_{ij}\right|_{A_k}=0\quad(k\neq i,j)\,.
\ee
\end{definition}
From the definition, each component vector field $v_{ij}$ is itself a classical flow. More than that, any linear combination $\sum_{i<j}c_{ij}v_{ij}$ where the coefficients obey $|c_{ij}|\le1$ is a classical flow. In particular, for any individual boundary region $A_i$, we can define the flow $v_i:=\sum_jv_{ij}$; its flux $\int_{A_i}v_i$ is bounded above by $S(A_i)$. The following theorem, proved in \cite{Cui:2018dyq}, shows that there exists a classical multiflow that simultaneously saturates these bounds for all $i$:
\begin{theorem}\label{thm:mmf}
Given a classical holographic setup with $\partial\Sigma$ partitioned into regions $A_1,\ldots,$ $A_n$, there exists a classical multiflow $\{v_{ij}\}$ such that, for all $i=1,\ldots,n$,
\be\label{mmfdef}
S(A_i)=\int_{A_i}n\cdot \left(\sum_{j=1}^nv_{ij}\right).
\ee
\end{theorem}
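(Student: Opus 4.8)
The plan is to follow \cite{Cui:2018dyq}: recast the assertion as the claim that a convex program has optimal value $\sum_iS(A_i)$, and establish the nontrivial half of this by an explicit recursive construction of the multiflow.

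\emph{Reduction.} For any classical multiflow $\{v_{ij}\}$, the partial flow $v_i:=\sum_{j}v_{ij}$ is itself a classical flow: it is divergenceless, and $|v_i|\le\sum_{j}|v_{ij}|\le\sum_{k<l}|v_{kl}|\le 1/4\GN$ (the middle inequality because the pairs containing $i$ are a subset of all pairs, and all terms are nonnegative). Hence, by the Riemannian max flow--min cut theorem, $\int_{A_i}n\cdot v_i\le S(A_i)$, and so $\sum_i\int_{A_i}n\cdot v_i\le\sum_iS(A_i)$. It therefore suffices to produce a \emph{single} multiflow attaining the value $\sum_iS(A_i)$, since then every term must individually saturate its bound --- which is precisely \eqref{mmfdef}.

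\emph{Construction.} I would induct on $n$; the case $n\le 2$ is immediate, since a max flow $v_{12}$ for $A_1$ is automatically maximal for $A_2=A_1^c$. For the inductive step, peel off $A_n$. Using the classical nesting property (the $\Sb=0$ special case of theorem \ref{thm:strictnesting}; cf.\ \cite{Headrick:2017ucz}), choose a flow $u$ that is \emph{simultaneously} maximal for every region in the nested chain $A_n\subset A_n\cup A_{n-1}\subset\cdots\subset A_n\cup\cdots\cup A_2$. Being maximal for $A_n$, the flow $u$ is maximally packed on the RT surface $\gamma_n=\eth r_n$ of $A_n$, with $r_n$ the homology region; set $\Sigma':=\overline{\Sigma\setminus r_n}$, the entanglement wedge of $A_n^c$, whose boundary is $A_1\cup\cdots\cup A_{n-1}\cup\gamma_n$. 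Decompose $u$ according to which boundary components its integral curves connect: the pieces running between $A_n$ and $A_j$ define divergenceless fields $v_{nj}$ with flux only on $A_n$ and $A_j$ which, being parallel to $u$ at each point, obey $\sum_{j<n}|v_{nj}|\le|u|\le 1/4\GN$. On $\Sigma'$ the residual norm budget at a point $x$ is $1/4\GN-|u(x)|$; restricting there and invoking the inductive hypothesis for the partition $A_1,\dots,A_{n-1}$ (with $\gamma_n$ an inert extra boundary component through which the leftover flux of $u$ passes) produces the remaining components $v_{ij}$, $i,j<n$. Gluing and verifying $\sum_{i<j}|v_{ij}|\le 1/4\GN$ completes the multiflow.

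\emph{Main obstacle.} The heart of the argument is the inductive step: one must show that the entropies one is forced to saturate on $\Sigma'$ --- computed there with the spatially varying residual budget $1/4\GN-|u|$ --- coincide with the target values $S_\Sigma(A_i)$, i.e.\ that the nested choice of $u$ makes $u$ ``hug'' all the relevant RT surfaces so that they stay inside $\Sigma'$, and that after gluing the $v_{nj}$ (supported across $r_n$ and $\Sigma'$) to the recursively-built $v_{ij}$ (supported on $\Sigma'$) the global density bound is still respected. This is where the nesting property, and implicitly the strong subadditivity of the entropy it encodes, does the real work. An alternative is to Lagrange-dualize the multiflow program directly --- introducing scalar potentials $\psi_{ij}$ for the divergence constraints and $\sigma\ge0$ for the norm bound --- which yields the ``vector cut''
\[
\min\ \frac1{4\GN}\int_\Sigma\ \max_{i<j}|\nabla\psi_{ij}|\qquad\text{subject to}\qquad \psi_{ij}|_{A_i}=1\,,\quad\psi_{ij}|_{A_j}=-1\,,
\]
so that the theorem becomes the statement that this value equals $\sum_iS(A_i)$. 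Since naive multicommodity min-cut duality does not directly deliver this, proving it is essentially as hard as the recursive construction, so I would carry out the latter.
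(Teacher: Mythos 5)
Your reduction step is fine, and your base case is fine, but the construction has a genuine gap at exactly the point you yourself flag, and it is not a deferrable technicality --- it is the content of the theorem. Two concrete problems. (i) The inductive hypothesis does not apply to the residual problem: after peeling off the $A_n$ threads, what is left is a flow problem on $\Sigma'$ with a \emph{spatially varying} capacity (and note it should be $1/4\GN-\sum_{j<n}|v_{nj}(x)|$, not $1/4\GN-|u(x)|$, since you discard the $A_j$-to-$A_k$ pieces of $u$) and with the internal surface $\gamma_n$ acting as an extra boundary component carrying prescribed through-flux. That is not a ``classical holographic setup'' in the sense of the theorem, so the recursion does not close; you would first have to formulate and prove a strictly stronger statement (arbitrary capacity function, internal boundary data). (ii) Even granting such a formulation, the decisive claim --- that $u$ can be chosen so that the depleted geometry still simultaneously admits flows of flux $S_\Sigma(A_i)$ for every $i<n$ --- is asserted, not proven. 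Nesting only gives simultaneous maximality on the nested chain $A_n\subset A_nA_{n-1}\subset\cdots$, which controls fluxes through the \emph{unions}; it says nothing about where the residual capacity sits relative to the RT surfaces of the individual, mutually disjoint $A_i$, and for a generic maximal $u$ those bottlenecks are indeed spoiled. Showing that a good choice exists is precisely the ``locking'' property the theorem encodes.

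For comparison: the paper does not prove this statement at all --- it cites \cite{Cui:2018dyq} --- and the proof there is not the recursion you sketch. In the continuum it proceeds by dualizing the multiflow program (essentially the vector-cut program you write down) and then establishing the non-trivial lower bound $\frac1{4\GN}\int_\Sigma\max_{i<j}|\nabla\psi_{ij}|\ge\sum_iS(A_i)$ for \emph{every} feasible dual configuration, via a careful construction of auxiliary functions whose level sets are homologous to the $A_i$; in the discrete setting the statement is of the Karzanov--Lomonosov ``locking'' type, whose inductive proofs require uncrossing arguments considerably more delicate than peeling off one terminal. So your proposal would become a proof only if you supplied either that dual lower bound or a correctly generalized induction; as written, the main obstacle you name is the theorem itself.
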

We call such a multiflow a \emph{max multiflow}. The MMI inequality is a corollary of its existence:
\begin{corollary}\label{thm:MMIproof}
Given a classical holographic setup with $\partial\Sigma$ partitioned into four regions $A,B,C,D$, the MMI inequality holds:
\be\label{MMIrep}
S(AB)+S(AC)+S(BC)\ge S(A)+S(B)+S(C)+S(ABC)\,.
\ee
\end{corollary}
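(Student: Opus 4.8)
The plan is to deduce the MMI inequality from the existence of a max multiflow (Theorem~\ref{thm:mmf}) by the short combinatorial argument of~\cite{Cui:2018dyq, Hayden:2011ag}. Relabel $A_1=A$, $A_2=B$, $A_3=C$, $A_4=D$, and let $\{v_{ij}\}$ be a max multiflow for this partition. The first step is to record, for each unordered pair $\{i,j\}$, the flux $\theta_{ij}:=\int_{A_i}n\cdot v_{ij}$. Using $v_{ij}=-v_{ji}$ together with $\nabla\cdot v_{ij}=0$ and $n\cdot v_{ij}|_{A_k}=0$ for $k\neq i,j$ (so the boundary flux of $v_{ij}$ is supported on $A_i\cup A_j$ and integrates to zero there), one finds $\int_{A_j}n\cdot v_{ij}=-\theta_{ij}$ and $\theta_{ji}=\theta_{ij}$. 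The max multiflow condition~\eqref{mmfdef} then becomes the identity $S(A_i)=\sum_{j\neq i}\theta_{ij}$ for each $i$.

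Next, for each of the three composite regions $AB$, $AC$, $BC$, I would build a classical flow out of the component fields and apply the max flow--min cut theorem~\eqref{maxflow}. For $AB$, take $v_{AC}+v_{AD}+v_{BC}+v_{BD}$: it is divergenceless and, being a $\pm1$-coefficient combination of the $v_{ij}$, is a classical flow (as observed just after the definition of a classical multiflow, since its norm is bounded by $\sum_{i<j}|v_{ij}|\le 1/4\GN$). Its flux through $AB$ receives $\theta_{AC}+\theta_{AD}$ from $A$ and $\theta_{BC}+\theta_{BD}$ from $B$ (the remaining two fields vanish on $A$ and $B$ respectively), so $S(AB)\ge\theta_{13}+\theta_{14}+\theta_{23}+\theta_{24}$. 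The analogous flows for $AC$ and $BC$ give $S(AC)\ge\theta_{12}+\theta_{14}+\theta_{23}+\theta_{34}$ and $S(BC)\ge\theta_{12}+\theta_{13}+\theta_{24}+\theta_{34}$. Summing the three inequalities, each $\theta_{ij}$ appears exactly twice, so $S(AB)+S(AC)+S(BC)\ge 2\sum_{i<j}\theta_{ij}$.

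Finally I would evaluate the right-hand side of MMI using the identity from the first step. Since $\partial\Sigma$ is fully partitioned, $r=\Sigma$ lies in $\RR_{\partial\Sigma}$ with $\eth\Sigma=\emptyset$, so $S(\partial\Sigma)=0$; hence the boundary state is effectively pure and $S(ABC)=S(D)=\sum_{j\neq 4}\theta_{4j}$. Adding this to $S(A)+S(B)+S(C)$, each expressed via $S(A_i)=\sum_{j\neq i}\theta_{ij}$, again produces exactly $2\sum_{i<j}\theta_{ij}$, so $S(A)+S(B)+S(C)+S(ABC)=2\sum_{i<j}\theta_{ij}$. Comparing with the bound of the previous paragraph gives $S(AB)+S(AC)+S(BC)\ge S(A)+S(B)+S(C)+S(ABC)$.

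The content of the corollary is entirely in Theorem~\ref{thm:mmf}; the only things to be careful about are the orientation and sign conventions defining $\theta_{ij}$ (so that both the symmetry $\theta_{ij}=\theta_{ji}$ and the identity $S(A_i)=\sum_j\theta_{ij}$ come out with the right signs) and the verification that the particular linear combinations of the $v_{ij}$ still obey the norm bound. I do not expect any genuine obstacle.
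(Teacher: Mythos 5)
Your proposal is correct and follows essentially the same route as the paper: take the max multiflow of Theorem~\ref{thm:mmf}, form the composite flows $v_{AC}+v_{AD}+v_{BC}+v_{BD}$ (and permutations) to bound $S(AB)$, $S(AC)$, $S(BC)$, then sum and use \eqref{mmfdef} together with $S(ABC)=S(D)$ from completeness of the partition. Your $\theta_{ij}$ bookkeeping is just a more explicit rendering of the paper's summation of \eqref{MMI1}--\eqref{MMI3}, with the same sign checks and the same use of the norm bound for $\pm1$-coefficient combinations.
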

\begin{proof}
Let $v_{ij}$ be a max multiflow. The vector field
\be
\tilde v_{AB}:=v_{AC}+v_{AD}+v_{BC}+v_{BD}
\ee
is a classical flow, and therefore its flux through $AB$ is bounded above by $S(AB)$:
\be\label{MMI1}
S(AB)\ge \int_{AB}n\cdot\tilde v_{AB}=
\int_An\cdot\left(v_{AC}+v_{AD}\right)+\int_Bn\cdot\left(v_{BC}+v_{BD}\right).
\ee
Similarly, we have
\begin{align}\label{MMI2}
S(AC)&\ge\int_An\cdot\left(v_{AB}+v_{AD}\right)+\int_Cn\cdot\left(v_{CB}+v_{CD}\right) \\
\label{MMI3}
S(BC)&\ge\int_Bn\cdot\left(v_{BA}+v_{BD}\right)+\int_Cn\cdot\left(v_{CA}+v_{CD}\right) .
\end{align}
Summing \eqref{MMI1}--\eqref{MMI3}, and using \eqref{mmfdef} and the fact that $S(D)=S(ABC)$ (since $ABCD$ cover the entire boundary) yields \eqref{MMIrep}.
\end{proof}

We can similarly define (strict) quantum multiflows in such a way that any sum of its component vector fields is a strict quantum flow:
\begin{definition}
Give a quantum holographic setup with $\partial\Sigma$ partitioned into regions $A_1,\ldots,A_n$, a \emph{strict quantum multiflow} is a set of vector fields $v_{ij}$ ($i,j=1,\ldots,n$) on $\Sigma$ such that
\begin{multline}
v_{ij}=-v_{ji}\,,\qquad
\sum_{i<j}^n|v_{ij}|\le\frac1{4\GN}\,,\qquad
\forall r\in\RR\,,\quad\sum_{i<j}^n\left|\int_r\nabla\cdot v_{ij}\right|\le \Sb(r)\,,\\
\left.n\cdot v_{ij}\right|_{A_k}=0\quad(k\neq i,j)\,.
\end{multline}
\end{definition}

If a max quantum multiflow exists, then by the same reasoning as in corollary \ref{thm:MMIproof}, the MMI inequality will hold. But we already know that this is not always the case; indeed, we saw a counterexample in subsection \ref{sec:pointparticle} in the form of four point particles in a 4-party GHZ state. Therefore, it cannot be that a quantum max multiflow always exists. Indeed, for the 4-party GHZ example, every particle has entropy $\ln2$; therefore, a max multiflow would have flux $\ln 2$ entering or leaving each particle. But such a multiflow does not exist, since putting flux $\ln2$ from any particle to any other particle blocks any flux from entering or leaving the other two particles.

Nonetheless, under certain conditions, one might expect a max quantum multiflow to exist. The following conjectures, which we leave to future work to prove or disprove, give two examples of such conditions.

\begin{conjecture}
Given a quantum holographic setup, with the boundary partitioned into three regions $A_1,A_2,A_3$, there exists a strict quantum multiflow $v_{ij}$ such that, for $i=1,2,3$, \eqref{mmfdef} holds.
\end{conjecture}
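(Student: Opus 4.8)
The plan is to prove this the same way MMI follows from the classical max multiflow in corollary \ref{thm:MMIproof}, so it suffices to construct the flow fields $v_{ij}$ directly; equivalently, by Lagrange dualization, to show that the natural concave program whose objective is $\sum_{i}\int_{A_i}n\cdot v_i$ (with $v_i:=\sum_j v_{ij}$) attains the value $\sum_{i=1}^3 S(A_i)$. The key simplification in the three-party case is that there are only three independent component fields, $v_{12},v_{13},v_{23}$, and $v_i$ differs from $v_j$ only in the sign of the single field $v_{ij}$; so the constraints can be organized around the three ``individual'' flows $v_1,v_2,v_3$ with $v_1+v_2+v_3 = 2(v_{12}+\text{cyclic})$ having a controlled divergence and norm. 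First I would set up the concave program
\begin{equation}
\text{Maximize } \sum_{i=1}^3\int_{A_i}n\cdot v_i \text{ over } \{v_{ij}\} \text{ subject to the strict quantum multiflow constraints,}
\end{equation}
note that $v=0$ gives Slater's condition, and dualize. As in the proofs of theorems \ref{thm:S(A)strict} and \ref{thm:strictnesting}, I expect the dual to be a covering-type problem: three scalar potentials $\phi_i$ with $\phi_i|_{\partial\Sigma}=\chi_{A_i}$, a single density bound coupling $\sum_{i<j}|\nabla\phi_i - \nabla\phi_j|$-type combinations (reflecting $\sum_{i<j}|v_{ij}|\le 1/4\GN$), and measures $\mu_\pm$ enforcing $\sum_{i<j}|\int_r\nabla\cdot v_{ij}|\le\Sb(r)$, whose combined weight against $\Sb$ appears in the objective.

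The heart of the argument is then to show this dual program has optimal value $\sum_i S(A_i)$. I would use the disentangling lemma \ref{thm:disentanglingcont} to make the supports of the measures nested and mutually disjoint, reducing the $\Sb$ contribution to integrals over honest homology regions in $\RR_{A_i}$; simultaneously, I would argue that the optimal potential configuration can be taken to be the one built from step functions on the three QES homology regions $r_{A_1}, r_{A_2}, r_{A_3}$, so that the area contribution collapses to $\sum_i |\gamma_{A_i}|/4\GN$ and the total objective becomes exactly $\sum_i\big(|\gamma_{A_i}|/4\GN + \Sb(r_{A_i})\big) = \sum_i S(A_i)$. The classical version of precisely this ``nested homology regions are jointly optimal'' statement is the content of theorem \ref{thm:mmf}, so the classical skeleton of the argument is already available; what must be added is that the bulk-entropy terms associated with the three regions can be accommodated simultaneously. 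Here strong subadditivity of $\Sb$ is the essential input — just as in all the preceding theorems — since the disentangling step converts a generic measure into nested data at no cost in $\int\Sb$, and for three regions the SSA/weak-monotonicity inequalities \eqref{SSA}, \eqref{WM} are exactly enough to disentangle any configuration into at most three nested chains aligned with $A_1, A_2, A_3$.

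The main obstacle I anticipate is the interplay between the single shared norm bound $\sum_{i<j}|v_{ij}|\le 1/4\GN$ and the single shared divergence bound $\sum_{i<j}|\int_r\nabla\cdot v_{ij}|\le\Sb(r)$: on the dual side this means the area term and the $\Sb$ term are each ``shared'' across the three potentials, and one must verify that routing the three flows through their respective QES homology regions does not force the shared budgets to be exceeded — i.e., that there is genuinely a joint configuration, not merely three separate optimal single-party flows. In the classical case this is handled by Cui--Freedman--Headrick--et al.'s nesting/crossing argument \cite{Cui:2018dyq} (theorem \ref{thm:mmf}); adapting it here requires checking that the extra divergences needed to ``jump across'' each QES can be packed into the measures $\mu_\pm$ compatibly for all three regions at once, which is where an explicit appeal to SSA of $\Sb$ (rather than just positivity and subadditivity) is needed. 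I would therefore expect the proof to mirror theorem \ref{thm:strictnesting} closely, with theorem \ref{thm:mmf} supplying the classical piece and lemma \ref{thm:disentanglingcont} plus SSA supplying the quantum piece; the residual difficulty, and the reason this is stated as a conjecture rather than a theorem, is presumably that making the three-way joint packing fully rigorous in the measure-theoretic continuum setting is delicate — and the analogous obstruction for four or more parties is genuine, as the GHZ counterexample shows.
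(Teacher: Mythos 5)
There is nothing in the paper to compare your argument against: this statement is explicitly left as a conjecture, which the authors ``leave to future work to prove or disprove.'' So the relevant question is whether your proposal actually closes the gap, and it does not --- it is a research plan whose decisive step is asserted rather than proved. The heart of your sketch is the claim that, after dualizing the multiflow program, SSA and weak monotonicity ``are exactly enough to disentangle any configuration into at most three nested chains aligned with $A_1,A_2,A_3$.'' That is precisely the open problem. Lemma \ref{thm:disentanglingcont} applies to a single pair of measures $\mu_\pm$ arising from one flux constraint with one boundary condition $\phi|_{\partial\Sigma}=\chi_A$ (or the multi-region but still single-flow version in theorem \ref{thm:strictnesting}); the dual of the multiflow program has a genuinely different structure, with one potential per pair (or per party) coupled through the \emph{shared} budgets $\sum_{i<j}|v_{ij}|\le 1/4\GN$ and $\sum_{i<j}\bigl|\int_r\nabla\cdot v_{ij}\bigr|\le \Sb(r)$, and no analogue of the disentangling lemma is established for that setting. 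You neither write down this dual nor show that its optimum collapses to $\sum_i S(A_i)$; you only say you ``expect'' it to.

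Two further points make the gap concrete. First, the classical input you lean on, theorem \ref{thm:mmf}, is not itself proved by a simple nesting argument --- the max multiflow theorem of \cite{Cui:2018dyq} is a substantial result --- so ``the classical skeleton is already available'' does not reduce the quantum case to bookkeeping; the new difficulty is exactly how the jump budget $\Sb(r)$ is allocated among the pairwise components $v_{ij}$ when the three flows must simultaneously jump across three different quantum extremal surfaces. Second, any correct proof must use the three-party structure in an essential, quantitative way, since the $4$-party GHZ state shows the statement is false for $n\ge 4$; your argument never identifies the step that would succeed for three parties but fail for four (SSA and WM hold in both cases), which is a strong sign that the mechanism of the proof has not yet been located. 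As it stands, the proposal is a plausible strategy outline, consistent with how theorems \ref{thm:S(A)strict} and \ref{thm:strictnesting} are proved, but the conjecture remains open after it.
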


\begin{conjecture}
Given a quantum holographic setup, with the function $\Sb$ obeying the MMI inequality
\be
\Sb(r_1r_2)+\Sb(r_2r_3)+\Sb(r_1r_3)\ge \Sb(r_1)+\Sb(r_2)+\Sb(r_3)+\Sb(r_1r_2r_3)
\ee
for all disjoint $r_1,r_2,r_3\in\RR$, and with the boundary partitioned into regions $A_1,\cdots,A_n$, there exists a strict quantum multiflow $v_{ij}$ such that, for all $i=1,\ldots,n$, \eqref{mmfdef} holds.
\end{conjecture}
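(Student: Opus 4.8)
The plan is to prove this conjecture by the Lagrangian-duality strategy used throughout this section, with the disentangling lemma \ref{thm:disentanglingcont} (which invokes SSA) replaced by a multiparty analogue that invokes the MMI hypothesis on $\Sb$, in exact parallel with the way MMI enters the classical max multiflow theorem \ref{thm:mmf}. The easy half is immediate: for any strict quantum multiflow each $v_i:=\sum_j v_{ij}$ is a strict quantum flow, since $\{i,j\}$ ranges over a subset of all unordered pairs, so $\sum_{j\neq i}|v_{ij}|\le\sum_{k<l}|v_{kl}|\le 1/4\GN$ and $|\int_r\nabla\cdot v_i|\le\sum_{j\neq i}|\int_r\nabla\cdot v_{ij}|\le\Sb(r)$; hence $\int_{A_i}n\cdot v_i\le S(A_i)$ by theorem \ref{thm:S(A)strict}. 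So the concave program of maximizing $\sum_i\int_{A_i}n\cdot v_i$ over strict quantum multiflows has value at most $\sum_i S(A_i)$; the objective is linear in the $v_{ij}$ and $v=0$ is strictly feasible, so strong duality holds, and the conjecture is equivalent to the claim that this bound is attained.

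Next I would dualize. The two nonlocal constraints, $\sum_{i<j}|v_{ij}|\le 1/4\GN$ pointwise on $\Sigma$ and $\sum_{i<j}|\int_r\nabla\cdot v_{ij}|\le\Sb(r)$ for each $r\in\RR$, are $\ell^1$-type bounds on the tuple $(v_{ij})_{i<j}$; each is the conjunction of the linear constraints obtained by replacing the absolute values with signs $\varepsilon_{ij}\in\{\pm1\}$. Introduce a scalar multiplier $\psi\ge0$ for the norm bound and, for each sign pattern $\varepsilon=(\varepsilon_{ij})_{i<j}\in\{\pm1\}^{\binom n2}$, a measure $\mu^\varepsilon\ge0$ on $\RR$ for the corresponding divergence bound. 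Setting $\phi_{ij}:=\sum_\varepsilon\varepsilon_{ij}\int_\RR d\mu^\varepsilon(r)\,\chi_r$ and following the manipulations in the proofs of theorems \ref{thm:S(A)strict} and \ref{thm:strictnesting}, minimizing the Lagrangian over the $v_{ij}$ should yield a dual program of the form: minimize $\frac1{4\GN}\int_\Sigma\psi+\sum_\varepsilon\int_\RR d\mu^\varepsilon(r)\,\Sb(r)$ over $\psi$ and $\{\mu^\varepsilon\}$ subject to $\psi\ge\max_{i<j}|\nabla\phi_{ij}|$ and boundary conditions fixing the $\phi_{ij}|_{\partial\Sigma}$ so as to reproduce the pattern $\sum_i\chi_{A_i}$; one then eliminates $\psi$ by saturation and applies the coarea formula to the $|\nabla\phi_{ij}|$ terms.

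What remains is to evaluate this dual minimum and show it equals $\sum_i S(A_i)$, and this is where I expect the real difficulty to lie. One needs a multiflow generalization of lemma \ref{thm:disentanglingcont}: an argument that the optimal $\mu^\varepsilon$ may be taken supported on a structured family of regions — the homology regions of a classical max multiflow configuration on $\Sigma$, ``dressed'' by $\Sb$ — so that the objective decomposes as $\sum_i\min_{r\in\RR_{A_i}}\bigl(|\eth r|/4\GN+\Sb(r)\bigr)=\sum_i S(A_i)$. Removing the overlaps among the regions carried by different $\mu^\varepsilon$ and arranging them into the max-multiflow cut pattern without increasing $\sum_\varepsilon\int d\mu^\varepsilon\,\Sb$ is precisely where MMI of $\Sb$ enters, in the role SSA plays in lemma \ref{thm:disentanglingcont} and weak monotonicity plays in removing overlaps. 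Already in the classical case the corresponding statement (theorem \ref{thm:mmf}, proved in \cite{Cui:2018dyq}) is not a one-line duality computation but a delicate combinatorial construction of the cut/flow pattern, and the quantum version must carry $\Sb$ through every step of that construction and invoke MMI each time. If pushing the duality through proves unwieldy, the alternative I would pursue is to adapt the construction of \cite{Cui:2018dyq} directly, building the $v_{ij}$ iteratively while tracking the induced divergences and using MMI of $\Sb$ to control the ``bulk entropy budget'' $\Sb(r)$ available at each stage.

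As a consistency check, in a double holography setup the conjecture follows immediately and ``uses MMI'' through theorem \ref{thm:mmf} itself: apply the classical max multiflow theorem to the combined geometry (a straightforward extension of theorem \ref{thm:mmf} to double-flow setups, along the lines of theorem \ref{thm:doublemaxflow}) to obtain a classical double max multiflow $\{(v_{ij},\tilde v_{ij})\}$ with $\nabla\cdot v_{ij}=-\tilde n\cdot\tilde v_{ij}$; for each sign pattern $\varepsilon$, $\sum_{i<j}\varepsilon_{ij}\tilde v_{ij}$ is a classical flow on $\tilde\Sigma$, so its flux over any $r\subseteq\Sigma$ is at most $\Sb(r)$ by max flow--min cut, and maximizing over $\varepsilon$ gives $\sum_{i<j}|\int_r\nabla\cdot v_{ij}|\le\Sb(r)$, while the norm bound and the saturation conditions \eqref{mmfdef} are inherited directly. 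This does not settle the conjecture, since a general $\Sb$ obeying MMI need not be geometrically realizable — realizability requires all holographic entropy cone inequalities, not just MMI — but it confirms the statement on a natural subclass and pins down the expected role of the MMI hypothesis. The three-party conjecture above should be comparatively tractable by the same plan, the disentangling step being minimal for three parties.
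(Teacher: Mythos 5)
There is no proof in the paper to compare against: the statement you are addressing is explicitly left as a conjecture (``The following conjectures, which we leave to future work to prove or disprove\ldots''), so the standard you must meet is a complete argument, and your proposal does not supply one. The easy direction is fine --- each $v_i:=\sum_j v_{ij}$ is indeed a strict quantum flow, so $\sum_i\int_{A_i}n\cdot v_i\le\sum_i S(A_i)$ --- but that inequality is never in doubt; the entire content of the conjecture is the attainability of the bound. Your plan reduces attainability to two steps, and both are left open. First, the dual program is only asserted (``should yield a dual program of the form\ldots''); with the sign-pattern multipliers $\mu^\varepsilon$, the boundary constraints $n\cdot v_{ij}|_{A_k}=0$, and the single multiplier $\psi$ for the $\ell^1$ norm bound, the actual form of the dual (in particular the boundary conditions on the $\phi_{ij}$ and the correct pointwise condition replacing $\psi\ge\max_{i<j}|\nabla\phi_{ij}|$) needs to be derived, not guessed. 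Second, and decisively, the step where the dual minimum is shown to equal $\sum_iS(A_i)$ is exactly where you say ``I expect the real difficulty to lie'': the required multiparty disentangling lemma --- rearranging the supports of the $\mu^\varepsilon$ into a max-multiflow cut pattern without increasing $\sum_\varepsilon\int d\mu^\varepsilon\,\Sb$, using MMI of $\Sb$ --- is neither stated precisely nor proved. Since this is the only place the MMI hypothesis enters, and since even the classical analogue (theorem \ref{thm:mmf}) required the nontrivial construction of \cite{Cui:2018dyq} rather than a duality one-liner, identifying this step as ``where MMI enters'' is a research plan, not a proof.

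The double-holography ``consistency check'' is also weaker than stated: it rests on ``a straightforward extension of theorem \ref{thm:mmf} to double-flow setups,'' which is itself unproven (the density bound on $\Sigma\subset\partial\tilde\Sigma$ means one cannot simply quote the classical max multiflow theorem on $\tilde\Sigma$; a dualization along the lines of theorem \ref{thm:doublemaxflow}, but now with the multiflow structure, would have to be carried out). It is a reasonable plausibility argument and correctly notes that geometric realizability is strictly stronger than MMI, but it does not close any part of the gap. To turn the proposal into a proof you would need, at minimum: (i) the explicit dual of the multiflow program with all boundary constraints handled, and (ii) a stated and proved multiparty analogue of lemma \ref{thm:disentanglingcont} in which the MMI inequality for $\Sb$ is the operative hypothesis --- or, following your alternative route, a genuinely worked-out quantum adaptation of the iterative construction of \cite{Cui:2018dyq} with the bulk-entropy budget tracked at each stage.
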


\section{Cutoff-independent flow prescriptions} \label{sec:cutin}

\subsection{The prescriptions}

In this section, we will discuss and derive cutoff-independent prescriptions, whose constraints are in terms of the generalised entropies of regions. In each of these prescriptions, we maximise the flux of $v$ through $A$, but with different possible cutoff-independent constraints on $v$. 

\subsubsection{Strict \& loose prescriptions}

There are two cutoff-independent prescriptions that we will prove are equivalent to the QES prescription. 

1) \textit{Loose, cutoff-independent quantum bit threads}:
\bne S(A) = \sup_v \int_A v \qquad \text{subject to:} \quad \forall r\in\RR_A : \int_{\eth r} |v| - \int_r \nabla \cdot v \leq \Sgen  (r) \label{eq:loocu} \ene

2) \textit{Strict, cutoff-independent quantum bit threads}:
\bne S(A) = \sup_v \int_A v \qquad \text{subject to:} \quad \forall r\in\RR : \int_{\eth r} |v| + \left| \int_r \nabla \cdot v \right |\leq \Sgen  (r) \label{eq:strcu}\ene

The strict prescription is stricter than the loose prescription, both because we take the absolute value of the integrated divergence term, and because the constraint is applied to the larger set of bulk subregions $\RR \supset \RR_A$.

Both prescriptions only have a single constraint each. These constraints are cutoff-independent, because $\Sgen (r)$ is cutoff-independent, in contrast to $G_N$ and $S_b$ that appear in the constraints of the cutoff-dependent prescriptions. 

Later in this section, we will prove that both~\eqref{eq:loocu} and~\eqref{eq:strcu} are equivalent to the QES prescription
and therefore equivalent to each other. 
The equivalence between flow prescriptions is non-trivial: for the strict constraint, the space of allowed flows is smaller than that allowed by the loose constraint, so the maximal flux clearly cannot be larger with the strict constraint, but it is not obvious that it is the same. 
Even though the set of allowed flows becomes smaller, $\sup_v \int_A v$ stays the same and is equal to the optimum of the QES formula: $S(A)$.

\subsubsection{Comparison to the cutoff-dependent prescriptions} \label{sec:compa}

The cutoff-independent prescriptions have only one constraint, while the cutoff-dependent prescriptions have two.
 
The cutoff-independent prescriptions are looser than their cutoff-dependent cousins: every allowed flow in the loose/strict cutoff-dependent prescription is an allowed flow in the loose/strict cutoff-independent prescription. 
The inequality constraint in the loose/ strict cutoff-independent prescription equals the sum of the inequality constraints in the loose/strict cutoff-dependent prescription, after integrating $|v| \leq 1/4G_N$ over $\eth r$, and the sum of any two inequalities is a looser constraint than the two inequalities applied separately. 

The cutoff-independent prescriptions do not capture the area and bulk area pieces of $\Sgen $ separately; they capture the whole of $\Sgen $ directly. In contrast, in the cutoff-dependent prescription, for a flux-maximising flow, it is the constraint on the norm of $v$ that causes the maximal flux to capture the area piece of $\Sgen (r) = |{\eth r}|/4G_N + S_{bulk}(r)$, and it is the constraint on the divergence of $v$ that accounts for the bulk entropy piece. 

\subsubsection{Other prescriptions} \label{sec:other_prescriptions}

A third cutoff-independent prescription that we will analyse applies the loose constraint from~\eqref{eq:loocu} plus an additional constraint that $v$ be divergenceless. 

3) \textit{Loose, divergenceless, cutoff-independent quantum bit threads:}

\bne S(A) = \sup_v \int_A v \qquad \text{subject to:} \quad \nabla \cdot v = 0, \quad \forall r\in\RR_A : \int_{\eth r} |v|  \leq \Sgen  (r). \label{eq:divloocu} \ene

We did not list this with the loose and strict constraints because it is not always QES-equivalent, though it is generically; in section~\ref{sec:dsfpd} we will derive the surface-based prescription that is equivalent to~\eqref{eq:divloocu} and see precisely when it is QES-equivalent. 

In section~\ref{sec:dsfpd}, we will also see that the \textit{strict} version of~\eqref{eq:divloocu} is not generically QES-equivalent: if we modified the norm bound constraint in~\eqref{eq:divloocu} to apply to the larger set of bulk subregions $\RR$, rather than $\RR_A$, then the constraints are too strict.

Flows that obey the loose constraint and are also divergenceless can be interpreted as the most ``classical'' of the flows allowed by the loose constraints. Flows in the loose cutoff-independent prescription allow for both highly ``quantum'' flow configurations (all the threads jumping over the QES and none passing through), and highly ``classical'' flows (as divergenceless as allowed, with the maximal flow passing through the QES), and everything in between these two extremes.

There are also cutoff-independent flow prescriptions that are similar-looking to the three we have discussed so far, but are trivially equivalent to the QES prescription, and we need to underscore what is different between the trivial and non-trivial prescriptions.
An example of such a trivial prescription is
\bne S(A) = \sup_v \int_A v, \qquad \text{subject to } \forall r \in \cR_A: \int_A v \leq \Sgen  (r). \label{eq:trivp} \ene
The flow field $v$ in this trivial flow prescription is totally unconstrained, except on the boundary subregion $A$. The flow field can do absolutely anything in the interior of the bulk, so any physical interpretation of the bit threads is gone, including that they capture the entanglement structure of entangled bulk fields, and that the QES is a bottleneck to the flow. Furthermore, physical properties of the bulk theory, such as strong subadditivity of $\Sgen $, are not needed to prove the equivalence of the trivial flow prescription to the QES prescription. For these reasons, we do not consider a trivial prescription like~\eqref{eq:trivp} to be an interesting or useful bit thread prescription. In contrast, in our prescriptions~\eqref{eq:loocu}, \eqref{eq:strcu}, and~\eqref{eq:divloocu}, as we will see, the flow field $v$ is strongly constrained in the bulk interior, the QES does act as a bottleneck, and proving the equivalence to the QES prescription requires bulk SSA and is non-trivial. 

In between the strict and loose prescriptions, one can also consider flow prescriptions with ``intermediate'' cutoff-independent constraints: (1)  with the absolute value on the $\int_r \nabla \cdot v$ term but only applying to all $r \in \cR_A$, and (2) no absolute value but applying to all $r \in \cR$. Every strict flow is an intermediate flow, and every intermediate flow is a loose flow. Since loose and strict flows both have maximum flux equal to $S(A)$, i.e. are QES-equivalent, the intermediate prescriptions are trivially also QES-equivalent. 
For these reasons, we do not consider the intermediate prescriptions to be sufficiently different from the loose and strict prescriptions to be of interest and will not discuss them further. 

\subsection{Strict, cutoff-independent flows}
We will determine properties of flows that satisfy the strict constraint because doing so will help us understand how the cutoff-independent prescriptions are equivalent to the QES prescription. 

Flows obeying the strict constraint automatically also obey the loose constraint; allowed strict flows are a subset of allowed loose flows.

\subsubsection{Different ways of writing the strict constraint} The strict constraint is:
\bne \forall r \in \cR: \qquad \int_{\eth r} |v| + \left| \int_r \nabla\cdot v \right| \leq \Sgen (r) .  \label{eq:stsgb} \ene

For the subset of bulk regions that are disjoint from $\del \Sigma$, $r\in \cR_{\varnothing} \subset \cR$, we have that $\eth r = \del r$, and~\eqref{eq:stsgb} can be written in a form that highlights that the constraint is on the behaviour of $v$ on the boundary of every bulk region:
\bne \forall r \in \cR_\varnothing: \qquad \int_{\del r} |v|+ \left | \int_{\del r} v \right| \leq \Sgen (r). \ene

It will also be useful that~\eqref{eq:stsgb} is equivalent to the pair of inequalities
 \bne \int_{\eth r} |v| + \int_{\eth r} v - \int_{r\cap \del \Sigma}v \leq \Sgen (r)  \label{eq:someq} \ene
 and
 \bne \int_{\eth r} |v| - \int_{\eth r} v + \int_{r\cap \del \Sigma}v\leq \Sgen (r)  \label{eq:someq2}. \ene 

We can also write this in terms of the normal and tangent components of $v$ on $\eth r$:
\bne \int_{\eth r}\left ( \sqrt{v_t^2 + v_n^2} \pm v_n \right ) \mp \int_{r\cap \del\Sigma} v \leq \Sgen (r). \ene
For a given $r$, this constraint acts to suppress flows with large components tangent to $\eth r$. This dovetails with a property of strict flows that we will determine next, that the max-flow must be normal to the QES on the QES.

\subsubsection{The strict constraint is saturated on the entanglement wedge
} We can show that the strict constraint~\eqref{eq:stsgb}  is tight. It is saturated for $\eth r$ equal to the QES when the flow is maximal, i.e. when $\int_A v = \Sgen ({\eth r}_{_{QES}})$. This is similar to how, for classical bit threads, the norm bound $|v| \leq 1/4G_N$ is saturated on the RT surface. 

We can actually show something more general that implies what we want: if $\int_A v = \Sgen  ({ r})$, for any ${\eth r}$ homologous to $A$, then~\eqref{eq:stsgb} is saturated for that ${\eth r}$, and $v$ is perpendicular to ${\eth r}$. To show this, we apply~\eqref{eq:stsgb} to any ${\eth r}$ homologous to $A$ and with $\int_A v = \Sgen  ({ r})$ gives
\bne \label{eq:corgb} \int_{\eth r} |v| + \left| \int_{\eth r} v - \Sgen  ({ r})\right| \leq \Sgen  ({r}) .\ene
This is equivalent to the pair of inequalities
\bne \int_{\eth r} |v| - \int_{\eth r} v \leq 0 \label{eq:ineon} \ene
and
\bne \int_{\eth r} |v| + \int_{\eth r} v \leq 2 \Sgen  ({r}). \label{eq:oteqn} \ene
If either one of these is saturated then~\eqref{eq:corgb} must be saturated. 
The inequality~\eqref{eq:ineon} together with the identity $\int_{\eth r} |v| - \int_{\eth r} v \geq 0$ imply that~\eqref{eq:ineon} is saturated, and therefore that (1) $v$ on ${\eth r}$ is parallel to the normal, and (2) that~\eqref{eq:corgb} is saturated.

\subsubsection{Constraints implied by the strict constraint} \label{sec:cibtsc}
\paragraph{Non-local norm and divergence bounds.} 
Using $\int_{\eth r} |v| \geq 0$ in~\eqref{eq:stsgb} gives an upper and lower bound on the divergence of $v$ in any bulk subregion:
\bne \left|\int_r \nabla \cdot v \right| \leq \Sgen  (r) \label{eq:indiv} \ene

Using $|\int_r \nabla \cdot v | \geq 0$ in~\eqref{eq:stsgb} gives%
\footnote{Two other ways to derive this: (1) using~\eqref{eq:jussq} in~\eqref{eq:someq} gives
	\bne \int_{\eth r} |v| + \int_{\eth r} v \leq 2 \Sgen  (r) \implies \int_{\eth r} v \leq \Sgen  (r) \ene
	and 
	(2) add the pair of inequalities together.
	}
\bne \int_{\eth r} |v| \leq \Sgen  (r) \label{eq:nlnbd} \ene 
which also implies $\int_{\eth r} v \leq \Sgen  (r)$.
So, one implication of~\eqref{eq:stsgb} is that the flux through any ${\eth r}$ is upper bounded by its generalised entropy.

From~\eqref{eq:stsgb}, for a given $r$,~\eqref{eq:indiv} and~\eqref{eq:nlnbd} cannot be saturated simultaneously, unless the generalised entropy is zero. If~\eqref{eq:nlnbd} is saturated, then we must have $\int_r \nabla\cdot v = 0$, i.e. the net flux into $r$ is zero. If~\eqref{eq:indiv} is saturated, then $\int_{\eth r} |v| = 0$, so $v=0$ on $\eth r$, and $|\int_{r\cap \del\Sigma} v| = \Sgen (r)$, which can only be true if $r$ is the entanglement wedge for $r\cap \del \Sigma$ and so $|\int_{r \cap \del \Sigma} v| = S(r\cap \del \Sigma)$. 

\textbf{Bound on boundary flux.} Using $\pm \int_{\eth r} v - \int_{\eth r} |v| \leq 0$,~\eqref{eq:someq2} implies that
\bne \left| \int_{r\cap \del \Sigma} v \right| \leq \Sgen  (r) \label{eq:jussq} \ene
which is consistent with the fundamental property of bit thread prescriptions: $\sup_v \int_{r\cap \del \Sigma} v  = S(r\cap \del \Sigma) = \inf_r \Sgen (r)$.

\textbf{A local norm bound from the strict constraint applied to small regions.} We can derive a local norm bound on $|v|$. Applying the constraint~\eqref{eq:nlnbd} to a region sufficiently small that $|v|$ is approximately constant gives a local upper bound on the norm of $v$:
\bne \begin{split} |v(x)| &\leq \inf_{\eth r \ni x} \frac{\Sgen (r)}{|\eth r|}\\
&=  \frac{1}{4G_N} + D S_b(x) , \qquad D S_b (x) =  \inf_{\eth r \ni x } \frac{S_{b} (r)}{|\eth r|} \label{eq:locnb} \end{split} \ene 

We minimise over the set of small $\eth r$ to make the bound as tight as possible, but we only minimise over those $\eth r$ whose scale is much larger than the UV cutoff. The bound~\eqref{eq:locnb} depends on just how small we allow $\eth r$ to be.

In the classical limit,~\eqref{eq:locnb} reduces to the classical bit thread norm bound $|v| \leq \frac{1}{4G_N}$. 
For generic, weakly entangled states, $DS_b (x)$ is a small correction to the classical norm bound $|v| \leq 1/4G_N$. This is important because, as in the classical case, up to small corrections, 
we can think of the threads as tubes that have a cross-sectional area equal to a quarter in Planck units that limits how closely they can be packed together.

In appendix \ref{sec:gen_entropy}, we discuss $\Sgen $ and $D S_b (x)$; in particular, their cutoff-dependence and finiteness.

The upper bound~\eqref{eq:locnb} is useful for our intuition because it is local; however, it is not necessarily tight, even for maximal flows on the QES, especially for highly entangled bulk states, and when there is an entanglement island, because small local regions do not know the global entanglement structure.  

\subsubsection{Maximally classical \& quantum flows}

The set of allowed flows in the strict/loose cutoff-independent prescription is large: it is a superset of the flows allowed in the strict/loose cutoff-\textit{dependent} prescriptions, as explained in section~\ref{sec:compa}.

The most ``classical" maximal flow configurations are those where the flow is as divergenceless as possible, and the flux through $A$ equals the flux through the QES, i.e. $S(A)$ is captured by the bit threads passing through the QES. The local norm bounds $|v| \lesssim 1/4G_N$ tell us that the flux through the QES must be approximately evenly distributed over the QES. 

Nothing we have derived from~\eqref{eq:stsgb} forbids having maximal flows with $v = 0$ on QES. These are the most ``quantum'' flow configurations, with all the threads jumping over the QES. But where these threads start and end in the bulk is constrained. Any threads that end in the entanglement wedge of $A$ have to reappear in the complementary entanglement wedge, for pure bulk states, because of~\eqref{eq:indiv}. Where the threads end in the entanglement wedge is also constrained by~\eqref{eq:indiv}, they cannot all end in an infinitesimal region, and there is a limit to how closely they can be packed together, as can be seen from, for example, the local norm bound~\eqref{eq:locnb}.
Fig.~\ref{fig:class_quant} depicts the flows that we have been discussing.

\begin{figure}
    \centering
    \includegraphics[width=0.99\linewidth]{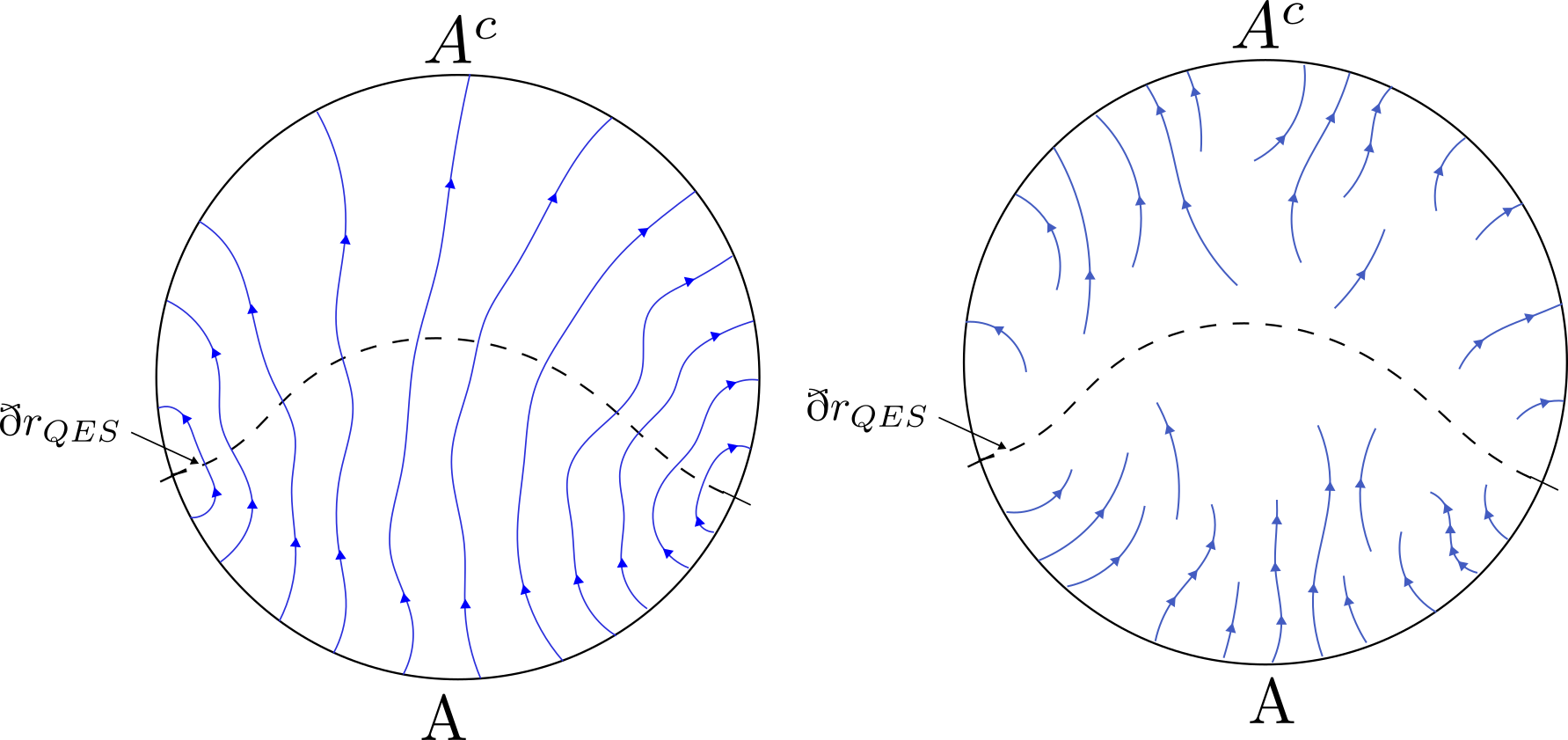}
    \caption{Opposite extremes in the set of max-flows allowed by the strict cutoff-independent constraint~\eqref{eq:stsgb}. Left: the most classical, with zero divergence, and all the threads passing through the QES. Right: the most quantum, with all the threads jumping over the QES. Both max-flows have the same flux on $A$: $S(A)$.}
    \label{fig:class_quant}
\end{figure}

If we decrease the bulk cutoff length $\epsilon$, then the bulk entropies increase, while the area piece of the $\Sgen $ decreases, precisely to keep $\Sgen $ $\epsilon$-independent. In the cutoff-dependent prescriptions, this change allows more threads to jump over the QES and fewer to pass through it, but the max flux through $A$ is unaffected. The flow becomes less classical, in the sense that we have used the term in this subsection. 

Furthermore, there is a similarity between the set of all allowed flows in the cutoff-independent prescriptions, and the unions of sets of all allowed flows in the cutoff-dependent prescriptions across all cutoff lengths $\epsilon$, allowing for a spatially-varying cutoff\footnote{In the cutoff-dependent prescriptions, if $\epsilon$ is constant on $\Sigma$, then $|v|$ must be constant for a max flow on the QES; this is not required in the cutoff-independent prescriptions, so the set of allowed flows is definitely not the same.}. The sets are not identical, as there are always threads passing through the QES in the cutoff-dependent prescriptions for any finite ($\epsilon$-dependent) $G_N$.

\subsection{Divergenceless, loose, cutoff-independent flows}

The divergenceless, loose, cutoff-independent prescription is~\eqref{eq:divloocu}. These divergenceless flows have similarities and differences with respect to the strict flows. The divergenceless flow lines cannot start or end at points in the bulk. From their definition, the flows obey~\eqref{eq:nlnbd}, just like the strict flows, but only for $r \in \RR_A$. The divergenceless flows do obey the same local norm bound~\eqref{eq:locnb} as the strict flow, but we will have to derive it another way, because in this prescription we cannot apply~\eqref{eq:nlnbd} to arbitrary bulk regions.

The constraint $\int_{ \eth r} |v| \leq \Sgen (r)$ is tight: it is saturated for a maximal flow when $r = r_{QES}$. 
This is because $\Sgen (r_{QES}) = S(A)$, which equals $\int_A v$ because $v$ is maximal, which equals $\int_{\eth r_{QES}} v$ using the divergencelessness of $v$. The constraint becomes $\int_{\eth r_{QES}} |v| \leq \int_{\eth r_{QES}} v$, which can only be satisfied if (1) the constraint is saturated for $r = r_{QES}$, and (2) $v$ on the QES is perpendicular to the QES.

We can derive a local norm bound for the divergenceless flows, that is the same as~\eqref{eq:locnb}, by using that the inequality constraint~\eqref{eq:divloocu} is saturated for max-flows when applied to $r = r_{QES}$. We subtract $\int_{\eth r_{QES}} |v| = \Sgen (r_{QES})$ from the inequality constraint applied to $r = r_{QES} \cup B$, where $B$ is a small ball region anywhere in the interior%
\footnote{We require it to be the interior because it is important for the argument that $B \cap \eth r_{QES} = \varnothing$.}
 of $r_{QES}^c$, and, using subadditivity of generalised entropies, this gives
\bne \int_{\del B} |v| \leq \Sgen  (B). \ene
Dividing this through by the surface area of the ball $|\del B|$, and taking $|\del B|$ to be sufficiently small that $|v|$ is approximately constant, gives
\bne  \qquad |v| \leq \frac{1}{4 G_N} + DS_b(x). \label{eq:locnb3} \ene
for all $ x \in r_{QES}^c$, with $DS_b (x)$ defined in~\eqref{eq:locnb}.
Running the same argument by subtracting a ball region from the interior%
\footnote{Again, interior is important so that $B \cap \eth r_{QES} = \varnothing$, but also so that $B \cap \del \Sigma = \varnothing$ because the argument needs that $r_{QES} \ B \in \RR_A$.}
 of $r_{QES}$ gives the same inequality applied to all $x \in r_{QES}$, and so the local norm bound~\eqref{eq:locnb3} holds for all $x \in \Sigma$. 

For a weakly entangled bulk state, $\Sgen  (r) \approx |\eth r|/4G_N$ for all $r \in R$, and the bulk entropy correction to the norm bound in~\eqref{eq:locnb3} to the classical norm bound is subleading, so~\eqref{eq:locnb3} becomes $|v| \lesssim 1/4G_N$ on the QES. Furthermore, in order that $\int_{\eth r_{QES}} v = \Sgen (r_{QES}) \approx |\eth r_{QES}|/4G_N$, we must have $|v| \approx 1/4G_N$ on the QES, so~\eqref{eq:locnb3} must be approximately saturated, which shows that~\eqref{eq:locnb3} is reasonably tight, and that the QES acts as a bottleneck to the flow.
$|v| \approx 1/4G_N$ on the QES with a small correction, and this additional flux density is how the bulk entropy contribution to $S(A)$ is captured. 

We can change the bulk state to increase bulk entropies, and then~\eqref{eq:locnb3} allows for a larger $|v|$, though, of course, this does not necessarily mean that the maximal flow makes use of this; it depends on whether~\eqref{eq:locnb3} is saturated at a point on the QES, and whether the bound is weakened at that point in the new state. 

When we Lagrange dualise this flow prescription in section~\ref{sec:dsfpd}, we will see when it is QES-equivalent. One sufficient but not necessary condition is that $A$ has a boundary.

\subsubsection{Islands}
Maximal flows in this divergenceless prescription cannot enter the island. Using that $\frac{v}{|v|} = n$ on $\del I$, we have that $0= \int_I \nabla \cdot v = \int_{\del I} |v|$, so $v = 0$ on $\del I$. See Fig.~\ref{fig:divergenceless} for a max-flow configuration when there is an island. The boundary of the island does \textit{not} act as a bottleneck to the divergenceless flow; the flow goes around the island. The importance of the island to the flow is that $\Sgen (a\cup I) < \Sgen (a)$, so that, compared to $r =a$, the inequality constraint in~\eqref{eq:divloocu} applied to $r = a \cup I$ gives a tighter constraint on the flux density on $\eth a$.

\begin{figure}
    \centering    \includegraphics[width=0.4\linewidth]{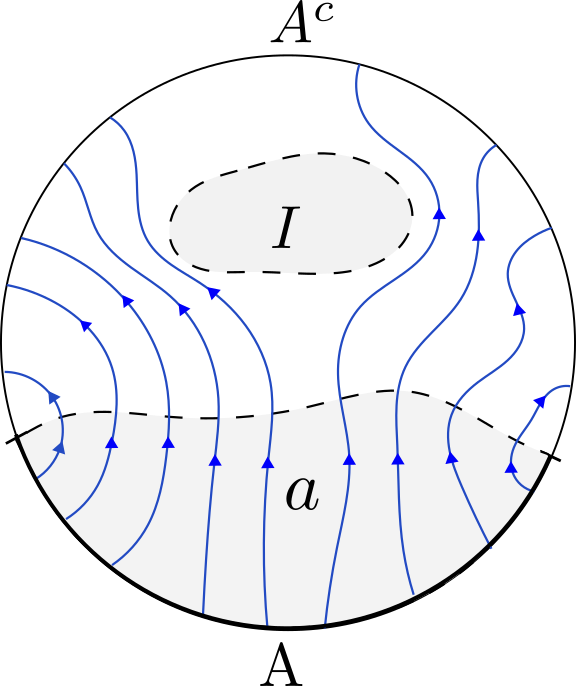}
    \caption{A maximal flow for the loose, divergenceless, cutoff-independent prescription when there is an island. The flow avoids the island.}
    \label{fig:divergenceless}
\end{figure}

The fact that $v = 0$ on $\del I$ implies that $\int_A v = \int_{\eth a} v$ and leads to an apparent tension between $|v| \lesssim 1/4G_N$ and  
\bne  \max_v \int_A v = \Sgen  (a \cup I) \geq \frac{|\eth a| + |\del I|}{4G_N}. \label{eq:somte} \ene 
There would be a contradiction if $|v| \lesssim 1/4G_N$ implied that $\int_A v = \int_{\eth a} v \leq \frac{|\eth a|}{4G_N}$, but it does not, only that $\int_{\eth a} v \lesssim \frac{|\eth a|}{4G_N}$. While the subleading terms from the bulk entropy in this inequality are smaller than $|\eth a|/4G_N$, they can be large, such as an IR divergence from the infinite volume of $a$, and sufficiently large to avoid the tension with $\int_{\eth a} v \geq \frac{|\eth a| + |\del I|}{4G_N}$.

\subsection{Proofs of equivalence to the QES prescription}
In this subsection, we will prove that our cutoff-independent flow prescriptions are equivalent to the QES prescription. We will do so using Lagrange dualisation.
\subsubsection{Loose flow prescription}
The primal problem is the loose, cutoff-independent flow prescription
\bne \sup_v \int_A v \qquad \text{subject to } \int_{{\eth r}} |v| - \int_r \nabla \cdot v \leq \Sgen  ({r}) \label{eq:concon} 
 \ene
and the Lagrangian for this problem is
\bne L[v,\mu] = \int_A v  + \int_{\cR_A} d\mu(r) \left( \int_\Sigma \left( \chi_r  \nabla \cdot v - |v| |d \chi_r| \right ) + \Sgen  (r) \right ) .\ene

We rewrite this Lagrangian by integrating the $\nabla \cdot v$ term by parts:
\bne L[v,\mu] = \int_A v - \int_{\del \Sigma} \psi v - \int_{\cR_A} d\mu \left( \int_\Sigma \left(d\chi_r \cdot v + |v| |d\chi_r| \right) - \Sgen  (r) \right), \qquad \psi := \int_{\cR_A} d\mu \chi_r .\ene

To reach the dual problem, we maximise with respect to $v$ in the bulk and on the boundary. In the bulk,
\bne \sup_v  \left ( - \int_{\cR_A} d\mu \int_\Sigma \left(d\chi_r \cdot v + |v| |d\chi_r| \right)  \right) = 0 \ene
because
\bne  \left |  \int_{\cR_A} d\mu d \chi_r \right| \leq \int_{\cR_A} d\mu |d \chi_r| \ene
for any measure $\mu$. From maximising $v$ on the boundary, we get the constraint
\bne \psi|_{\del \Sigma} = \chi_A  \label{eq:psi_constraint2}\ene
and, because $\cR_A$ is the set of bulk regions with $r\cap \del \Sigma = A$, this is equivalent to requiring $\mu$ to be a probability measure on $\cR_A$: $\int_{\cR_A} d\mu(r) = 1$.
The resulting dual problem is
\bne \inf_\mu \int_{\cR_A} d \mu (r) \Sgen  (r) , \qquad \text{subject to } \int_{\cR_A} d\mu (r) = 1, \quad \mu \geq 0 . \ene 
This is the QES prescription.

\subsubsection{Strict flow prescription: flow to surface dualisation}\label{sec:sfpfts}
The primal problem is the strict cutoff-independent flow prescription
\bne \sup_v \int_A v \qquad \text{subject to } \int_{\eth r}|v| + \left|\int_r \nabla \cdot v \right| \leq \Sgen (r) \label{eq:qsrgi} \ene
for which the Lagrangian is 
\begin{multline}
L = \int_A v + \int_{\cR} d\mu_+ (r) \left(\Sgen (r) - \int_{\eth r} |v| + \int_r \nabla\cdot v \right) + \\
\int_{\cR} d\mu_- (r) \left(\Sgen (r) - \int_{\eth r} |v| - \int_r \nabla\cdot v \right). 
\end{multline}
We use the divergence theorem on the $\nabla \cdot v$ terms, then rewrite the bulk $v$ terms as an integral over the whole Cauchy slice using
\bne -\int_{\eth r} |v| + \int_{\eth r} v = \int_\Sigma \left ( - |d\chi_r||v| + v \cdot d\chi_r \right). \ene 

We get the dual problem by maximising with respect to $v$, which in the bulk is unbounded, using the general result that
\bne \sup_v \int_\Sigma \left (A\cdot v - B |v| \right) = +\infty, \qquad \text{unless } |A| \leq B, \ene
so the supremum is unbounded unless
\bne  \left | \int_\cR d\mu_+ d\chi_r - \int_\cR \mu_- d\chi_r \right | \leq \int_\cR d\mu_+  |d\chi_r| + \int_\cR d\mu_- |d\chi_r| \ene
but this is identically true.

The dual problem is therefore
\bne \inf_{\mu_+, \mu_-} \left( \int_{\cR} d\mu_+ \Sgen (r) + \int_{\cR} d\mu_- \Sgen (r)\right) \quad \text{subject to:} \int_\cR d\mu_+ \chi_{r\cap \del \Sigma} - \int_\cR d\mu_- \chi_{r\cap \del \Sigma} = \chi_A 
\ene
with $\mu_\pm$ non-negative. 

The optimal $\mu_-$ is zero: using bulk SSA, we can assume that the regions on which $\mu_+$ and $\mu_-$ have support are mutually disjoint, and that the regions within each measure are nested. Then, to satisfy the constraint, $\mu_+$ must be a probability measure with support on $\cR_A$ alone, and $\mu_-$ can only have support on $\cR_{\varnothing}$, the set of bulk regions with $r \cap \del \Sigma = \varnothing$,
and the optimal $\mu_-$ is clearly zero. The dual problem becomes
\bne \inf_{\mu_+} \left( \int_{\cR_A} d\mu_+ \Sgen (r)\right) \qquad \text{subject to }\int_{\cR_A} d\mu_+ = 1
\ene
which is the QES prescription.
\subsubsection{Strict flow prescription: surface to flow dualisation}
Here we will dualise from the QES prescription to the strict flow prescription. 
Since we already proved the equivalence between the prescriptions in section~\ref{sec:sfpfts}, this derivation is redundant, but we include it to show how it can be done because surface-to-flow dualisations are needed when deriving covariant prescriptions~\cite{CQBT}.

The primal problem is the QES prescription, written in the following way:
\begin{multline}
\inf_\mu \int_\cR |d\mu| \Sgen  (r) \qquad \text{ subject to:}\\
\int_\cR d\mu \chi_r \big|_{\del \Sigma} = \chi_A, \quad |w| - \int_\cR |d\mu| |d\chi_r| \leq 0, \quad w= \int_\cR d\mu d\chi_r \,. \label{eq:noqth} 
\end{multline}
$\mu$ is not constrained to be non-negative. The inequality constraint and the last constraint together enforce a mathematical identity, but without imposing it as a constraint, one does not derive the flow prescription. Also, the inequality constraint is not convex in $\mu$, so, strictly speaking, we are not guaranteed strong duality, but if one replaces the inequality with the pair of convex constraints, $\xi = \int_\cR |d\mu||d\chi_r|$ and $|w| - \xi \leq 0$, then one gets the same result.

The Lagrangian for this problem is
\begin{multline}
L = \int_\cR |d\mu| \Sgen (r) + \int_{\del \Sigma} \lambda \left(\int_\cR d\mu \chi_{r\cap \del \Sigma} - \chi_A\right) \\
+ \int_\Sigma \lambda'\left(|w| - \int_\cR |d\mu||d\chi|\right) + \int_\Sigma v\cdot\left(\int_\cR d\mu d\chi_r - w\right). 
\end{multline}

The dual problem is
\bne \sup_{\lambda, v} \left ( - \int_A \lambda \right) \qquad \text{subject to } \forall r: \int_{\eth r} |v| + \left | \int_{\del \Sigma \cap r} (\lambda + n\cdot v) + \int_r \nabla \cdot v \right | \leq \Sgen (r).\label{eq:noqth2} \ene
We can show that there exists an optimal $\lambda$ equal to $- n\cdot v$ on $\del \Sigma$. A given $\lambda$ is optimal for~\eqref{eq:noqth2} if the supremum over $v$ for that $\lambda$ gives $S(A)$. If $\lambda = - n\cdot v$ on $\del \Sigma$ then~\eqref{eq:noqth2} becomes the strict cutoff-independent quantum bit thread prescription~\eqref{eq:qsrgi}, whose supremum over $v$ we proved in section~\ref{sec:sfpfts} is $S(A)$. So, we are free to set $\lambda$ equal to $- n\cdot v$ on $\del \Sigma$ in~\eqref{eq:noqth2} without affecting the optimum, and we get~\eqref{eq:qsrgi}.

\subsubsection{Divergenceless loose flow prescription} \label{sec:dsfpd}
The divergenceless, loose, cutoff-independent prescription is
\bne \max_v \int_A v \qquad \text{ subject to } \nabla\cdot v = 0, \quad \forall r \in \RR_A : \int_{\eth r} |v| \leq \Sgen (r) .\label{eq:clarg} \ene
This is the same as the loose prescription with the additional constraint that $v$ is divergenceless, so it is stricter than the loose prescription, but it is neither stricter nor looser than the strict prescription~\eqref{eq:strcu}. 

Since we have not proven that there always exist optimal flows to the loose prescription that are divergenceless, we do not know for certain that~\eqref{eq:clarg} is equivalent to the QES prescription. We dualise~\eqref{eq:clarg} to find an equivalent problem, which will tell us when the divergenceless loose prescription is equivalent to the QES prescription. 

The Lagrangian for the problem~\eqref{eq:clarg} is
\bne L = \int_A v + \int_{\cR_A} d\mu (r)\left(\Sgen (r) - \int_\Sigma |v| |d\chi_r|\right) + \int_\Sigma \lambda(x) \nabla \cdot v \ene
and from this, the dual problem is
\bne \min_{\mu,\lambda} \int_{\cR_A} d\mu (r) \Sgen (r) \qquad \lambda(x)|_{\del \Sigma} = \chi_A , \quad |d\lambda | \leq \int_{\cR_A} d\mu (r) |d\chi_r|. \label{eq:duarg} \ene
We have derived that our loose and divergenceless flow prescription~\eqref{eq:clarg} is equivalent to~\eqref{eq:duarg}, so we want to know whether~\eqref{eq:duarg} is QES-equivalent, and this, as we will now see, depends on $\Sigma$ and the choice of $A$. 

If $A$ has a boundary, then~\eqref{eq:duarg} is equivalent to the QES prescription.
To show this, it is useful to note that an implication of the constraints in~\eqref{eq:duarg} is
\bne |d\chi_A| \leq \int_{\cR_A} d\mu(r) |d\chi_{r\cap \del \Sigma}|. \label{eq:abdyh} \ene
If $A$ has a boundary, then $|d \chi_A|$ is a delta-function on that boundary. Then, from~\eqref{eq:abdyh}, we must have $\int_{\cR_A} d\mu \geq 1$, and so~\eqref{eq:duarg} is lower bounded by $S(A)$. But it is also upper bounded by $S(A)$, because $\lambda = \chi_{r_{QES}}$ and $\mu(r) = \delta(r-r_{QES})$ is always feasible. So,~\eqref{eq:duarg} gives $S(A)$, if $A$ has a boundary.
	
In contrast, if there exists a feasible $\lambda$ with $d\lambda = 0$, then any $\mu$ is feasible, including $\mu = 0$, so the minimum of~\eqref{eq:duarg} is zero, even when $S(A) \neq 0$. This happens, for example, when $\Sigma$ is a set of disjoint regions $\bigcup_i \Sigma_i$, and $d\chi_A = 0$, such as when $A = \del \Sigma_1$; then $\lambda = \chi_{\Sigma_1}$ is feasible and has $d\lambda = 0$ so $\mu=0$ is feasible, for which the objective function vanishes.

An example: consider the case where $\Sigma$ has two asymptotic AdS boundaries. If the bulk of $\Sigma$ is two disconnected components, and $A$ is the whole of one of the AdS boundaries, then $\mu = 0$ is feasible and~\eqref{eq:duarg} is not QES-equivalent. But if we connect the disconnected components of $\Sigma$, with even an arbitrarily narrow wormhole, then $\lambda$ must have a non-zero gradient somewhere in the bulk, and we must have $\int_{\RR_A} d\mu \geq 1$, so~\eqref{eq:duarg} becomes QES-equivalent%
\footnote{Loosely speaking, in double holography, we can think of the loose, divergenceless prescription as when we push all the threads in the highest dimensional bulk onto the brane so that, from the brane perspective, the flow is divergenceless. This makes it clear why even an arbitrarily narrow wormhole makes this prescription work; in double holography, if we have two AdS-branes which are only connected through the highest-dimensional bulk, then there is no way to push all the threads out of the bulk and onto the branes.}.
We also get QES-equivalence if we make $A$ anything less than the whole of one of the AdS boundaries, or add a subregion from the other AdS boundary. The conclusion is that~\eqref{eq:duarg} is generically QES-equivalent.

\begin{figure}[h!]
    \centering
    \includegraphics[width=0.4\linewidth]{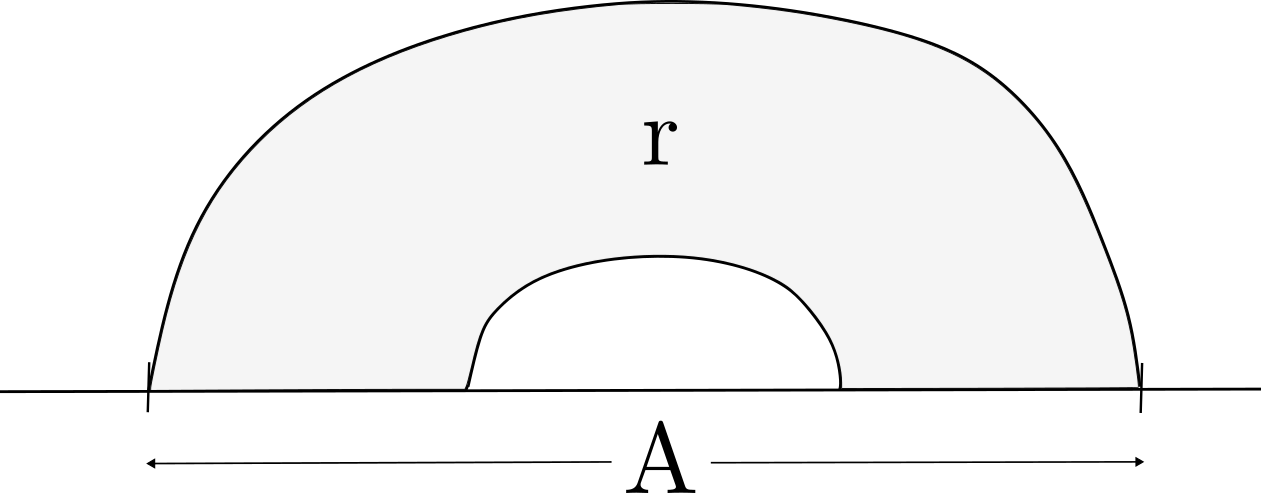}
    \caption{A bulk region $r \notin \RR_A$ that satisfies $|d\chi_{r\cap \del\Sigma}| \geq |d\chi_A|$.}
    \label{fig:StrictDivergencelessCounterexample}
\end{figure}

As mentioned in section~\ref{sec:other_prescriptions}, there is also a divergenceless, \textit{strict}, cutoff-independent prescription that one can consider, but this prescription is not QES-equivalent. The prescription is~\eqref{eq:clarg} but with the norm bound applied to the larger set of bulk subregions $\forall r \in \RR$. If one dualises this prescription, then one arrives at~\eqref{eq:duarg} with $\RR_A \mapsto \RR$, and this allows $\mu$ to have support on regions $r$ that are not in $\RR_A$, such as that shown in Fig.~\ref{fig:StrictDivergencelessCounterexample}, and these will generically allow for $\Sgen (r) < S(A)$ in~\eqref{eq:duarg}. As far as QES-equivalence is concerned, from the flow perspective, imposing the strict constraint is fine; the strict prescription allows for flows with $\int_A v = S(A)$, but imposing divergencelessness as well is a step too far.

\section{Quantum thread distributions} \label{sec:qdist}

Thread distributions, introduced in \cite{Headrick:2022nbe}, are an alternative mathematical representation of bit threads, replacing the flow $v$ with a measure $\muth$ over bulk curves connecting boundary points. Although they were defined both in the static and covariant settings, here we will focus (as we do throughout this paper) with the static case, and explain a possible method to reproduce the QES formula using thread distributions. Unfortunately, we will have to leave several key statements as conjectures.

We first recall the basic facts about classical thread distributions. A classical thread $p$ is a connected bulk curve. While it is perhaps conceptually more satisfying to have the threads be unoriented, here it will be notationally convenient to make them oriented. Letting $\PP$ be a set of classical threads, a \emph{classical thread distribution (TD)} $\muth$ is a (non-negative) measure on $\mathcal{P}$ obeying the density bound
\be\label{cldensitybound}
\forall x\in \Sigma,\quad \int_{\mathcal{P}} d\muth(p)\,\Delta(x,p)\le\frac1{4\GN}\,,
\ee
where $\Delta(x,p)$ is a delta function on $\Sigma\times\PP$ that clicks when $p$ passes through $x$. If we let $\PP_A$ be the set of curves starting on $A$ and ending on $A^c$, then maximizing the total number of threads $\muth(\PP_A)$ gives the RT formula. This can be shown by dualizing the program of maximizing $\muth(\PP_A)$ subject to \eqref{cldensitybound}, which yields the relaxed min cut program. A classical TD $\muth$ can also be converted into a classical flow $v$ and vice versa, where the threads are essentially the field lines of the flow; under this mapping, the objectives agree:
\be
\muth(\PP_A)=\int_An\cdot v\,.
\ee

In this section, we will look for a thread-distribution analogue of the strict quantum flows studied in section \ref{sec:strictflows}. In that case, the effect of bulk entanglement was to allow the flow to have sources and sinks. We can therefore guess that the threads will jump between points in the bulk. In doing so, they will carry a key extra piece of information beyond what the flows carry, namely, which point is connected to which point. Specifically, whereas a flow does not associate a given sink with any particular source, a thread does jump from a particular point to a specific other point.

For simplicity, throughout this section we will assume that the full bulk $\Sigma$ is in a pure state, $S_{\rm b}(\Sigma)=0$. This assumption is without loss of generality, since we can always formally adjoin extra purifying regions to the bulk and boundary.

\subsection{Entanglement pair functions}

In upgrading classical TDs to quantum ones, we will need the concept of an entanglement pair function, which represents the bulk entropies and plays the same role for quantum TDs that the entanglement density function (EDF), defined and studied in subsection \ref{sec:subcontours} and section \ref{sec:entropohedron}, plays for quantum flows. We define an \emph{entanglement pair function} (EPF) as a non-negative function $g$ on $\Sigma^2:=\Sigma\times\Sigma$ such that\footnote{Because our application of EPFs will be in holography, in defining EPFs, we are using the notation corresponding to bulk entropies in holography. However, the concept can be applied to any state of a field theory for which entropies of spatial regions are defined, or even more generally, to any multiparty quantum state. Note also that \eqref{EPFdef} is a property shared by the so-called two-point partial entanglement entropy $\mathcal{I}(x,y)$, defined in \cite{Lin:2023rxc}.}
\be\label{EPFdef}
\forall r\in\RR\,,\quad
\int_rdx\int_{r^c}dy\,(g(x,y)+g(y,x))\le S_{\rm b}(r)\,.
\ee
Clearly, only the symmetric part $g(x,y)+g(y,x)$ enters in the definition, so we could have required $g$ to be symmetric. However, when we relate EPFs to EDFs and to threads, it will be handy to also have the antisymmetric part $g(x,y)-g(y,x)$.

Just as with EDFs, knowing the full set of EPFs is equivalent to knowing $S_{\rm b}(r)$ for all $r\in\RR$. Like the EDFs, the EPFs form a convex set that includes 0. (However, unlike the EDFs, since $g$ is required to be non-negative, this set is not symmetric about 0.) As we will see in subsection \ref{sec:QTDdh}, for any classical TD in the second bulk, the distribution of endpoint pairs defines an EPF.

EPFs and EDFs have a direct relation to each other. An EPF $g$ can be converted into an EDF by integrating over one of the points in the pair:
\be
f_{g}(x):=\int_{\Sigma}dy\,(g(x,y)-g(y,x))\,.
\ee
To see that $f_{g}$ is indeed an EDF, integrate over any region $r\in\RR$:
\be
\int_r dx\,f_{g}(x)=\int_rdx\int_\Sigma dy\,(g(x,y)-g(y,x)) =\int_rdx\int_{r^c} dy\,(g(x,y)-g(y,x))\,;
\ee
therefore
\be
\left|\int_rdx\,f_{g}(x)\right|
\le\int_rdx\int_{r^c}dy\,\left|g(x,y)-g(y,x)\right|
\le\int_rdx\int_{r^c}dy\,\left(g(x,y)+g(y,x)\right)
\le S_{\rm b}(r)\,.
\ee

According to the following conjecture, conversely, every EDF comes from an EPF:
\begin{conjecture}\label{conj:EDFEPF}
Given an EDF $f$, there exists an EPF $g$ such that $f_g=f$, and $g(x,y)=0$ unless $f(x)>0$ and $f(y)<0$.
\end{conjecture}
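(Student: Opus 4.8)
The plan is to recast the statement as a transportation (coupling) problem and attack it by the same relaxation--and--dualization technique used throughout Section~\ref{sec:strictflows}. Write $P := \{x\in\Sigma : f(x)>0\}$ and $N := \{x\in\Sigma : f(x)<0\}$. A candidate $g$ supported on $P\times N$ with $f_g=f$ is exactly a nonnegative coupling of the measures $f\,dx|_{P}$ and $|f|\,dy|_{N}$ (its two marginals must be $f|_P$ and $|f||_N$); since the full bulk is pure, $\int_\Sigma f=0$, so these two measures have the same total mass $T:=\int_P f=\int_N|f|$. A short computation using the prescribed marginals shows that, for a region $r$, the left-hand side of \eqref{EPFdef} is the total $g$-mass of pairs split by the cut $\eth r$, while the \emph{signed} split mass equals $F(r):=\int_r f$; hence the EPF constraint for $r$ is equivalent to bounding the ``backward'' crossing mass by $\frac12\bigl(S_{\rm b}(r)-|F(r)|\bigr)$, which is nonnegative precisely because $f$ is an EDF, $|F(r)|\le S_{\rm b}(r)$. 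So the whole content of the conjecture is a max-flow--min-cut--type feasibility statement: a coupling exists that threads \emph{all} the cut capacities $S_{\rm b}(r)$ simultaneously. (The naive independent coupling $g(x,y)=\frac1T f^+(x)f^-(y)$ has the right marginals and support but in general violates these crossing constraints, so there really is something to prove.)

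First I would prove the statement in the doubly holographic setting of Subsection~\ref{sec:flowsdouble}, where it is essentially geometric. There $S_{\rm b}$ is computed by the RT formula in the second bulk $\tilde\Sigma$, so by Theorem~\ref{thm:classicalflowEDF} applied to $\tilde\Sigma$ the EDF $f$ is the boundary flux of a classical flow $\tilde v$ on $\tilde\Sigma$. The integral curves of $\tilde v$ enter $\tilde\Sigma$ through $\Sigma$ at points of $P$ and leave through points of $N$; the pushforward of the flux measure along the map (entry point, exit point) is then an EPF $g$, manifestly supported on $P\times N$ with the correct marginals. The crossing bound is max-flow--min-cut in $\tilde\Sigma$: the number of integral curves of $\tilde v$ crossing $\eth\tilde r$ for any $\tilde r\in\tilde\RR_r$ is at most $|\eth\tilde r|/4\widetilde\GN$, and minimizing over $\tilde r$ gives $S_{\rm b}(r)$. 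The only real work here is making ``distribution of endpoint pairs of integral curves'' precise; this parallels the flow-versus-thread-distribution dictionary of \cite{Headrick:2022nbe} and Section~\ref{sec:qdist}, and it is essentially the construction promised in Subsection~\ref{sec:QTDdh}.

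For a general bulk matter theory, $S_{\rm b}$ need not be geometrically realizable, so one cannot reduce to the previous paragraph; instead I would discretize (the UV regulator $\epsilon$ already makes the system effectively a finite collection of cells, with $f$ a vector and $S_{\rm b}$ a function on subsets obeying positivity, SSA~\eqref{SSA} and weak monotonicity~\eqref{WM}) and set up the feasibility LP directly: find $g\ge0$ on $P\times N$ with the marginal equalities and the crossing inequalities. Introducing Lagrange multipliers --- two functions for the marginals and a nonnegative measure $\nu$ on $\RR$ for the crossing constraints --- and dualizing turns the claim into the assertion that a certain min-cut quantity equals $T$, i.e.\ that no nonnegative combination $\sum_r\nu(r)S_{\rm b}(r)$, after accounting for the marginal data $f$, can undercut $T$. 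This is exactly where I expect to invoke the disentangling machinery of Lemma~\ref{thm:disentanglingcont} and its discrete ancestors (Lemmas~\ref{thm:disentangling1}, \ref{thm:disentangling2}): use weak monotonicity and SSA to reduce the family of regions carrying the multiplier $\nu$ to nested, mutually disjoint families, for which the required inequality should follow termwise from $|\int_r f|\le S_{\rm b}(r)$ by telescoping.

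The main obstacle is precisely that last reduction. The constraints here are coupled more intricately than in the flow theorems: a single region $r$ enters through four distinct ``corners'' ($P$ versus $N$ crossed with inside versus outside $r$), and one must disentangle an arbitrary dual configuration into a nested one while keeping track of which corners are being bounded and of the marginal bookkeeping for $f$. I do not presently see how to carry this combinatorial reduction through with full rigor in this coupled $P/N$ setting, which is why the statement is left as a conjecture. A successful proof would have immediate payoff: combined with the classical-TD/flow dictionary it would supply the quantum thread-distribution analogues of the strict-flow results of Section~\ref{sec:strictflows} that are otherwise left open in Section~\ref{sec:qdist}.
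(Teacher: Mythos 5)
The statement you are addressing is left as a \emph{conjecture} in the paper: there is no general proof to compare against, only supporting evidence, namely a proof in the doubly holographic case (subsection \ref{sec:QTDdh}) and an explicit check for GHZ states. Your treatment lands in exactly the same place. Your double-holography argument is the paper's own: lift the EDF $f$ to a classical flow $\tilde v$ on $\tilde\Sigma$ via theorem \ref{thm:classicalflowEDF} (with theorem \ref{thm:restriction} supplying the extension of $f$ off $\Sigma$), convert $\tilde v$ to a classical thread distribution along its flow lines, and read off the EPF as the distribution of endpoint pairs, with the constraint \eqref{EPFdef} following from the area bound on threads crossing any surface homologous to $r$, minimized to give $S_{\rm b}(r)$ --- this is precisely the chain of maps the paper uses, including the reliance on bulk purity so that no flux escapes $\Sigma$ and the marginals come out right. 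What you add that the paper does not make explicit is the reformulation of the conjecture as a transportation/coupling feasibility problem: $g$ must couple $f^+$ to $f^-$ (equal masses by $\int_\Sigma f=0$), and for each $r$ the EPF bound caps the total crossing mass while the signed crossing mass is pinned to $\int_r f$, so the content is that a single coupling can respect all cut capacities $S_{\rm b}(r)$ simultaneously; your observation that the product coupling fails is correct and shows the statement is genuinely nontrivial. Your proposed route for the general case --- dualize the feasibility program and disentangle the dual region-measure using \eqref{SSA} and \eqref{WM} as in lemma \ref{thm:disentanglingcont} --- is a plausible strategy absent from the paper, but, as you candidly note, the disentangling step does not obviously go through once each region couples to the $P/N$ split and the marginal bookkeeping; so you have not closed the gap, you have simply arrived, honestly, at the same open problem the authors leave to future work.
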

\noindent We will use this conjecture below when we discuss converting quantum flows into quantum TDs. In subsection \ref{sec:QTDdh}, we will show that conjecture \ref{conj:EDFEPF} holds in double holography. This implies that the conjecture holds in any state whose entropies are in principle derivable from a second bulk. However, there are states, such as the $n$-party GHZ state for $n\ge4$, that are not derivable from any bulk, as they violate the MMI inequality \eqref{MMI}. Nonetheless, it's easy to show by explicit construction that conjecture \ref{conj:EDFEPF} holds for GHZ states as well (for example, for the extremal EDF given in \eqref{GHZEDF}, we can simply take $g(a,b)=\ln 2$, and 0 for all other arguments). These examples support the validity of conjecture \ref{conj:EDFEPF}.

As mentioned above, the set of all EPFs contains the same information as the set of all EDFs (namely, the entropy of every bulk region). And if the above conjecture is correct, then EPFs and EDFs can be converted into each other. However, being a function of two variables rather than one, an individual EPF can carry more information than an individual EDF. Consider, for example, a state on 4 qubits at $x_1,\ldots,x_4$, where we are given that the state consists of two Bell pairs. The EDF $f(x_1)=f(x_2)=-f(x_3)=-f(x_4)=\ln2$ tells us that the Bell pairs are either $(x_1,x_3)$, $(x_2,x_4)$ or $(x_1,x_4)$, $(x_2,x_3)$. On the other hand, the EPF $g(x_1,x_3)=g(x_2,x_4)=\ln2$ resolves the ambiguity, in favor of the first option.

It would be interesting to further study the set of EPFs, for example, the extremal points and under what conditions they can saturate a given set of regions, etc., similar to the investigations concerning EDFs in section \ref{sec:entropohedron}.

\subsection{Quantum thread distributions: definition}
\label{sec:QTDs}

We define a \emph{quantum thread} $p$ as a sequence $p_1,\ldots,p_n$ of connected curves in $\Sigma$, which we call \emph{segments}, such that the starting point of the first segment $p_1$ and ending point of the last one $p_n$ are on the boundary $\partial\Sigma$.\footnote{As with classical threads, one again has the choice to make the quantum threads oriented or unoriented. Conceptually, one can think of them as being unoriented, in the sense that $p$ should be identified with its reverse thread $p'$, where $p'_i$ is the reverse of $p_{n-i+1}$. However, notationally, it will be simpler to maintain the orientation.} Writing $s_i$, $e_i$ as the starting and ending points respectively of the segment $p_i$, we have $s_1,e_n\in\partial\Sigma$. We say that the pair of points $(x,y)\in\Sigma^2$ is a \emph{jump} of $p$ if there is an $i$ such that $e_i=x$ and $s_{i+1}=y$. Given a thread $p$ and a region $r\in\RR$, we define $N(r,p)$ as the number of times $p$ jumps into or out of $r$, in other words the number of jumps $(x,y)$ with $x\in r$, $y\in r^c$ or $x\in r^c$, $y\in r$.

Let $\PPq$ be the set of all quantum threads. A \emph{quantum TD} is a measure $\muth$ on $\PPq$ such that\footnote{This is the analogue of the strict quantum flows, since we are imposing the bound on the number of jumps on all regions $r\in\RR$. The analogue of the loose quantum flows would be to replace $\RR$ in \eqref{jumpbound} with $\RR_A$.}
\begin{align}\label{densitybound}
\forall\,x\in\Sigma\,,\quad&\int_\PPq d\muth(p)\,\Delta(x,p)\le \frac1{4\GN}\\
\label{jumpbound}
\forall\,r\in\RR\,,\quad&\int_\PPq d\muth(p)\,N(r,p)\le S_{\rm b}(r)\,.
\end{align}
For any measure $\muth$ on $\PPq$, the jumps of the threads define a function on $\Sigma^2$; specifically, letting $\Delta_{\rm jump}(p,x,y)$ be a delta function on $\PPq\times\Sigma^2$ that clicks when $p$ contains a jump $(x,y)$, we define
\be\label{gmudef}
g_\muth(x,y):=\int_\PPq d\muth(p)\,\Delta_{\rm jump}(p,x,y)\,.
\ee
Then \eqref{jumpbound} is equivalent to the statement that $g_\muth$ is an EPF. In the rest of this section, we will give evidence that this definition of quantum TDs is sensible and has the properties one would expect for a representation of boundary entanglement that takes into account bulk entanglement.

Let $A$ be a boundary region, and $\PPqA$ the set of quantum threads starting on $A$ and ending on $A^c$. Given a bulk region $r\in\RR_A$, a quantum thread $p\in\PPqA$ must leave $r$ at least once, either by one of its segments crossing $\eth r$ or by jumping across it. (It may leave $r$ multiple times if it re-enters it.) Given a quantum TD $\muth$, the total number that cross $\eth r$ cannot exceed $|\eth r|/(4\GN)$ by \eqref{densitybound}, while the total number that jump across cannot exceed $S_{\rm b}(r)$ by \eqref{jumpbound}. Therefore,
\be\label{muPPbound}
\muth(\PPqA)\le S_{\rm gen}(r)\,.
\ee
Since \eqref{muPPbound} holds for all $r\in\RR_A$, we have
\be\label{QTDbound}
\muth(\PPqA)\le S(A)\,.
\ee
We expect this bound to be achieved, and therefore that quantum TDs correctly compute boundary entropies:
\begin{conjecture}\label{conj:qtdmax}
\be
S(A) = \max\muth(\PPqA)\quad\text{over quantum TDs $\muth$}\,.
\ee
\end{conjecture}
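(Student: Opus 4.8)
The upper bound $\muth(\PPqA)\le S(A)$ is already established in \eqref{QTDbound}, so what remains is to exhibit a quantum TD attaining it. The plan is to dualize the quantum TD program, in the same spirit as the dualization of the classical TD program that yields the relaxed min cut. Write the primal as: maximize $\muth(\PPqA)$ over non-negative measures $\muth$ on $\PPq$ subject to \eqref{densitybound} and \eqref{jumpbound}. Introduce a non-negative scalar $\lambda(x)$ as Lagrange multiplier for the local density bound and a non-negative measure $\mu$ on $\RR$ for the family of jump bounds. Minimizing over $\muth$ yields a covering-type dual: minimize $\frac1{4\GN}\int_\Sigma\lambda+\int_\RR d\mu(r)\,S_{\rm b}(r)$ subject to the requirement that every thread $p\in\PPqA$ have ``cost'' at least $1$, where the cost of $p$ is $\int_\Sigma\lambda\,\Delta(x,p)+\int_\RR d\mu(r)\,N(r,p)$ --- one pays $\lambda$ per unit length along the segments and $\nu(x,y):=\mu(\{r\in\RR:r\text{ contains exactly one of }x,y\})$ for each jump $(x,y)$. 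The goal is then to show this dual equals $\min_{r\in\RR_A}S_{\rm gen}(r)=S(A)$.

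One inequality is easy: take $\lambda$ to be a thin shell of thickness $\delta$ and height $1/\delta$ around the QES $\eth r_{\rm QES}$, and $\mu=\delta_{r_{\rm QES}}$. Any $p\in\PPqA$ starts in $r_{\rm QES}$ and ends outside it, hence at some point leaves $r_{\rm QES}$, either by a segment crossing the shell (cost $\ge1$ from $\lambda$) or by a jump $(x,y)$ with exactly one endpoint in $r_{\rm QES}$ (cost $\ge1$ from $\nu$). So this pair is feasible and its objective is $\frac{|\eth r_{\rm QES}|}{4\GN}+S_{\rm b}(r_{\rm QES})=S(A)$; hence the dual is $\le S(A)$. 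For the reverse inequality, given any feasible $(\lambda,\mu)$ set $u(x):=\min(1,d(x))$, where $d$ is the $(\lambda,\nu)$-distance from $A$; then $u|_A=0$, $u|_{A^c}=1$, $|\nabla u|\le\lambda$, and $|u(x)-u(y)|\le\nu(x,y)$. The coarea formula applied to the first inequality gives $\frac1{4\GN}\int_\Sigma\lambda\ge\int_0^1\frac{|\eth r_s|}{4\GN}\,ds$ with $r_s:=\{u\le s\}\in\RR_A$. The remaining step is to show $\int_\RR d\mu(r)\,S_{\rm b}(r)\ge\int_0^1 S_{\rm b}(r_s)\,ds$ for this nested level-set family; I expect this to follow from bulk strong subadditivity, by an argument in the spirit of lemma~\ref{thm:disentanglingcont} (it is the quantum analogue of the SSA step in the proof of theorem~\ref{thm:strictsubcontour}). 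Combining the two bounds gives the dual $\ge\int_0^1 S_{\rm gen}(r_s)\,ds\ge S(A)$, completing the argument.

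A complementary, more constructive route starts from a maximal strict quantum flow $v$ (theorem~\ref{thm:S(A)strict}), for which the density $-\nabla\cdot v$ is a bulk EDF in the sense of subsection~\ref{sec:subcontours} (with $S_{\rm b}$ and bulk regions playing the roles of $S$ and boundary regions). By conjecture~\ref{conj:EDFEPF} there is an EPF $g$ with $f_g=-\nabla\cdot v$ and $g(x,y)=0$ unless $x$ is a sink of $v$ ($\nabla\cdot v(x)<0$) and $y$ is a source ($\nabla\cdot v(y)>0$). One then decomposes $v$ into integral-curve pieces --- segments running from $\partial\Sigma$ or a source to $\partial\Sigma$ or a sink --- and glues them end to end along jumps distributed according to $g$ (a sink $x$, which receives total curve flux $-\nabla\cdot v(x)=\int dy\,g(x,y)$, routes a fraction $g(x,y)/\!\int dy'\,g(x,y')$ of it into a jump to the source $y$, which in turn feeds the curve pieces emanating there, and so on). The resulting measure $\muth$ is supported on quantum threads, inherits \eqref{densitybound} from $|v|\le\frac1{4\GN}$, obeys \eqref{jumpbound} because $g_\muth=g$ is an EPF, and has $\muth(\PPqA)=\int_A n\cdot v=S(A)$.

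The main obstacle in both routes is rigor rather than strategy. For the dualization, strong duality is not automatic for an infinite-dimensional linear program over measures on the unwieldy space $\PPq$ of arbitrarily long sequences of curves, and the existence of an optimal $\muth$ (``$\max$'' rather than ``$\sup$'') likewise needs a compactness input; setting up this measure-theoretic framework is exactly what the paper declines to do carefully. For the construction, besides its reliance on conjecture~\ref{conj:EDFEPF}, one must check that the glued threads terminate --- that no positive measure of threads acquires infinitely many segments by cycling forever among sources and sinks --- and carry out the standard flow-to-TD bookkeeping that eliminates backward ($A^c\to A$) threads, using that the boundary state is pure (so $S_{\rm b}(\Sigma)=0$ forces $\int_\Sigma\nabla\cdot v=0$). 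In double holography all of this is unnecessary: lifting to the second bulk $\tilde\Sigma$, take a classical maximal thread distribution on the combined manifold (it exists by the classical theory applied to $\Sigma\cup\tilde\Sigma$ together with theorem~\ref{thm:doublemaxflow}) and project it to $\Sigma$; each excursion of a classical thread into $\tilde\Sigma$ becomes a jump, \eqref{densitybound} and \eqref{jumpbound} follow from the classical density bound and from max flow--min cut in $\tilde\Sigma$ respectively, and the total number of projected threads is $S(A)$. This gives an unconditional proof of conjecture~\ref{conj:qtdmax} in the doubly holographic case, and strong evidence for it in general.
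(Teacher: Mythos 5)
The statement you are addressing is left as a conjecture in the paper: the paper does not prove it unconditionally, but offers exactly two pieces of evidence --- (i) conjecture~\ref{conj:EDFEPF} implies conjecture~\ref{conj:qtdmax}, via converting a maximal strict quantum flow into a quantum TD by tying sinks to sources with an EPF and discarding loops, and (ii) an unconditional proof in double holography, by projecting a maximal double TD down to $\Sigma$ so that excursions into $\tilde\Sigma$ become jumps. Your second and third routes are essentially identical to these: your flow-plus-EPF gluing is the paper's flow-to-TD conversion (with the same reliance on conjecture~\ref{conj:EDFEPF}, and the same bookkeeping that purity of the bulk state and the bound \eqref{QTDbound} force $\muth_v(\PP^{\rm q}_{A^c})=0$), and your double-holography argument matches the paper's subsection~\ref{sec:QTDdh}. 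Your worry about glued threads failing to terminate is a legitimate point of care that the paper glosses over, but it does not change the structure of the argument. So to the extent your proposal establishes anything, it establishes the same conditional and doubly-holographic statements the paper does, and you are appropriately candid that this is evidence rather than a proof.

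Your first route --- dualizing the quantum TD program directly to a covering problem and comparing it with $\min_{r\in\RR_A}S_{\rm gen}(r)$ --- is genuinely not in the paper, and if completed would upgrade the conjecture to a theorem, which is why it deserves scrutiny. The gap you flag is real and is the crux: from feasibility you only know that the cut pseudo-metric $\nu(x,y)=\mu(\{r:\chi_r(x)\neq\chi_r(y)\})$ dominates $|u(x)-u(y)|$, and you need $\int_\RR d\mu(r)\,S_{\rm b}(r)\ge\int_0^1 S_{\rm b}(r_s)\,ds$ for the level sets $r_s$ of $u$. This does not follow from lemma~\ref{thm:disentanglingcont} as stated: that lemma (and its discrete versions, lemmas~\ref{thm:disentangling1}--\ref{thm:disentangling2}) rearranges the support of $\mu_\pm$ while preserving the \emph{signed sum of indicator functions} $\phi$, whereas your constraint is a domination of the induced cut metric, which the disentangling moves are not shown to preserve or to be compatible with. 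Some new SSA-based argument (a continuum Lov\'asz-extension--type comparison between cut-metric--dominating measures and level-set decompositions) would be needed, in addition to the strong-duality and compactness issues for measures on $\PPq$ that you already note. As it stands, then, your proposal is a faithful reconstruction of the paper's evidence plus a promising but incomplete sketch of a direct proof; it does not close the conjecture.
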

\noindent In the next subsection, we will show that conjecture \ref{conj:qtdmax} is implied by conjecture \ref{conj:EDFEPF}, and in subsection \ref{sec:QTDdh}, we will show that it holds in double holography.

In an optimal configuration, both the number crossing the QES $\eth r_A$ and the number jumping across must be maximal, so both bounds \eqref{densitybound} and \eqref{jumpbound} must be saturated there:
\begin{align}
\forall x\in\eth r_A\,,\quad \int_{\PPqA}d\muth(p)\,\Delta(x,p)&=\frac1{4\GN} \label{crosssat}\\
\int_{\PPqA}d\muth(p)\,N(r_A,p)&=S_{\rm b}(r_A)\,.\label{jumpsat}
\end{align}
Furthermore, every thread either crosses the QES once or jumps across it once, contributing to the integral in either \eqref{crosssat} or \eqref{jumpsat} (in the latter case, with $N(r_A,p)=1$).

\subsection{Converting between quantum thread distributions \& quantum flows}

\subsubsection{From thread distributions to flows} As in the classical case, we can convert a quantum TD $\muth$ into a quantum strict flow $v_\muth$:
\be
v_\muth(x) := \int_\PPq d\muth(p)\,\Delta(x,p)\,\dot x\,,
\ee
where $\dot x$ is the unit tangent vector to $p$ at $x$. By the Cauchy-Schwarz inequality together with the density bound \eqref{densitybound}, $v_\muth$ obeys the norm bound, $|v_\muth|\le1/(4\GN)$. Due to the jumps in the threads, $v_\muth$ may have a non-zero divergence:
\be
\nabla\cdot v_\muth(x) = 
\int_\Sigma dy\int_\PPq d\muth(p)
\,(\Delta_{\rm jump}(p,y,x)-\Delta_{\rm jump}(p,x,y))\,.
\ee
Therefore, for any region $r\in\RR$,
\be
\int_r dx\,\nabla\cdot v_\muth(x) = 
\int_r dx\int_{r^c} dy\int_\PPq d\muth(p)
\,(\Delta_{\rm jump}(p,y,x)-\Delta_{\rm jump}(p,x,y))\,,
\ee
so
\be
\left|\int_r dx\,\nabla\cdot v_\muth(x) \right|\le 
\int_rdx\int_{r^c}dy\,g_\muth(x,y)\le S_{\rm b}(r)\,.
\ee
Hence $v_\muth$ also obeys the strict divergence bound \eqref{nablavEDF}. Finally, the flux of $v_\muth$ through $A$ is given by the difference between the number of threads starting on $A$ and ending on $A^c$ and the number going the other way:
\be
\int_An\cdot v_\muth=\muth(\PPqA)-\muth(\PP^{\rm q}_{A^c})\,.
\ee

\subsubsection{From flows to thread distributions}
To convert a quantum flow $v$ to a quantum TD $\muth_v$, we will need conjecture \ref{conj:EDFEPF}. The vector field $v$ induces a measure $\muth'_v$ on segments $p'$ such that, for any oriented surface $\sigma$,
\be
\int d\muth'(p')\,N(\sigma,p')=\int_\sigma n\cdot v\,,
\ee
where $N(\sigma,p')$ is the net number of times $p'$ crosses $\sigma$ in the direction of the unit normal vector $n$. Roughly speaking, the segments $p'$ are the flow lines of $v$, starting at sources and ending at sinks. Therefore, for an infinitesimal neighborhood of a point $x\in\Sigma$ of volume $dV$, $\nabla\cdot v(x)dV$ equals the number of segments $p'$ that start in the neighborhood (or, if it is negative, minus the number that end in it). Conjecture \ref{conj:EDFEPF} can then be used to tie the sources and sinks to each other. Since $\nabla\cdot v$ is an EDF, we can apply conjecture \ref{conj:EDFEPF} to it, yielding an EPF $g(x,y)$ such that
\be
\forall \,x\in \Sigma_\pm\,,\quad\nabla\cdot v(x)=\pm\int_{\Sigma_\mp}dy\,g(x,y)\,,
\ee
where
\be
\Sigma_+:=\{x\in\Sigma:\nabla\cdot v>0\}\,,\qquad
\Sigma_-:=\{x\in\Sigma:\nabla\cdot v<0\}\,.
\ee
Therefore, for every sink, in other words, every $x\in\Sigma_-$, $g(x,y)$ defines a distribution of jumps $(x,y)$ over points $y\in\Sigma_+$, which we can use to connect the segments ending at $x$ to segments starting at positions $y\in\Sigma_+$. With this prescription, the total number of jumps ending at a given $y\in\Sigma_+$ is $\int_{\Sigma_+}dx\,g(x,y)=\nabla\cdot v(y)$, equal to the number of segments starting at $y$. Thus, the bulk starting and ending points of all segments are accounted for by jumps. With these jumps, every segment becomes a member of some sequence. Those sequences that do not start and end on the boundary (e.g.\ that form loops) are discarded, leaving a quantum TD $\muth_v$. The flux of $v$ through a given boundary region $A$ is related to the TD $\muth_v$ in the same way as for $v_\muth$ and $\muth$ from the previous paragraph, namely
\be
\int_An\cdot v=\muth_v(\PPqA)-\muth_v(\PP^{\rm q}_{A^c})\,.
\ee

Given a maximal quantum flow $v$, in view of \eqref{QTDbound}, necessarily $\muth_v(\PPqA)=S(A)$ and $\muth_v(\PP^{\rm q}_{A^c})=0$. Therefore, conjecture \ref{conj:EDFEPF} implies conjecture \ref{conj:qtdmax}.

\subsection{Double holography}
\label{sec:QTDdh}

As with quantum flows, we can use doubly holographic setups, in which the bulk entropy $S_{\rm b}$ is derived geometrically via the RT formula in the second bulk, to gain intuition about quantum TDs. Here, the quantum threads in the first bulk will be related to classical threads in the full (first plus second) bulk. We use the same setup and notation as in subsection \ref{sec:flowsdouble}.

We first note that any classical TD $\tilde\muth$ in the second bulk $\tilde\Sigma$ defines an EPF $g_{\tilde\muth}$ in the first bulk.\footnote{This statement and its converse, conjecture \ref{conj:EPFlift}, are not fundamentally statements about double holography. Rather, the statements are that, in holography, any classical TD in the bulk defines an EPF for the boundary, and any boundary EPF lifts to a classical TD. Here, we use the language and notation of double holography, as that corresponds to our present application.} Letting $\tilde\PP$ be the set of classical threads in $\tilde\Sigma$, we defined $\Delta_{\rm end}(\tilde p,x,y)$ as a delta function on $\tilde\PP\times\Sigma^2$ that clicks when $\tilde p$ begins at $x$ and ends at $y$. Then,
\be\label{gtildemudef}
g_{\tilde\muth}(x,y):=\int_{\tilde\PP}d\tilde\muth(\tilde p)\,\Delta_{\rm end}(\tilde p,x,y)\,.
\ee
To see that $g_{\tilde\muth}$ is indeed an EPF, note that, since $\tilde\muth$ is a classical TD, for any surface $\sigma$ in $\tilde\Sigma$, the total number of times the threads intersect $\sigma$ is bounded by its area:
\be\label{intbound}
\int_{\tilde\PP}d\tilde\muth(\tilde p)\,\#(\tilde p\cap\sigma)\le\frac{|\sigma|}{4\widetilde{\GN}}\,.
\ee
Now fix a region $r\in\RR$, and let $\tilde\PP_r$ be the set of threads in $\tilde\PP$ that start in $r$ and end in $r^c$ and $\tilde\PP_{r^c}$ the set that goes the other way. Given any surface $\sigma$ in $\tilde\Sigma$ homologous to $r$, any thread $\tilde p$ in $\PP_r$ or $\PP_{r^c}$ must intersect $\sigma$ at least once. Therefore,
\be\label{gbound}
\int_rdx\int_{r^c}dy\,(g_{\tilde\muth}(x,y)+g_{\tilde\muth}(y,x))=
\int_{\tilde\PP_r}d\tilde\muth(\tilde p)
+\int_{\tilde\PP_{r^c}}d\tilde\muth(\tilde p)
\le\int_{\tilde\PP}d\tilde\muth(\tilde p)\,\#(\tilde p\cap\sigma)\,.
\ee
Applying \eqref{intbound} and \eqref{gbound} to the RT surface for $r$ shows that $g_{\tilde\muth}$ obeys the definition \eqref{EPFdef} of an EPF.

It seems reasonable to believe that the converse holds, namely any EPF $g$ arises from some TD on $\tilde\Sigma$. However, we do not have a proof, so we leave it as a conjecture:
\begin{conjecture}\label{conj:EPFlift}
Given an EPF $g$ on $\Sigma$, there exists a classical TD $\tilde\muth$ on $\tilde\Sigma$ such that $g_{\tilde\muth}=g$.
\end{conjecture}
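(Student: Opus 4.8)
\textbf{Proof proposal for Conjecture \ref{conj:EPFlift}.}
The plan is to mimic the proof of theorem \ref{thm:classicalflowEDF}, which characterizes the boundary fluxes of classical flows by the EDF condition, but now at the level of thread distributions and endpoint \emph{pairs} rather than flows and fluxes, working in $\tilde\Sigma$ with $\Sigma$ playing the role that $\partial\Sigma$ plays there. One writes the existence of a classical TD $\tilde\muth$ on $\tilde\Sigma$ with $g_{\tilde\muth}=g$ as a feasibility program: ``$\max 0$'' over measures $\tilde\muth\ge0$ on $\tilde\PP$ subject to the density bound $\int_{\tilde\PP}d\tilde\muth(\tilde p)\,\Delta(\tilde x,\tilde p)\le 1/(4\widetilde\GN)$ and the pairing constraint $g_{\tilde\muth}=g$. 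Slater's condition holds (scale $\tilde\muth$ down), so strong duality applies. Introducing a nonnegative length density $\rho$ on $\tilde\Sigma$ as multiplier for the density bound and a function $\pi$ on $\Sigma^2$ for the pairing constraint, and minimizing out $\tilde\muth$ — which forces $\pi(x,y)\ge -d_\rho(x,y)$, where $d_\rho$ is $\rho$-geodesic distance, since an infinitesimal thread from $x$ to $y$ may be laid along any curve — the dual collapses to the ``metric condition'': a valid $\tilde\muth$ exists if and only if, for every nonnegative length density $\rho$ on $\tilde\Sigma$,
\[
\int_{\Sigma}dx\int_{\Sigma}dy\;g(x,y)\,d_\rho(x,y)\;\le\;\frac1{4\widetilde\GN}\int_{\tilde\Sigma}\rho\,.
\]

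The first step is then to recognize that the EPF condition \eqref{EPFdef} is exactly this metric condition restricted to ``cut'' densities — those $\rho$ supported on a thin shell straddling a surface $\eth\tilde r$ homologous to a region $r\subseteq\Sigma$. For such a $\rho$, $d_\rho(x,y)$ equals the shell weight when $x\in r,\ y\in r^c$ or vice versa and $0$ otherwise, so the displayed inequality becomes $\int_r dx\int_{r^c}dy\,(g(x,y)+g(y,x))\le |\eth\tilde r|/(4\widetilde\GN)$, and minimizing over homology regions $\tilde r$ reproduces \eqref{EPFdef} with $\Sb(r)$ given by the RT formula in $\tilde\Sigma$. Thus the EPF condition is necessary (as already checked directly in the excerpt) and, for cut densities, sufficient.

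The real content, and the main obstacle, is to upgrade the metric condition from cut densities to arbitrary $\rho$ — equivalently, to show that the ``cut condition'' on $g$ forces routability of the demand $g$ through $\tilde\Sigma$. This is precisely the classical gap in multicommodity flow theory: the cut condition is in general strictly weaker than feasibility. The conjecture asserts the gap closes in the holographic setting, and a proof must exploit the special structure — all terminals lying on one boundary component $\Sigma$ of $\tilde\Sigma$, and $\Sb$ being geometric (hence obeying strong subadditivity, and indeed all holographic entropy inequalities). I see two routes. Route (a): discretize $\tilde\Sigma$ to a network and show the worst-case $\rho$ may be taken to have \emph{laminar} level sets, by an uncrossing argument on its super-level sets in the spirit of lemma \ref{thm:disentanglingcont} and the SSA-based nesting arguments of section \ref{sec:strictflows}; a $\rho$ with laminar level sets decomposes, via the coarea formula, into a nonnegative combination of cut densities, reducing to the case already settled. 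Route (b): invoke an Okamura–Seymour-type theorem — the cut condition suffices for a supply network that is planar with all terminals on one face — which would handle the case in which $\tilde\Sigma$ is two-dimensional (a bulk dual to a $1{+}1$-dimensional theory), leaving the higher-dimensional case as the genuine difficulty, where a new uncrossing input appears to be needed.

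As a partial consistency check valid in all dimensions, recall that the marginal $f_g$ is always an EDF (shown just above the conjecture), so by theorem \ref{thm:classicalflowEDF} there \emph{is} a classical flow on $\tilde\Sigma$ with boundary flux $f_g$; the extra content of the conjecture is precisely that this flow — or some flow with the same flux — can be disintegrated into threads whose endpoint \emph{pairing} reproduces $g$ and not merely its marginals, and that disintegration step is the multicommodity obstruction in disguise. I would also note that a tempting reduction to extremal EPFs via a Choquet decomposition of the convex set of EPFs does not obviously help: the lifted thread distributions of the pieces need not jointly respect the density bound when summed, which is again the same obstruction.
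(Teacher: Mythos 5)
This statement is left as an open conjecture in the paper: the authors explicitly write that they ``do not have a proof,'' and they offer only plausibility evidence (the necessity direction, and consistency with the EDF/flow theorem \ref{thm:classicalflowEDF}), so there is no proof of record to compare yours against. Your proposal, as you yourself acknowledge, also does not prove the statement, so the honest verdict is that the gap you flag is real and is exactly the open content. Your LP-duality reformulation is sound: feasibility of a classical TD with prescribed endpoint-pair density $g$ is, by strong duality, equivalent to the ``metric condition'' $\int dx\,dy\,g(x,y)\,d_\rho(x,y)\le\frac{1}{4\widetilde\GN}\int_{\tilde\Sigma}\rho$ for all nonnegative length densities $\rho$ (this is the continuum analogue of the Iri/Onaga--Kakusho characterization of fractional multicommodity flow), and the EPF condition \eqref{EPFdef} is precisely its restriction to cut-type densities, with $\Sb(r)$ supplied by the RT formula in $\tilde\Sigma$. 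What remains is to show that cut densities suffice, i.e.\ that the cut--metric gap closes in this geometric setting, and neither of your two routes accomplishes this. Route (a) is circular as sketched: laminarizing the level sets of an adversarial $\rho$ means replacing its induced metric by a metric generated by a nested family of cuts (a tree-like metric), and the claim that this can be done without decreasing $\sum g\,d_\rho$ relative to the capacity is equivalent to the cut condition implying the metric condition, which is the thing to be proven; the uncrossing tools in lemma \ref{thm:disentanglingcont} and section \ref{sec:strictflows} manipulate regions and entropies but do not see the pairwise distances $d_\rho(x,y)$ that the demand $g$ weights, which is exactly where multicommodity problems escape the single-commodity SSA machinery. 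Route (b) (Okamura--Seymour) is restricted to a two-dimensional $\tilde\Sigma$ with all terminals on one boundary component, and even there one must check that the continuum, fractional, Riemannian version goes through; it says nothing about the higher-dimensional case, which you correctly identify as the genuine difficulty.

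Two smaller points. First, your consistency check via the marginal $f_g$ and theorem \ref{thm:classicalflowEDF} is fine but, as you note, the disintegration of a flow into threads only controls the marginals of the endpoint distribution, not the pairing, so it cannot be promoted to a proof without new input --- this is the same obstruction restated. Second, be careful with Slater's condition in your feasibility program: the constraint $g_{\tilde\muth}=g$ is an equality, and scaling $\tilde\muth$ down violates it, so strict feasibility in the usual sense is not available; one should either argue strong duality via the affine nature of the constraints (as the paper does elsewhere) or relax the equality to an inequality $g_{\tilde\muth}\ge g$, which is harmless here since discarding threads never violates the density bound. In summary, your write-up is a useful sharpening of why this is hard --- it locates the conjecture precisely at the cut-versus-metric gap for boundary-anchored multicommodity flow in $\tilde\Sigma$ --- but it does not close that gap, and the paper does not close it either.
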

\noindent This is the analogue of theorem \ref{thm:classicalflowEDF}, which states that any EDF $f$ on $\Sigma$ can be lifted to a classical flow $\tilde v$ on $\tilde\Sigma$ such that $\tilde n\cdot\tilde v=f$ (where $\tilde n$ is the inward-directed unit normal to $\Sigma$ in $\tilde\Sigma$).

In the double holography setting, making use of the above mapping from classical TDs to EPFs, we can also prove conjecture \ref{conj:EDFEPF}. Namely, starting from an EDF $f$ on $\Sigma$, we lift it to a classical flow $\tilde v$ in $\tilde\Sigma$ using theorem \ref{thm:classicalflowEDF}; we then map $\tilde v$ to a classical TD $\tilde\muth$ on $\tilde\Sigma$ via the flow lines of $\tilde v$, as described in subsection 5.1.1 of \cite{Headrick:2022nbe} (except that, rather than restricting to threads connecting specified boundary regions, we keep all threads connecting any two boundary points, and we retain their orientations); finally, using \eqref{gtildemudef} we map $\tilde\muth$ to an EPF $g$. Through this series of transformations, $f(x)$ becomes the boundary flux $\tilde n\cdot\tilde v(x)$, which in turn becomes the number of threads beginning (if $f(x)>0$) or ending (if $f(x)<0$) at $x$, finally implying that $g$ obeys $\int_\Sigma dy\,g(x,y)=|f(x)|$.

The generalization of classical TDs to double holography is fairly straightforward. In double holography, a \emph{double thread} $\hat p$ is a connected curve in $\Sigma\cup\tilde\Sigma$ starting and ending on $\partial\Sigma$. Along the way, $\hat p$ may pass from $\Sigma$ to $\tilde\Sigma$ and vice versa an arbitrary number of times. It will be convenient to require the first and last segments of $\hat p$ to lie in $\Sigma$ (not $\tilde\Sigma$); this is without loss of generality since we can simply append infinitesimal segments in $\Sigma$ at the beginning and end. Defining $\hat\PP$ as the set of all double threads, a \emph{double TD} is a measure $\hat\muth$ on $\hat\PP$ subject to density bounds on both $\Sigma$ and $\tilde\Sigma$:
\begin{align}
&\forall x\in \Sigma,\quad \int_{\hat\PP} d\hat\muth(\hat p)\,\Delta(x,\hat p)\le\frac1{4\GN} \label{densbound1}\\
&\forall \tilde x\in\tilde \Sigma,\quad \int_{\hat\PP} d\hat\muth(\hat p)\,\Delta(\tilde x,\hat p)\le\frac1{4\widetilde{\GN}}\,.\label{densbound2}
\end{align}
Let $\hat\PP_A$ be the set of double threads that start on $A$ and end on $A^c$. The maximum number of such double threads, $\hat\muth(\hat\PP_A)$, equals $S(A)$, either by dualization to obtain the relaxed min cut program or by converting between double TDs and double flows. (We leave the proof as an exercise for the reader.)

We will now show that any double TD can be converted into a quantum TD. A double thread $\hat p$ consists of segments in $\Sigma$ and segments in $\tilde\Sigma$. The former segments together define a quantum thread $p$ in $\Sigma$, while each of the latter segments is a classical thread in $\tilde\Sigma$. Therefore, a measure $\hat\muth$ on $\hat\PP$ induces a measure $\muth$ on the set $\PPq$ of quantum threads together with a measure $\tilde\muth$ on the set $\tilde\PP$ of classical threads on $\tilde\Sigma$. These are related by the fact that the distribution of jumps of the former equals the distribution of endpoints of the latter, so
\be\label{jumpsends}
g_\muth=g_{\tilde\muth}=:g\,.
\ee
The constraint \eqref{densbound2} is simply the statement that $\tilde\muth$ is a classical TD, hence $g$ is an EPF. Meanwhile, the constraint \eqref{densbound1} is equivalent to \eqref{densitybound}. So $\muth$ is a quantum TD. The fact that double TDs reduce to quantum TDs is evidence that the definition we gave for the latter is the physically relevant one.

In particular, a maximal double TD $\hat\muth$, which has $\hat\muth(\hat\PP_A)=S(A)$, induces a quantum TD $\muth$ that saturates the bound \eqref{QTDbound}, i.e.\ such that $\muth(\PPqA)=S(A)$. Therefore, conjecture \ref{conj:qtdmax} holds in double holography.

Conjecture \ref{conj:EPFlift} would allow us to go the other way, and convert any quantum TD $\muth$ to a double TD, by lifting the EPF $g_\muth$ to a classical TD $\tilde\muth$ on $\tilde\Sigma$ such that $g_{\tilde\muth}=g_\muth$. We can then replace every jump $(x,y)$ in each quantum thread $p$ by a classical thread in $\tilde\Sigma$ going from $x$ to $y$, thereby producing a double thread.

In order to put the notions of EPF and of quantum TD that we have defined in this section on a solid footing, it would be desirable to prove the three conjectures. We leave these investigations to future work.

\section{The entropohedron}
\label{sec:entropohedron}

This section may be read independently of the rest of the paper.

In the definition of strict quantum flows given in section\ \ref{sec:strictflows}, the concept of \emph{entanglement distribution function} (EDF) played a key role. In that setting, the relevant entropies were those of spatial regions of the bulk in a holographic spacetime. The notion of EDF is also applicable for a quantum state on a finite number of parties. For $N$ parties, the EDFs elegantly repackage the entropy function, which is a function of $2^N-1$ variables (or equivalently the entropy vector, a vector in $2^N-1$ dimensions), into a convex polytope in just $N$ dimensions, which we will call the \emph{entropohedron}. Whereas the entropy function treats all sets of parties on an equal footing, the entropohedron makes use of the local structure, i.e.\ which sets of parties are contained within which other sets. It also naturally builds in important properties of entropies such as strong subadditivity. In this section, we give a self-contained presentation of the entropohedron and its properties. In this finite-dimensional setting, we will be able to prove the relevant properties rigorously, as no analysis is required.

Some aspects of this discussion appeared previously in the original bit thread paper \cite{Freedman:2016zud}. The entropohedron is also closely related to several objects that have been previously studied in the context of submodular functions, including the so-called \emph{symmetric submodular polytope} (see \cite{bach2013learning}) and the \emph{generalized polymatroid} (see \cite{frank2014characterizing}). The simple relation between these objects and network flows, discussed in subsection \ref{sec:graphflows}, has not been noted before as far as we know. The entropohedron is also related to the notion of \emph{entanglement contour}, as we will discuss in subsection \ref{sec:saturation}.

To avoid interrupting the narrative, we relegate the lengthier proofs to subsection \ref{sec:EDFproofs}.

\subsection{Definitions \& examples}

Everywhere in this section, $X$ will denote a finite set and $2^{X}$ its power set (the set of all subsets of ${X}$). We will use $x,y,\ldots$ to represent elements of ${X}$ and $A,B,\ldots$ to represent subsets of $X$. We denote by $\R^{X}$ the $|{X}|$-dimensional real vector space with basis vectors labelled by elements of ${X}$.

\begin{definition}
An \emph{entropy function}\footnote{$S$ is also sometimes called an \emph{entropy vector}, as it may be considered a vector in $\R^{2^{X}\setminus\{\emptyset\}}$, where we remove the empty set from the basis since we require $S(\emptyset)=0$.}$^,$\footnote{The strong subadditivity condition \eqref{props3} is also called submodularity. In the literature on submodular functions, in particular in defining the symmetric submodular polytope, the condition \eqref{props1} is also usually imposed, along with monotonicity, $S(AB)\ge S(A)$ \cite{bach2013learning}. Monotonicity is obeyed by the entropies of classical probability distributions; however, for quantum states it is replaced by the weak monotonicity condition \eqref{props4}.} for ${X}$ is a function $S:2^{X}\to[0,\infty)$, satisfying the following conditions:\footnote{Note that \eqref{props4}, applied to the case $B=A$, together with \eqref{props1}, implies $S(A)\ge0$.}
\begin{align}\label{props1}
S(\emptyset) &= 0 \\
\forall \,A,B\subseteq X \,,\quad S(A)+S(B) &\ge S(A\cap B)+S(A\cup B) \qquad\text{(SSA)} \label{props3} \\
\forall \,A,B\subseteq X \,,\quad S(A)+S(B) &\ge S(A\setminus B)+S(B\setminus A) \qquad\text{(WM)}\,.\label{props4}
\end{align}
\end{definition}

The entropies of a multipartite quantum system in a fixed state obey \eqref{props1}--\eqref{props3}, hence the name ``entropy function''.\footnote{The name ``entropy function'' is a bit of an abuse, since not every entropy function represents the entropies of a quantum state. In particular, it is known that the entropies of quantum states on at least four parties obey a further set of constrained inequalities \cite{Linden:2004ebt,Cadney:2011vix}.} For this reason, we will sometimes use the language of ``states'' and ``parties''. The only aspects of the state that are relevant for our purposes, however, are the entropies. Another natural source of functions obeying these properties is minimal cuts on weighted graphs. We will discuss these further in subsection \ref{sec:graphflows}. Of course, minimal cuts on graphs are related to quantum states via the RT formula.

\begin{definition}Given an entropy function $S$ for $X$, an \emph{entanglement distribution function} (EDF) is a function $f:X\to\R$ such that, for all $A\subseteq X $,
\be\label{EDFdef}
\left|\sum_{x\in A}f(x)\right|\le S(A)\,.
\ee
$f$ may also be thought of as a vector in the $|X|$-dimensional vector space $\R^X$.
\end{definition}
\begin{definition}
Given an entropy function $S$, the \emph{entropohedron} $F_S$ is the set of EDFs of $S$.
\end{definition}

\begin{figure}
    \centering
    \includegraphics[width=0.99\linewidth]
    {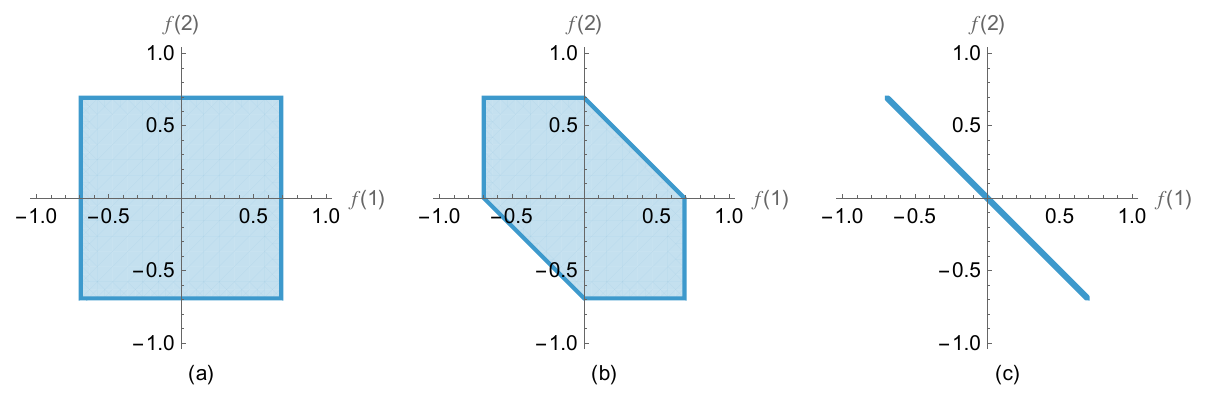}
    \caption{Entropohedra for various states on two qubits: (a) maximally mixed, $\rho=\frac14(\ket{0}\bra{0}+\ket{1}\bra{1})\otimes(\ket{0}\bra{0}+\ket{1}\bra{1})$. (b) Maximally classically correlated, $\rho=\frac12(\ket{00}\bra{00}+\ket{11}\bra{11})$. (c) Maximally entangled (Bell pair), $\rho=\frac12(\ket{00}+\ket{11})(\bra{00}+\bra{11})$.}
    \label{fig:twoqubit}
\end{figure}

\VerbatimFootnotes

\begin{figure}
    \centering
    \includegraphics[width=0.99\linewidth]
    {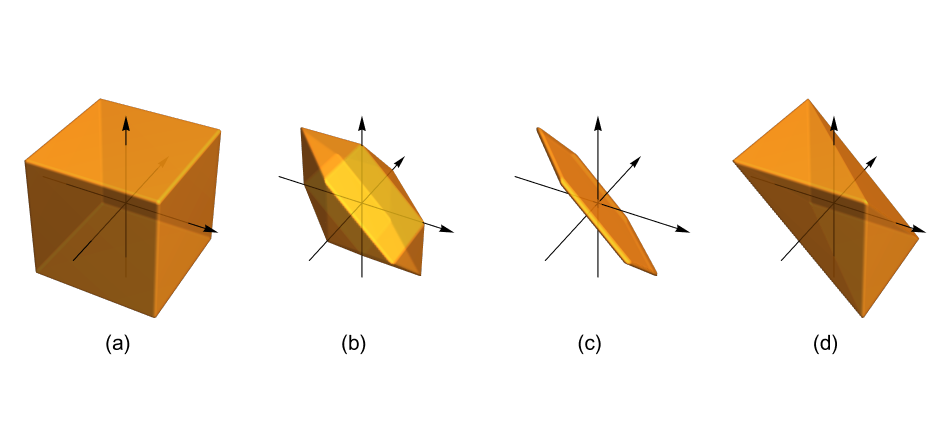}
    \caption{Entropohedra for various states on three parties: (a) maximally mixed. (b) Maximally classically correlated. (c) Maximally entangled (e.g.\ GHZ or W state). (d) Marginal of four-party perfect tensor. Axes are $f(1),f(2),f(3)$. Figures are scaled to have equal single-party entropies.\protect\footnotemark
    }
    \label{fig:threequbit}
\end{figure}

{\interfootnotelinepenalty=10000
    \footnotetext{
        To manipulate these 3D figures, paste the following commands into \emph{Mathematica}:\\
    \verb|RegionPlot3D[{Abs[f1] <= 1 && Abs[f2] <= 1 && Abs[f3] <= 1},| \\
    \verb| {f1, -1.1, 1.1}, {f2, -1.1, 1.1}, {f3, -1.1, 1.1}, PlotPoints -> 100]| \\
    \verb|RegionPlot3D[{Abs[f1] <= 1 && Abs[f2] <= 1 && Abs[f3] <= 1 && | \\
    \verb| Abs[f1 + f2] <= 1 && Abs[f2 + f3] <= 1 && Abs[f1 + f3] <= 1 && Abs[f1 + f2 + f3] <= 1}, |\\
    \verb| {f1, -1.1, 1.1}, {f2, -1.1, 1.1}, {f3, -1.1, 1.1}, PlotPoints -> 100]| \\
    \verb|RegionPlot3D[{Abs[f1] <= 1 && Abs[f2] <= 1 && Abs[f3] <= 1 && | \\
    \verb| Abs[f1 + f2] <= 1 && Abs[f2 + f3] <= 1 && Abs[f1 + f3] <= 1 && Abs[f1 + f2 + f3] <= 0.05},| \\
    \verb| {f1, -1.1, 1.1}, {f2, -1.1, 1.1}, {f3, -1.1, 1.1}, PlotPoints -> 100]| \\
    \verb|RegionPlot3D[{Abs[f1] <= 1 && Abs[f2] <= 1 && Abs[f3] <= 1 && Abs[f1 + f2 + f3] <= 1},| \\
    \verb| {f1, -1.1, 1.1}, {f2, -1.1, 1.1}, {f3, -1.1, 1.1}, PlotPoints -> 100]|
    }
}

Examples of entropohedra for various quantum states on two and three parties are shown in Figs.\ \ref{fig:twoqubit} and \ref{fig:threequbit} respectively.

\subsection{Basic properties}

Proofs of the propositions in this subsection are straightforward, and we leave them to the reader. The following properties of the entropohedron follow directly from the definition:
\begin{proposition}
Given an entropy function $S$ for $X$, the entropohedron $F_S$ is a non-empty compact convex polyhedron in $\R^{{X}}$, and is invariant under $f\to-f$.
\end{proposition}

Note that $F_S$ may be lower-dimensional than $\R^X$. Specifically, we have:
\begin{proposition}
Given an entropy function $S$ for $X$, the codimension of the entropohedron $F_S$ in $\R^X$ equals the maximum number of disjoint non-empty sets $A\subseteq X$ such that $S(A)=0$.
\end{proposition}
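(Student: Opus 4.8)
The plan is to show that the linear span of the entropohedron equals an explicit subspace of $\R^X$ cut out by the zero-entropy constraints, and then to compute the dimension of that subspace combinatorially. Write $\chi_A\in\R^X$ for the indicator vector of $A\in 2^X$, and set $\mathcal Z:=\{A\in 2^X:S(A)=0\}$ and
\[
V:=\Bigl\{f\in\R^X:\sum_{x\in A}f(x)=0\ \text{ whenever }\ S(A)=0\Bigr\}.
\]
The first ingredient is that $\mathcal Z$ is a ring of sets: if $S(A)=S(B)=0$ then SSA \eqref{props3} gives $0=S(A)+S(B)\ge S(A\cap B)+S(A\cup B)\ge 0$ and WM \eqref{props4} gives $0=S(A)+S(B)\ge S(A\setminus B)+S(B\setminus A)\ge 0$, so all of $A\cap B$, $A\cup B$, $A\setminus B$, $B\setminus A$ have zero entropy; together with $S(\emptyset)=0$ from \eqref{props1}, $\mathcal Z$ is closed under finite unions, intersections, and relative complements.

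Next I would locate a relative interior point of $F_S$ inside $V$. By \eqref{EDFdef}, the constraint for $A\in\mathcal Z$ is the linear equation $\chi_A\cdot f=0$, so $F_S\subseteq V$, while for each of the finitely many $A\notin\mathcal Z$ the constraint $|\chi_A\cdot f|\le S(A)$ is a strict inequality with $S(A)>0$. Hence a sufficiently small ball about the origin within $V$ lies entirely in $F_S$, so $\operatorname{span}F_S=V$, and therefore $\operatorname{codim}F_S=\operatorname{codim}V=\dim\operatorname{span}\{\chi_A:A\in\mathcal Z\}$, using that $V$ is the orthogonal complement of that span. (In the degenerate case $\mathcal Z=2^X$, i.e.\ $S\equiv 0$, one has $V=\{0\}$, and the codimension is $|X|$, which is also the maximal number of disjoint nonempty zero-entropy sets, since all singletons qualify; so we may assume $\mathcal Z\ne 2^X$.)

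The remaining, combinatorial, step is to evaluate $\dim\operatorname{span}\{\chi_A:A\in\mathcal Z\}$. Let $Z:=\bigcup_{A\in\mathcal Z}A$, which lies in $\mathcal Z$ by closure under finite unions, and partition $Z$ into the equivalence classes $P_1,\dots,P_m$ of the relation ``lies in exactly the same members of $\mathcal Z$''. Each $P_i$ is nonempty, and, being a finite intersection of sets of the form $A$ and $Z\setminus A$ with $A\in\mathcal Z$, lies in $\mathcal Z$; moreover every $A\in\mathcal Z$ is a union of these ``atoms''. It follows that $m$ is exactly the maximal number of pairwise disjoint nonempty zero-entropy sets (any such family uses disjoint sets of atoms, and the atoms themselves realise the maximum), and that $\operatorname{span}\{\chi_A:A\in\mathcal Z\}=\operatorname{span}\{\chi_{P_1},\dots,\chi_{P_m}\}$, which has dimension $m$ because the $\chi_{P_i}$ have disjoint nonempty supports. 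Chaining the three steps gives $\operatorname{codim}F_S=m$.

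Most of this is routine: the ring property of $\mathcal Z$ and the ``$0$ is a Slater point'' observation are short, and the linear-algebra bookkeeping (annihilators, $\dim V=|X|-\dim\operatorname{span}\{\chi_A\}$) is standard. The one place that carries the real content is the atom decomposition of the ring $\mathcal Z$ in the last step, since that is precisely where the abstract dimension $\dim\operatorname{span}\{\chi_A:A\in\mathcal Z\}$ gets identified with the stated quantity, the number of disjoint nonempty subsets of zero entropy.
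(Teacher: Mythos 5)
Your proof is correct. The paper states this proposition without giving a proof, so there is no argument to compare against; the route you take — observing via SSA and WM that the zero-entropy sets form a ring, that $0$ is a relative interior point of $F_S$ inside the subspace annihilated by their indicator vectors, and that the atoms of that ring simultaneously count the dimension of the span of those indicators and the maximal family of disjoint non-empty zero-entropy sets — is complete and is evidently the ``follows from the definition'' argument the authors leave implicit.
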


Under conical combinations of entropy functions, entropohedra combine in the sense of the Minkowski sum of sets.
\begin{proposition}
Given a set of pairs $\{(\alpha_i,S_i)\}$, where $\alpha_i$ is a non-negative number and $S_i$ is an entropy function for $X$, the function $S:2^X\to\R$ defined by
\be
S(A):=\sum_i\alpha_iS_i(A)
\ee
is an entropy function, and its entropohedron is
\be
F_S=\sum_i\alpha_iF_{S_i}:=\left\{\sum_i\alpha_if_i:f_1\in F_{S_1},f_2\in F_{S_2},\ldots\right\}.
\ee
\end{proposition}

Under direct sums of entropy functions (e.g.\ for tensor products of quantum states), the entropohedron is the Cartesian product.

\begin{proposition}
Given entropy functions $S_1,S_2$ on disjoint sets ${X}_1,{X}_2$ respectively, the function $S:2^{X_1\cup X_2}\to\R$ defined by
\be\label{setunion}
S(A_1\cup A_2):=S_1(A_1)+S_2(A_2)\qquad(A_1\subseteq {{X}_1},A_2\subseteq {{X}_2})
\ee
is an entropy function, and its entropohedron is
\be
F_S=F_{S_1}\times F_{S_2}\,.
\ee
\end{proposition}

\subsection{Saturation \& positivity}
\label{sec:saturation}

The boundary of $F_S$ is the locus where one or more of the defining inequalities \eqref{EDFdef} is saturated. One can therefore ask under what circumstances some inequalities can be saturated in a single EDF. One may also want to know whether an EDF can be made everywhere non-negative, or non-positive, on a given set of parties while saturating an inequality. The following theorem addresses these questions, showing that nesting is the key property for saturating multiple inequalities. Specifically, one can saturate the inequality on any nested set of subsets, while keeping the EDF non-negative on the smallest subset. In fact, one can do better, saturating the inequality with one sign on one nested set of subsets and with the other sign on another nested set of subsets disjoint from the first set.
\begin{theorem}\label{thm:saturatepos}
Let $S$ be an entropy function for $X$, and let $A_1\subset\cdots\subset A_m$ and $B_1\subset\cdots\subset B_n$ be subsets of $X$ such that $A_m\cap B_n=\emptyset$. Then there exists an EDF $f$ such that:
\begin{itemize}
\item $\forall\,i=1,\ldots,m$, $\sum_{x\in A_i}f(x)=S(A_i)$;
\item $\forall\,j=1,\ldots,n$, $\sum_{y\in B_j}f(y)=-S(B_j)$;
\item $\forall\,x\in A_1$, $f(x)\ge0$;
\item $\forall\,y\in B_1$, $f(y)\le0$.
\end{itemize}
\end{theorem}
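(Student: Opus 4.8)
I would argue by linear-programming duality, in the spirit of the flow proof of Theorem~\ref{thm:strictnesting} but with the two sign conditions built into the primal from the outset. (We may assume $A_1,B_1\neq\emptyset$, since an empty $A_1$ or $B_1$ imposes no constraint and can be dropped.) First I would recast the four requirements as a single optimality statement. Let
\begin{equation*}
P:=\Big\{f\in\R^X:\ \forall C\in2^X\ \big|\textstyle\sum_{x\in C}f(x)\big|\le S(C),\ \ f|_{A_1}\ge0,\ \ f|_{B_1}\le0\Big\},
\end{equation*}
which is nonempty ($0\in P$) and bounded (since $|f(x)|\le S(\{x\})$). On $P$ we have $\sum_{x\in A_i}f(x)\le S(A_i)$ and $-\sum_{y\in B_j}f(y)\le S(B_j)$, so the linear functional $\Phi(f):=\sum_i\sum_{x\in A_i}f(x)-\sum_j\sum_{y\in B_j}f(y)$ obeys $\Phi\le T^*:=\sum_iS(A_i)+\sum_jS(B_j)$ on $P$, with equality \emph{only if} every one of these inequalities is saturated --- which is exactly the conclusion of the theorem. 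So it suffices to show $\max_{f\in P}\Phi=T^*$; ``$\le$'' is the remark just made, and for ``$\ge$'' I would invoke strong LP duality (valid since $P$ is a nonempty bounded polytope: the maximum is attained and equals the dual minimum).

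Dualizing: attach nonnegative measures $\mu_+,\mu_-$ on $2^X$ to the two families of entropy inequalities and nonnegative functions $\lambda_A,\lambda_B$ on $X$, supported on $A_1,B_1$, to the sign conditions. Eliminating $f$, the dual is to minimize $\sum_C(\mu_+(C)+\mu_-(C))S(C)$ subject to
\begin{equation*}
\forall z\in X:\quad \sum_{C\ni z}\big(\mu_+(C)-\mu_-(C)\big)=\#\{i:z\in A_i\}-\#\{j:z\in B_j\}+\lambda_A(z)-\lambda_B(z).
\end{equation*}
Taking $\mu_+=\sum_i\delta_{A_i}$, $\mu_-=\sum_j\delta_{B_j}$, $\lambda_A=\lambda_B=0$ is feasible with objective $T^*$, so the dual minimum is $\le T^*$; the work is the reverse bound. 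Given any feasible configuration, I would apply the disentangling procedure of Lemma~\ref{thm:disentanglingcont} (in its discrete form) to $\mu_\pm$ --- uncrossing incomparable pairs in each support using SSA, and separating the two supports using WM. This leaves $\sum_{C\ni z}(\mu_+(C)-\mu_-(C))$ unchanged (so the balance condition still holds), does not increase the objective, and makes $\mathrm{supp}(\mu_+)$ and $\mathrm{supp}(\mu_-)$ into chains, every member of one disjoint from every member of the other. For $z\in A_m$ the right-hand side of the balance condition is $\ge1$, which with the disjointness forces $z$ into some set of $\mathrm{supp}(\mu_+)$ and into no set of $\mathrm{supp}(\mu_-)$; for $z\notin A_m$ it forces $z$ into no set of $\mathrm{supp}(\mu_+)$. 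Hence $\mathrm{supp}(\mu_+)$ covers exactly $A_m$, so (being nested) its top element is $A_m$, and $a(z):=\sum_{C\ni z}\mu_+(C)$ equals $v(z)+\lambda_A(z)$ with $v:=\sum_i\chi_{A_i}$.

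Because $\mu_+$ now lives on a chain, its contribution $\sum_C\mu_+(C)S(C)$ to the objective equals the Lov\'asz-extension integral $\widehat S(a):=\int_0^\infty S(\{z:a(z)\ge\tau\})\,d\tau$, and similarly $\widehat S(v)=\sum_iS(A_i)$. Since $v\equiv m$ on $A_1$ (its maximum), $v=a$ off $A_1$, and $\lambda_A\ge0$ is supported on $A_1$, the level sets of $a$ and $v$ coincide for all $\tau\le m$ while $\{z:a(z)\ge\tau\}\subseteq A_1$ for $\tau>m$; as $S\ge0$,
\begin{equation*}
\widehat S(a)-\widehat S(v)=\int_0^\infty S\big(\{z\in A_1:\lambda_A(z)>s\}\big)\,ds\ \ge\ 0 .
\end{equation*}
So the $\mu_+$-part of the objective is $\ge\sum_iS(A_i)$; the mirror-image argument applied to $B_n$ gives the $\mu_-$-part $\ge\sum_jS(B_j)$. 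Hence the dual objective is $\ge T^*$, so the dual minimum --- and therefore the primal maximum --- is $T^*$, and any maximizer $f^*\in P$ is the required EDF.

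The step I expect to be the main obstacle is the inequality $\widehat S(a)\ge\widehat S(v)$ --- equivalently, the assertion that imposing $f\ge0$ on $A_1$ (and $f\le0$ on $B_1$) does not lower the optimum. This is the one place where positivity of $S$, rather than just SSA and WM, is used, and it is what distinguishes this theorem from the plain nesting statement (Theorem~\ref{thm:strictnesting}); the key point is that, once one has disentangled, the sign multiplier $\lambda_A$ can only add to $a$ extra level sets lying inside $A_1$, whose entropies are automatically nonnegative. The rest is the familiar relax--dualize--disentangle routine. (As an optional simplification one may first use the extension theorem~\ref{thm:restriction} to reduce to $X=A_m\cup B_n$, removing the bookkeeping for points outside $A_m\cup B_n$, though this is inessential.)
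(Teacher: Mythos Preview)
Your proof is correct and follows essentially the same route as the paper's: set up the primal LP maximizing $\Phi(f)$ over the constrained polytope, dualize, apply the disentangling corollary to reduce $\mu_\pm$ to nested, mutually disjoint chains, and then argue that the dual objective is at least $T^*$. The only differences are cosmetic: the paper keeps the sign constraints $f|_{A_1}\ge0$, $f|_{B_1}\le0$ implicit (so they appear as inequalities on $\phi_\mu-\phi_\nu$ over $A_1$ and $B_1$ in the dual), whereas you introduce explicit multipliers $\lambda_A,\lambda_B$; and where the paper simply observes that after disentangling the extra support of $\mu$ must lie inside $A_1$ and therefore (by $S\ge0$) can only increase the objective, you phrase the same bound via the Lov\'asz-extension integral $\widehat S(a)\ge\widehat S(v)$. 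Your identification of $S\ge0$ as the ingredient needed precisely for the sign-constraint part matches the paper's footnoted remark.
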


The proof may be found in subsection \ref{sec:EDFproofs}. It is easy to see the reason for each of the conditions in theorem \ref{thm:saturatepos}. For example, for any sets $A,A'$, we have
\be
\sum_{x\in A}f(x)+\sum_{x\in A'}f(x)
=\sum_{x\in A\cup A'}f(x)+
\sum_{x\in A\cap A'}f(x)\le S(A\cup A')+S(A\cap A')\,.
\ee
Therefore, it is possible to positively saturate on both $A$ and $A'$ only if
\be
S(A)+S(A')=S(A\cup A')+S(A\cap A')\,.
\ee
If $A_1,A_2$ are nested, then this is always true, but otherwise it need not be. This is why the theorem requires the $A_i$s to be nested. A similar argument shows that the $B_j$s must also be nested, and disjoint from the $A_i$s.

We can also see why $f$ can be guaranteed to be non-negative only on $A_1$: if $S(A_2)<S(A_1)$, and we want to saturate on $A_1$, then $f(x)$ must be negative for at least one $x\in A_2\setminus A_1$. Similarly, we can guarantee that $f$ is non-positive only on $B_1$.

Specialized to the case $m=1$, $n=0$, theorem \ref{thm:saturatepos} gives us a function $f$ that, within $A_1$, is non-negative and sums to $S(A_1)$. This makes it an example of an \emph{entanglement contour} \cite{Vidal:2014aal}. The purpose of the entanglement contour is to characterize the distribution of entanglement between $A_1$ and its complement. However, our viewpoint here is that, more than any one EDF, it is the set of \emph{all} EDFs --- the entropohedron --- that best captures the distribution of entanglement in the given state.

A compact convex polytope is determined by its extremal points, being equal to their convex hull. The extremal points of $F_S$ are the EDFs that saturate $|X|$ independent inequalities, since $F_S$ is in an $|X|$-dimensional space. Furthermore, each one can be saturated with either sign. In view of theorem \ref{thm:saturatepos}, the extremal EDFs can thus be classified: Choosing a partition of ${X}$ into two subsets $A,B$, and an order for each one, $A=\{x_1,x_2,\ldots\}$, $B=\{y_1,y_2,\ldots\}$, we can saturate all of the subsets $\{x_1\},\{x_1,x_2\},\ldots$ positively and all of the subsets $\{y_1\},\{y_1,y_2\},\ldots$ negatively. This construction gives a total of $(|{X}|+1)!$ extremal points (although, in a non-generic state, they may not all be distinct).

The dual polyhedron $F_S^*$ is defined as the set of linear functionals $g$ on $\R^X$ such that, for any $f\in F_S$, $g(f)\le1$. $F_S^*$ is the convex hull of the linear functionals $g_A^\pm$, where
\be
g_A^\pm(f):=\pm\frac1{S(A)}\sum_{x\in A}f(x)\,,
\ee
for all non-empty $A\subseteq X$. (If $S(A)=0$ for some non-empty $A$, then the dual polyhedron extends to infinity in the corresponding direction in the dual space to $\R^X$.) Because every region $A$ can be saturated, i.e.\ there exist EDFs $f$ such that
\be\label{satdef}
\sum_{x\in A}f(x)=\pm S(A)\,,
\ee
every functional $g^\pm_A$ is on the boundary of $F_S^*$. Furthermore, in a generic state, where the equations \eqref{satdef} define faces of $F_S$, the functionals $g_A^\pm$ are vertices (extremal points) of $F_S^*$.

\subsection{Information quantities}

\begin{figure}
    \centering
    \includegraphics[width=0.99\linewidth]
    {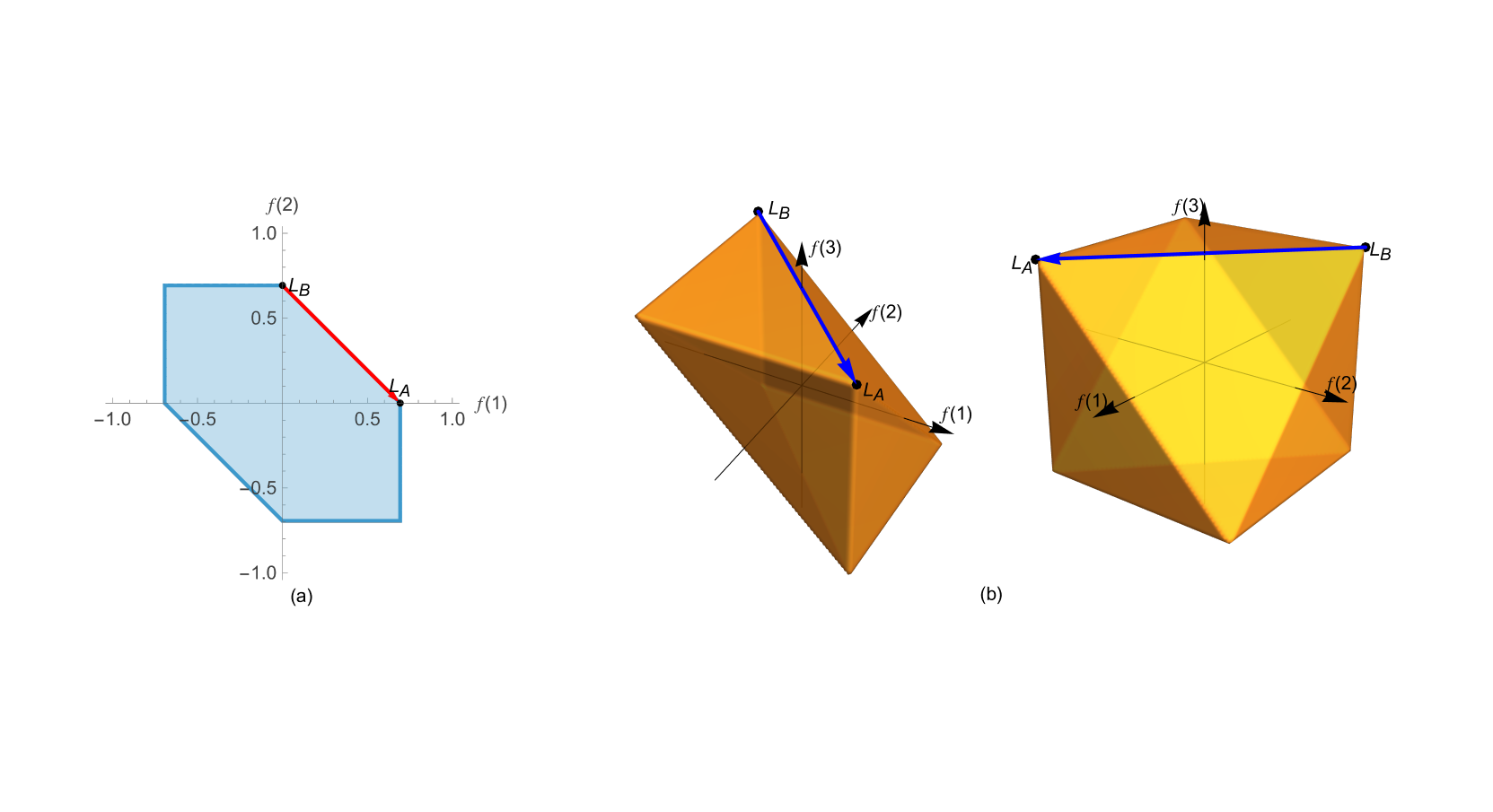}
    \caption{Geometric manifestation of information quantities on the entropohedron: (a) The entropohedron for a correlated pair of bits (as shown in Fig.\ \ref{fig:twoqubit}(b)); with $A=\{1\}$, $B=\{2\}$, we show the loci $L_A$, $L_B$ and their relative displacement, the vector $\vec I(A:B)$ (red), which quantifies the mutual information (see \eqref{MIvecdef}). (b) Two views of the entropohedron for the marginal of a four-party perfect tensor (as shown in Fig.\ \ref{fig:threequbit}(d)); with $A=\{1\}$, $B=\{2\}$, $C=\{3\}$, we show the loci $L_A$, $L_B$ and their relative displacement,, the vector $\vec I(A:B|C)$ (blue), which quantifies the conditional mutual information (see \eqref{CMIvecdef}).
    }
    \label{fig:IQs}
\end{figure}

The fact that nested sets can be saturated means that information quantities like the mutual information and conditional mutual information have simple geometric manifestations in the entropohedron. Let us start with the mutual information,
\be
I(A:B):=S(A)+S(B)-S(AB)\,,
\ee
where $A,B$ are disjoint sets and $AB:=A\cup B$. Let $L_A$ be the locus in $F_S$ where $A$ and $AB$ are both saturated, which is generically a codimension-2 edge on the boundary of $F_S$. On $L_A$,
\be
\sum_{x\in A}f(x)=S(A)\,,\qquad\sum_{x\in B}f(x)=S(AB)-S(A)\,.
\ee
Let $L_B$ be the locus where $B$ and $AB$ are saturated, also generically a codimension-2 edge of $F_S$. On $L_B$,
\be
\sum_{x\in A}f(x)=S(AB)-S(B)\,,\qquad\sum_{x\in B}f(x)=S(B)\,.
\ee
$L_A$ and $L_B$ both adjoin the codimension-1 face of $F_S$ where $AB$ is saturated; in fact, they are opposite edges of that face. They sit in parallel planes in $\R^X$, defined by constancy of the same linear functionals, namely $\sum_{x\in A}f(x)$ and $\sum_{x\in B}f(x)$. The relative displacement between those planes is
\be\label{MIvecdef}
\vec I(A:B)=
I(A:B)\left(\vec v(A)-\vec v(B)\right),
\ee
where $\vec v(A)$ is the vector in $\R^X$ with $x$-component 1 if $x\in A$ and 0 if $x\notin A$. See Fig.\ \ref{fig:IQs}(a) for an example. Note that, whereas the scalar quantity $I(A:B)$ is symmetric in its arguments, the vector $\vec I(A:B)$ is antisymmetric.

Similar reasoning allows us to read off the conditional mutual information:
\be
I(A:B|C):=S(AC)+S(BC)-S(C)-S(ABC)\,,
\ee
where $A,B,C$ are disjoint. We define $L_1$ as the locus in $F_S$ where $C,AC,ABC$ are all saturated, and $L_2$ as the locus where $B,BC,ABC$ are all saturated. On $L_1$ we have
\be
\sum_{x\in A}f(x)=S(AC)-S(C)\,,\qquad
\sum_{x\in B}f(x)=S(ABC)-S(AC)\,,\qquad
\sum_{x\in C}f(x)=S(C)\,,
\ee
while on $L_2$ we have
\be
\sum_{x\in A}f(x)=S(ABC)-S(BC)\,,\qquad
\sum_{x\in B}f(x)=S(BC)-S(C)\,,\qquad
\sum_{x\in C}f(x)=S(C)\,.
\ee
These are generically codimension-3 edges of $F_S$. They are opposite edges of the codimen\-sion-2 face of $F_S$ where $C,ABC$ are saturated. They sit in parallel planes in $\R^X$, which are displaced relative to each other by the vector
\be\label{CMIvecdef}
\vec I(A:B|C)=I(A:B|C)\left(\vec v(A)-\vec v(B)\right).
\ee
See Fig.\ \ref{fig:IQs}(b) for an example.

More complicated information quantities can be obtained by combining these. For example, for the (negative) tripartite information,
\be\label{I3def}
-I_3(A:B:C):=S(AB)+S(BC)+S(AC)-S(AB)-S(B)-S(C)-S(ABC)\,,
\ee
we can compute the difference between a mutual information vector and a conditional mutual information vector:
\be
\vec I(A:B|C)-\vec I(A:B)=-I_3(A:B:C)\left(\vec v(A)-\vec v(B)\right).
\ee

\subsection{Transformations of the entropy function}

We will now discuss several ways that an entropy function may be transformed into another one: adding and removing parties; merging and splitting parties; and minimizing over a set of parties. We will see that, in every case, the entropohedron transforms in a simple way.

\subsubsection{Adding \& removing parties, purification}

Next, we will see that EDFs and entropohedra are well-behaved under operations that replace the set ${X}$ with a larger or smaller set.

\begin{proposition}\label{thm:restriction1}
If $S$ is an entropy function for $X$ and ${X}'\subset{X}$, then the restriction of $S$ to subsets $A\subseteq X'$ is an entropy function $S'$ for $X'$. Furthermore, the restriction to ${X}'$ of any EDF $f$ for $S$ is an EDF for $S'$.
\end{proposition}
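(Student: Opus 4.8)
The plan is to verify both assertions directly from the definitions; the proposition is essentially bookkeeping, resting on the single observation that every subset of $X'$ is also a subset of $X$, and that the set operations $\cap$, $\cup$, $\setminus$ applied to subsets of $X'$ again produce subsets of $X'$.

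First I would treat the statement about the entropy function. Set $S':=S|_{2^{X'}}$. Non-negativity of $S'$ and the condition $S'(\emptyset)=S(\emptyset)=0$ (i.e.\ \eqref{props1}) are immediate. For strong subadditivity \eqref{props3}, given $A,B\in 2^{X'}$, note that $A\cap B$ and $A\cup B$ again lie in $2^{X'}$, so \eqref{props3} applied to $S$ reads $S'(A)+S'(B)=S(A)+S(B)\ge S(A\cap B)+S(A\cup B)=S'(A\cap B)+S'(A\cup B)$. The verification of weak monotonicity \eqref{props4} is identical, using that $A\setminus B$ and $B\setminus A$ also lie in $2^{X'}$. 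Hence $S'$ is an entropy function for $X'$.

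Next I would treat the statement about EDFs. Let $f$ be an EDF for $S$ and put $f':=f|_{X'}$. For any $A\in 2^{X'}$ we have $A\in 2^{X}$ as well, and the sum $\sum_{x\in A}f'(x)=\sum_{x\in A}f(x)$ is literally the same quantity; therefore $\bigl|\sum_{x\in A}f'(x)\bigr|\le S(A)=S'(A)$, which is exactly the defining inequality \eqref{EDFdef} for $f'$ with respect to $S'$. So $f'$ is an EDF for $S'$, and equivalently the coordinate projection $\R^{X}\to\R^{X'}$ maps $F_S$ into $F_{S'}$.

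I do not expect any real obstacle here: the only point requiring a moment's care is confirming that the set-theoretic operations in SSA and WM stay within $2^{X'}$, which is obvious. It is worth flagging, though, that this easy direction is asymmetric with the genuinely substantive direction --- extending an EDF from $X'$ to all of $X$, or inflating $F_{S'}$ back up to (a projection of) $F_S$ --- which is the content of theorem \ref{thm:restriction} and which does require strong subadditivity.
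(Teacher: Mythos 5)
Your proof is correct and matches the paper, which simply notes that both statements "follow directly from the definitions"; your write-up spells out exactly that direct verification, with the key (and only) point being that $\cap$, $\cup$, $\setminus$ keep you inside $2^{X'}$. Your closing remark correctly identifies that the substantive converse direction is theorem \ref{thm:restriction}, which is where SSA and WM actually get used.
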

These statements follow directly from the definitions. More non-trivial is the other direction, that an EDF for $S'$ can always be completed to an EDF for $S$:

\begin{theorem}\label{thm:restriction}
Let $S$ be an entropy function for ${X}$, $S'$ its restriction to ${X}'\subset{X}$, and $f'$ an EDF for $S'$. Then there exists an EDF $f$ for $S$ that equals $f'$ on $X'$.
\end{theorem}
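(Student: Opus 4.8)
The plan is to reduce to the case $|X\setminus X'|=1$ and then iterate. I would fix a chain $X'=X_0\subset X_1\subset\cdots\subset X_k=X$ with $X_\ell=X_{\ell-1}\cup\{z_\ell\}$, and let $S_\ell$ be the restriction of $S$ to $2^{X_\ell}$; by Proposition~\ref{thm:restriction1} each $S_\ell$ is an entropy function for $X_\ell$, and $f_0:=f'$ is an EDF for $S_0$. Then I would inductively extend $f_{\ell-1}$ to an EDF $f_\ell$ for $S_\ell$ agreeing with $f_{\ell-1}$ on $X_{\ell-1}$, so that $f:=f_k$ is the desired EDF for $S=S_k$.

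For the single-party step, abbreviate $Y:=X_{\ell-1}$, $z:=z_\ell$, $\bar f:=f_{\ell-1}$, and set $F(A):=\sum_{x\in A}\bar f(x)$ for $A\subseteq Y$. Extending $\bar f$ amounts to choosing the single value $t:=f_\ell(z)$. For $A\subseteq Y$ the inequality $|\sum_{x\in A}f_\ell(x)|\le S(A)$ already holds since $\bar f$ is an EDF for $S|_Y$, so the only new constraints are, for $A=A'\cup\{z\}$ with $A'\subseteq Y$, that $|F(A')+t|\le S(A'z)$, i.e. $-S(A'z)-F(A')\le t\le S(A'z)-F(A')$. A value $t$ satisfying all of these simultaneously exists if and only if
\be
F(B')-F(A')\le S(A'z)+S(B'z)\qquad\text{for all }A',B'\subseteq Y\,.
\ee
To prove this, put $P:=A'\setminus B'$ and $Q:=B'\setminus A'$. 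First, $F(B')-F(A')=\sum_{x\in Q}\bar f(x)-\sum_{x\in P}\bar f(x)\le S(Q)+S(P)$, using the EDF property of $\bar f$ on the subsets $Q,P\subseteq Y$. Second, since $(A'z)\setminus(B'z)=P$ and $(B'z)\setminus(A'z)=Q$, weak monotonicity~\eqref{props4} applied to the pair $A'z,\,B'z$ gives $S(A'z)+S(B'z)\ge S(P)+S(Q)$. Chaining the two bounds yields the displayed inequality, so an admissible $t$ exists and $f_\ell$ is an EDF for $S_\ell$.

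Iterating over $\ell=1,\ldots,k$ finishes the argument (the cases $X'=X$ and $X'=\emptyset$ are covered automatically by $k=0$ and by building up from the empty function). The only genuinely substantive point is the nonemptiness of the admissible interval for $t$, and the one idea needed there is to route the estimate through the difference sets $A'\setminus B'$ and $B'\setminus A'$, which is exactly what lets the EDF property of $\bar f$ mesh with weak monotonicity; I do not anticipate any obstacle beyond assembling this inequality chain. It is worth noting that in this discrete setting the structural input is weak monotonicity~\eqref{props4} rather than SSA, and that the reduction to adding one party at a time is entirely routine.
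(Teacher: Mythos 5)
Your proof is correct, but it takes a genuinely different route from the paper's. The paper proves theorem \ref{thm:restriction} by convex duality: it sets up the program of minimizing $\sum_{x\in X'}|f(x)-f'(x)|$ over all EDFs $f$ for $S$, dualizes it with multipliers $\mu,\nu$ on $2^X$, and uses the disentangling corollary \ref{thm:disentangling} (specifically the disjoint-support part, lemma \ref{thm:disentangling1}, which rests on weak monotonicity) to force $\mu,\nu$ onto $2^{X'}$, after which the EDF property of $f'$ makes the dual optimum zero. Your argument instead adds one party at a time and reduces the extension to a one-dimensional feasibility problem: the new value $t=f(z)$ must lie in the intersection of the intervals $[-S(A'z)-F(A'),\,S(A'z)-F(A')]$, and pairwise consistency of these finitely many intervals follows by routing the estimate through the difference sets $P=A'\setminus B'$, $Q=B'\setminus A'$, using the EDF property of $\bar f$ on $P,Q$ together with weak monotonicity \eqref{props4} applied to the pair $(A'z,B'z)$. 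This chain of inequalities is valid, the interval-intersection (Helly in one dimension) step is unproblematic for a finite family, and the induction is routine, so the proof goes through. What your approach buys is an elementary, constructive extension procedure with no appeal to strong duality or to the disentangling machinery; what the paper's approach buys is uniformity with the proofs of theorems \ref{thm:saturatepos} and \ref{thm:refine} and a form that transfers to the continuum/flow versions of these statements, where a greedy point-by-point extension is not available. Your closing observation that the structural input is weak monotonicity rather than SSA agrees with the paper's own parenthetical remark in its proof (only the disjoint-support half of the disentangling corollary, hence only WM, is used there), even though the narrative in subsection \ref{sec:subcontours} loosely attributes the theorem to SSA.
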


The proof of this theorem, which may be found in subsection \ref{sec:EDFproofs}, makes use of the SSA and WM properties \eqref{props3}, \eqref{props4}. Together, proposition \ref{thm:restriction1} and theorem \ref{thm:restriction} imply that, under restriction, the entropohedron transforms simply by projection:

\begin{corollary}\label{thm:restrictprojection}
Let $S$ be an entropy function for ${X}$ and $S'$ its restriction to ${X}'\subset{X}$. Then $F_{S'}$ equals the image of $F_S$ under the projection from $\R^{X}$ to $\R^{{X}'}$.
\end{corollary}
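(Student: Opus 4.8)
The plan is to prove the two set inclusions $\pi(F_S)\subseteq F_{S'}$ and $F_{S'}\subseteq\pi(F_S)$ separately, where $\pi:\R^X\to\R^{X'}$ denotes the coordinate projection that forgets the components indexed by $X\setminus X'$. The key observation tying everything together is that, under the identification of a function $f:X\to\R$ with a vector in $\R^X$, the projection $\pi(f)$ is precisely the restriction of $f$ to $X'$; so the corollary is purely a matter of translating the two already-established theorems into the language of projections.

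For the first inclusion, I would take an arbitrary $f\in F_S$, i.e.\ an EDF for $S$. By proposition \ref{thm:restriction1}, its restriction to $X'$ is an EDF for $S'$, and since that restriction equals $\pi(f)$, we get $\pi(f)\in F_{S'}$; as $f$ was arbitrary, $\pi(F_S)\subseteq F_{S'}$. For the second inclusion, I would take an arbitrary $f'\in F_{S'}$. Theorem \ref{thm:restriction} furnishes an EDF $f$ for $S$ with $f|_{X'}=f'$, i.e.\ $\pi(f)=f'$, so $f'\in\pi(F_S)$; hence $F_{S'}\subseteq\pi(F_S)$. Combining the two inclusions gives $F_{S'}=\pi(F_S)$.

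There is no real obstacle here: the substantive content — the extension of an EDF on a subset to an EDF on the whole set, which requires SSA and WM — is exactly theorem \ref{thm:restriction}, invoked as a black box. The only minor point worth stating explicitly in the write-up is that $S'$, being the restriction of $S$, is itself a genuine entropy function (so that ``$F_{S'}$'' is well-defined), which is the content of the first half of proposition \ref{thm:restriction1}. Everything else is a one-line application of the definitions.
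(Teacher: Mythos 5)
Your proposal is correct and follows exactly the paper's route: the paper derives the corollary by combining proposition \ref{thm:restriction1} (restriction of an EDF for $S$ is an EDF for $S'$, giving $\pi(F_S)\subseteq F_{S'}$) with theorem \ref{thm:restriction} (every EDF for $S'$ extends to one for $S$, giving the reverse inclusion), precisely as you do.
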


\begin{figure}
    \centering
    \includegraphics[width=0.4\linewidth]
    {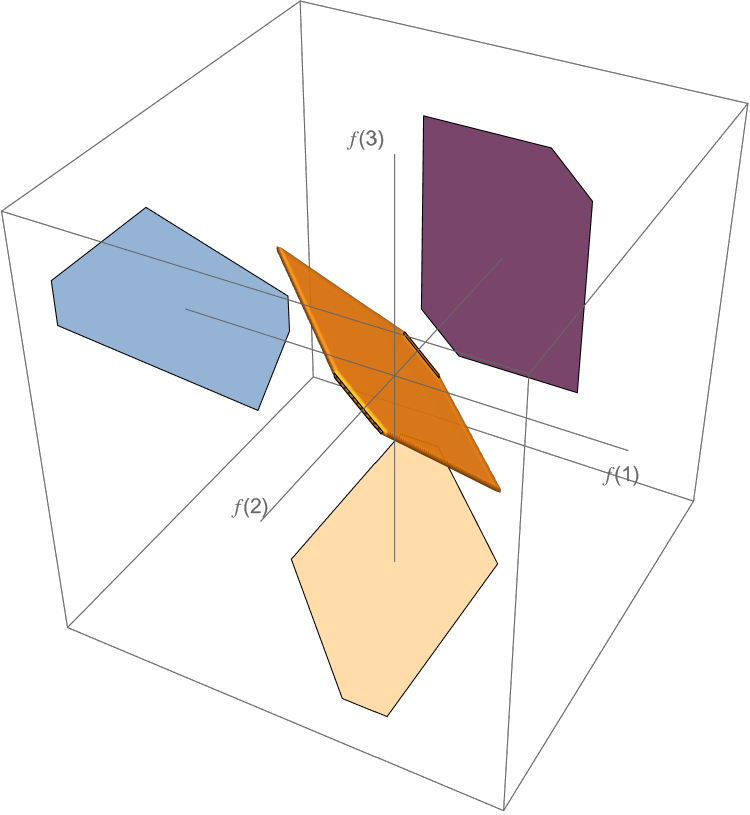}
    \caption{Entropohedron $F_S$ for a pure state on three parties labelled $1,2,3$ (orange), along with the entropohedra for the same state reduced to two parties $1,2$ (tan), $2,3$ (blue), and $1,3$ (purple), obtained by projecting $F_S$ along the axes. Due to the purity of the state on $123$, these projections are bijective.}
    \label{fig:purification}
\end{figure}

An application of this corollary is to purification. The entropy function $S$ on ${X}$ is said to \emph{purify} the entropy function $S'$ on ${X}'\subset{X}$ if $S({X})=0$ and, for all $A\subseteq X'$, $S(A)=S'(A)$. Given ${X}'$ and $S'$, a minimal purification can be constructed by adding a single element $O$ to ${X}'$, so ${X}={X}'\cup\{O\}$, and defining $S$ as follows:
\be
S(A) := S'(A)\quad(A\not\ni O)\,,\qquad
S(A):=S'({X}\setminus A)\quad (A\ni O)\,.
\ee
Since $S({X})=0$, for any EDF on $S$ we must have $\sum_{x\in{X}}f(x)=0$. Given an EDF $f'$ for $S'$, there therefore exists a unique extension $f$ for $S$, setting $f(O)=-\sum_{x\in{X}'}f(x)$. Geometrically, the entropohedron $F_{S'}\subset \R^{{X}'}$ gets lifted onto the ``purity'' hyperplane
\be
\left\{f\in\R^X:\sum_{x\in X}f(x)=0\right\}
\ee
to give $F_S$. Removing from ${X}$ any element of ${X}'$ to obtain a set ${X}''$ then projects $F_S$ down to $F_{S''}\subset \R^{{X}''}$. All of these maps are bijective (since they are linear and dimension-preserving), so no information is lost in the process. This is illustrated in Fig. \ref{fig:purification}.

\subsubsection{Merging \& splitting parties}

We can also ask how EDFs and the entropohedron transform under merging and splitting parties. Given a subset ${X}'$ of ${X}$, we can merge the elements of ${X}\setminus{X}'$ into a single party, which we call $R$, making the set $Y:={X}'\cup \{R\}$. We can then turn an entropy function $S$ for $X$ into an entropy function $T$ for $Y$, and an EDF for $S$ into an EDF for $T$, in the obvious ways:
\begin{proposition}\label{thm:merging}
If $X'\subset X$ then, given an entropy function $S$ for $X$, we obtain an entropy function $T$ for $Y:={X}'\cup \{R\}$ as follows:
\be\label{mergedEV}
\forall\,A\subseteq X'\,,\qquad T(A):=S(A)\,,\qquad T(A\cup\{R\}):=S(A\cup({X}\setminus{X}'))\,.
\ee
Furthermore, given an EDF $f$ for $S$, then we obtain an EDF $g$ for $T$ as follows:
\be\label{compatible}
g(x)=f(x) \quad(x\in{X}')\,,\qquad
g(R)=\sum_{x\in{X}\setminus{X}'}f(x)\,.
\ee
\end{proposition}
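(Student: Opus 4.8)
The plan is a direct verification from the definitions, organized by cases according to whether the merged party $R$ occurs in the sets being considered. Write $W:=X\setminus X'$ for the merged block. The content of \eqref{mergedEV} is that $T$ is obtained from $S$ via the dictionary $R\leftrightarrow W$: a subset of $Y$ containing $R$ has the form $A'\cup\{R\}$ with $A'\subseteq X'$, and $T(A'\cup\{R\})=S(A'\cup W)$, while a subset of $Y$ not containing $R$ is a subset of $X'$ on which $T=S$. All the identities below are instances of the fact that, since $B\subseteq X'$ is disjoint from $W$, one has $(A'\cup W)\cap(B'\cup W)=(A'\cap B')\cup W$, $(A'\cup W)\setminus(B'\cup W)=A'\setminus B'$, $(A'\cup W)\cap B=A'\cap B$, and $(A'\cup W)\setminus B=(A'\setminus B)\cup W$.

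First I would check that $T$ is an entropy function. Non-negativity is inherited from $S$, and $T(\emptyset)=S(\emptyset)=0$. For SSA \eqref{props3} and WM \eqref{props4}, given $A,B\subseteq Y$ I split into three cases. If neither contains $R$, then $A,B\subseteq X'$ and the inequality for $T$ is literally the corresponding inequality for $S$. If both contain $R$, write $A=A'\cup\{R\}$, $B=B'\cup\{R\}$ with $A',B'\subseteq X'$, and apply the relevant inequality for $S$ to the pair $A'\cup W$, $B'\cup W$; the set identities above show that $A\cap B$, $A\cup B$ (resp.\ $A\setminus B$, $B\setminus A$) are sent by the dictionary to $(A'\cup W)\cap(B'\cup W)$, $(A'\cup W)\cup(B'\cup W)$ (resp.\ $A'\setminus B'$, $B'\setminus A'$), so the inequality for $T$ reduces to that for $S$. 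If exactly one contains $R$, say $R\in A$ and $R\notin B$, write $A=A'\cup\{R\}$ with $A',B\subseteq X'$ and apply the inequality for $S$ to the pair $A'\cup W$, $B$; again the identities line up so the inequality for $T$ follows.

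Second I would check that $g$, defined in \eqref{compatible}, is an EDF for $T$. For $A\subseteq Y$ with $R\notin A$, we have $\sum_{x\in A}g(x)=\sum_{x\in A}f(x)$, whose absolute value is at most $S(A)=T(A)$ since $f$ is an EDF for $S$. For $A=A'\cup\{R\}$ with $A'\subseteq X'$, disjointness of $A'$ and $W$ together with the definition of $g(R)$ gives $\sum_{x\in A}g(x)=\sum_{x\in A'}f(x)+\sum_{x\in W}f(x)=\sum_{x\in A'\cup W}f(x)$, whose absolute value is at most $S(A'\cup W)=T(A)$, again because $f$ is an EDF for $S$. This exhausts all $A\subseteq Y$.

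There is essentially no obstacle; the proposition is really a bookkeeping statement. The only point requiring mild care is the ``exactly one contains $R$'' case of SSA and WM, where one must confirm the set identities $(A'\cup W)\cap B=A'\cap B$ and $(A'\cup W)\setminus B=(A'\setminus B)\cup W$ — these hold precisely because $B\subseteq X'$ is disjoint from $W$ — so that the $T$-inequality becomes a genuine instance of the $S$-inequality rather than something stronger. Once those identities are recorded, every case is a one-line reduction to the corresponding property of $S$.
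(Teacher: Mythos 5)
Your verification is correct and matches the paper's intent: the paper states this proposition without written proof, treating it as immediate from the definitions via the identification $R\leftrightarrow X\setminus X'$, and your case analysis simply spells out that routine check. The set identities you record in the mixed case are exactly the ones needed, so nothing is missing.
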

\eqref{compatible} defines a linear map $M:\R^X\to\R^Y$, that acts as the identity on the subspace $\R^{X'}$ and by projection along the vector $(1,\ldots,1)$ on the subspace $\R^{X\setminus X'}$.

More non-trivial is to go in the other direction, that is, given an EDF $g$ for $T$ on the coarsened set $Y$, to define an EDF $f$ for $S$ on the refined set ${X}$ that is compatible with $T$, in the sense of \eqref{compatible}. The following theorem says that this is possible.
\begin{theorem}\label{thm:refine}
Let $S$ be an entropy function for ${X}$ and $T$ the entropy function for $Y:={X}'\cup \{R\}$ defined by \eqref{mergedEV}. Given an EDF $g$ for $T$, there exists an EDF $f$ for $S$ obeying \eqref{compatible}.
\end{theorem}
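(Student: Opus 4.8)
The plan is to prove Theorem~\ref{thm:refine} by induction on $k:=|X\setminus X'|$, the number of parties merged into $R$; write $X'':=X\setminus X'$ and, for a set $B$, abbreviate $g(B):=\sum_{x\in B}g(x)$ (and similarly for any function on parties). The base cases $k\le 1$ are immediate: if $X''=\varnothing$ then $g(R)$ is forced to vanish and $f:=g$ works; if $X''=\{a\}$ then $T$ is literally isomorphic to $S$ under the relabelling $R\leftrightarrow a$, so $f(x):=g(x)$ for $x\in X'$ and $f(a):=g(R)$ is an EDF for $S$ obeying \eqref{compatible}.

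For the inductive step, suppose $k\ge 2$ and the theorem holds for all smaller values. Fix $a\in X''$, put $X''':=X''\setminus\{a\}$, and let $\bar S$ be the entropy function on $\bar Y:=X'\cup\{a,\bar R\}$ obtained from $S$ by merging $X'''$ into a single party $\bar R$; this is an entropy function by Proposition~\ref{thm:merging}, and $\bar S(B)=S(B)$ for $B\subseteq X'\cup\{a\}$ while $\bar S(B\cup\{\bar R\})=S(B\cup X''')$. A direct check from the definitions shows that merging $\{a,\bar R\}$ into one party $R$ in $\bar S$ reproduces exactly $T$. I would first apply the case $k=2$ (proved below) to the data $(\bar S,\bar Y,g)$, obtaining an EDF $\bar g$ for $\bar S$ with $\bar g|_{X'}=g|_{X'}$ and $\bar g(a)+\bar g(\bar R)=g(R)$; then apply the inductive hypothesis to $S$ on $X$ with kept set $X'\cup\{a\}$, merged party $\bar R$ (representing $X'''$, of size $k-1$), and EDF $\bar g$, obtaining an EDF $f$ for $S$ with $f|_{X'\cup\{a\}}=\bar g|_{X'\cup\{a\}}$ and $\sum_{x\in X'''}f(x)=\bar g(\bar R)$. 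Then $f|_{X'}=g|_{X'}$ and $\sum_{x\in X''}f(x)=f(a)+\sum_{x\in X'''}f(x)=\bar g(a)+\bar g(\bar R)=g(R)$, as required.

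It remains to treat the case $k=2$, say $X''=\{a,b\}$, which is where strong subadditivity and weak monotonicity are used. Here there is a single real degree of freedom: $f$ is fixed on $X'$ by $g$, and once we choose $f(a)=:t$ we must set $f(b)=g(R)-t$. Splitting each subset of $X$ into its part in $X'$ and its part in $\{a,b\}$, the condition that $f$ be an EDF for $S$ becomes: (i)~$|g(A')|\le S(A')$; (ii)~$|g(A')+g(R)|\le S(A'\cup\{a,b\})$; (iii)~$|g(A')+t|\le S(A'\cup\{a\})$; (iv)~$|g(A')+g(R)-t|\le S(A'\cup\{b\})$, for all $A'\subseteq X'$. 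Conditions (i) and (ii) hold automatically since $g$ is an EDF for $T$ and $T(A')=S(A')$, $T(A'\cup\{R\})=S(A'\cup\{a,b\})$. Conditions (iii)--(iv) confine $t$ to an intersection of closed intervals (bounded, e.g.\ $|t|\le S(\{a\})$), which is non-empty iff every lower endpoint is at most every upper endpoint. Using the EDF bound $g(C')-g(A')=g(C'\setminus A')-g(A'\setminus C')\le S(C'\setminus A')+S(A'\setminus C')$, the comparisons of (iii) with itself and of (iv) with itself reduce to the weak monotonicity relation~\eqref{props4}, $S(A'\cup\{a\})+S(C'\cup\{a\})\ge S(A'\setminus C')+S(C'\setminus A')$ (and the same with $b$). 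Using $g(A')+g(C')=g(A'\cup C')+g(A'\cap C')$ together with $|g(A'\cap C')|\le S(A'\cap C')$ and $|g(A'\cup C')+g(R)|\le S(A'\cup C'\cup\{a,b\})$, the comparisons of (iii) with (iv) reduce to the strong subadditivity relation~\eqref{props3}, $S(A'\cup\{a\})+S(C'\cup\{b\})\ge S(A'\cup C'\cup\{a,b\})+S(A'\cap C')$, where we used $(A'\cup\{a\})\cap(C'\cup\{b\})=A'\cap C'$ since $a\ne b$ and $a,b\notin X'$. Hence the feasible interval for $t$ is non-empty, and any $t$ in it yields the desired EDF $f$, closing the induction. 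I expect the only real obstacle to be keeping the four families of interval inequalities straight and matching each to the correct instance of WM or SSA; everything else is formal bookkeeping.
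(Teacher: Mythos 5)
Your proof is correct, but it follows a genuinely different route from the paper. The paper proves theorem \ref{thm:refine} by convex duality: it minimizes $\sum_{x\in X'}|f(x)-f'(x)|$ subject to the EDF constraints and the constraint $\sum_{x\in X\setminus X'}f(x)=g(R)$, dualizes with a multiplier $\lambda$ for the new constraint, and uses the disentangling corollary \ref{thm:disentangling} (itself built on SSA and WM) to show the dual optimum vanishes, case-splitting on the sign of $\lambda$. You instead induct on $|X\setminus X'|$, reducing everything to splitting a single element off the merged party, and you settle that two-element case by an elementary one-dimensional interval-intersection (Helly) argument: the only freedom is $t=f(a)$, the constraints are closed intervals, and pairwise compatibility of their endpoints is exactly WM \eqref{props4} (for the $(iii)$--$(iii)$ and $(iv)$--$(iv)$ comparisons, via $U=A'\cup\{a\}$, $V=C'\cup\{a\}$) and SSA \eqref{props3} (for the $(iii)$--$(iv)$ comparisons, via $U=A'\cup\{a\}$, $V=C'\cup\{b\}$), combined with the EDF bounds on $g$ including the merged-region bound $|g(A'\cup C')+g(R)|\le S(A'\cup C'\cup\{a,b\})$. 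I checked the endpoint comparisons, the identification of the merged function of $\bar S$ with $T$, and the bookkeeping in the inductive step ($f(a)+\sum_{X'''}f=\bar g(a)+\bar g(\bar R)=g(R)$); all are sound, and the appeal to proposition \ref{thm:merging} for $\bar S$ being an entropy function is legitimate. What each approach buys: yours is elementary and effectively constructive (a greedy, one-party-at-a-time extension with an explicit feasible interval at each step), avoiding LP duality and the disentangling lemmas entirely; the paper's duality machinery is heavier but uniform — the same template proves theorems \ref{thm:saturatepos}, \ref{thm:restriction}, and \ref{thm:refine}, and it is the version that carries over to the continuum statements (lemma \ref{thm:disentanglingcont} and the flow theorems) used elsewhere in the paper.
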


Like that of theorem \ref{thm:restriction}, the proof of this theorem, again found in subsection \ref{sec:EDFproofs}, makes use of the SSA and WM properties \eqref{props3}, \eqref{props4}. Together, proposition \ref{thm:merging} and theorem \eqref{thm:refine} imply that, under merging of parties, the entropohedron transforms from $\R^X$ to $\R^Y$ under the map $M$:
\begin{corollary}
Let $S$ be an entropy function for $X$ and $T$ the entropy function for $Y:=X'\cup\{R\}$ defined by \eqref{mergedEV}. Then $F_T=M(F_S)$, where $M:\R^X\to\R^Y$ is defined by \eqref{compatible}.
\end{corollary}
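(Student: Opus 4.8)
The plan is to obtain the set equality $F_T = M(F_S)$ by proving the two inclusions separately, each of which has essentially already been packaged in a result stated above, so that the corollary is just the act of putting them together.

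First I would show $M(F_S) \subseteq F_T$. Let $f \in F_S$, i.e.\ let $f$ be any EDF for $S$. Proposition~\ref{thm:merging} states precisely that the function $g$ on $Y$ defined by \eqref{compatible} is an EDF for $T$; but by the definition of the map $M$, this $g$ is nothing other than $M(f)$. Hence $M(f) \in F_T$, and since $f$ was an arbitrary element of $F_S$, this gives $M(F_S) \subseteq F_T$.

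Next I would show the reverse inclusion $F_T \subseteq M(F_S)$. Let $g \in F_T$, i.e.\ let $g$ be any EDF for $T$. Theorem~\ref{thm:refine} guarantees the existence of an EDF $f$ for $S$ — so $f \in F_S$ — satisfying \eqref{compatible}, which again by the definition of $M$ says exactly that $g = M(f)$. Therefore $g \in M(F_S)$, and since $g$ was arbitrary, $F_T \subseteq M(F_S)$. Combining the two inclusions yields $F_T = M(F_S)$, as claimed.

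There is no real obstacle at the level of this corollary: all of the content sits in proposition~\ref{thm:merging} (the easy direction, immediate from the definitions of $S$, $T$ and the EDF condition) and in theorem~\ref{thm:refine} (the hard direction, whose proof invokes the SSA and WM properties \eqref{props3}, \eqref{props4}). The only point worth remarking on is a consistency check: $M$ is linear, so $M(F_S)$ is automatically a convex polytope in $\R^Y$, in agreement with the fact that $F_T$ is a convex polytope; and since $M$ acts as the identity on $\R^{X'}$ and collapses the block $\R^{X\setminus X'}$ along the all-ones vector, one expects a generic drop in dimension tracking the reduction from $|X|$ parties to $|Y|=|X'|+1$ parties.
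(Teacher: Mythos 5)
Your proposal is correct and is exactly the paper's argument: the paper states that proposition~\ref{thm:merging} and theorem~\ref{thm:refine} together imply the corollary, with the former giving $M(F_S)\subseteq F_T$ and the latter giving $F_T\subseteq M(F_S)$, just as you spell out.
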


\subsubsection{Partial minimization}

Yet another way to obtain an entropy function on a subset of $X$ is to minimize over the complement:
\begin{proposition}
Given an entropy function $S$ for $X$, we obtain an entropy function $S'$ for $X'\subset X$ as follows:
\be\label{S'def}
S'(A\subseteq X'):=\min_{\hat A\subseteq X\setminus X'}S(A\cup \hat A)\,.
\ee
\end{proposition}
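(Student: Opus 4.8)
The plan is to verify that $S'$ satisfies the three defining conditions \eqref{props1}--\eqref{props4} of an entropy function for $X'$; non-negativity is automatic, since $S\ge0$ forces $S'\ge0$. The empty-set condition is immediate: taking $\hat A=\emptyset$ in \eqref{S'def} gives $S'(\emptyset)\le S(\emptyset)=0$, which together with $S'(\emptyset)\ge0$ yields $S'(\emptyset)=0$. The real content is checking SSA \eqref{props3} and WM \eqref{props4} for $S'$, and both will follow from the corresponding property of $S$ by the same device: for sets $A,B\subseteq X'$, pick completions $\hat A,\hat B\subseteq X\setminus X'$ achieving the minima in \eqref{S'def}, apply the inequality for $S$ to the completed sets $A\cup\hat A$ and $B\cup\hat B$, and then recognize the sets appearing on the right-hand side as legal (though not necessarily optimal) completions of the sets on the right-hand side of the desired inequality for $S'$. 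Finiteness of $X$ guarantees the minima are attained, so such optimal completions exist.

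For SSA, with $\hat A,\hat B$ chosen so that $S'(A)=S(A\cup\hat A)$ and $S'(B)=S(B\cup\hat B)$, I would apply \eqref{props3} for $S$ to the pair $A\cup\hat A$, $B\cup\hat B$. Because $X'$ and $X\setminus X'$ are disjoint, one has $(A\cup\hat A)\cap(B\cup\hat B)=(A\cap B)\cup(\hat A\cap\hat B)$ and $(A\cup\hat A)\cup(B\cup\hat B)=(A\cup B)\cup(\hat A\cup\hat B)$, with $\hat A\cap\hat B$ and $\hat A\cup\hat B$ both subsets of $X\setminus X'$. Hence
\[
S'(A)+S'(B)\ge S\big((A\cap B)\cup(\hat A\cap\hat B)\big)+S\big((A\cup B)\cup(\hat A\cup\hat B)\big)\ge S'(A\cap B)+S'(A\cup B),
\]
where the final step uses the definition \eqref{S'def} of $S'$ on $A\cap B$ and on $A\cup B$.

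For WM the argument is identical, with set differences in place of intersections and unions: using the same optimal $\hat A,\hat B$, apply \eqref{props4} for $S$ to $A\cup\hat A$, $B\cup\hat B$, use the disjointness identity $(A\cup\hat A)\setminus(B\cup\hat B)=(A\setminus B)\cup(\hat A\setminus\hat B)$ (and the analogue with $A,B$ swapped), and note that $\hat A\setminus\hat B\subseteq X\setminus X'$ is a legal completion of $A\setminus B$. This gives $S'(A)+S'(B)\ge S'(A\setminus B)+S'(B\setminus A)$, finishing the proof.

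I do not expect any genuine obstacle here; the only thing requiring care is the elementary set bookkeeping under the partition $X=X'\sqcup(X\setminus X')$, together with the trivial remark on attainment of the minima. Conceptually, this proposition records that partial minimization preserves submodularity and weak monotonicity; it is also the abstract reason that the QES formula \eqref{QES}, itself a minimization over homology completions in the bulk, produces a function $S$ on boundary regions that is an entropy function whenever $S_{\rm b}$ is one.
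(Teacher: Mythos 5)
Your proof is correct and follows essentially the same route as the paper's: choose minimizing completions $\hat A_{\rm min},\hat B_{\rm min}$, apply SSA (resp.\ WM) of $S$ to $A\cup\hat A_{\rm min}$ and $B\cup\hat B_{\rm min}$, and observe that $\hat A_{\rm min}\cap\hat B_{\rm min}$, $\hat A_{\rm min}\cup\hat B_{\rm min}$ (resp.\ $\hat A_{\rm min}\setminus\hat B_{\rm min}$, etc.) are admissible completions, so the right-hand side is bounded below by the corresponding values of $S'$. The only difference is cosmetic: you spell out the WM bookkeeping that the paper leaves as ``similar reasoning.''
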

\begin{proof}
It's clear that $S'(\emptyset)=0$. To show that $S'$ obeys \eqref{props3}, let $\hat A_{\rm min}$, $\hat B_{\rm min}$ be minimizers for $S(A\cup\hat A)$, $S(B\cup \hat B)$ respectively. We have
\begin{align}
S'(A\cap B)&\le S((A\cap B)\cup(\hat A_{\rm min}\cap\hat B_{\rm min}))=S((A\cup \hat A_{\rm min})\cap(B\cup\hat B_{\rm min}))\\
S'(A\cup B)&\le S((A\cup B)\cup(\hat A_{\rm min}\cup\hat B_{\rm min}))=S((A\cup \hat A_{\rm min})\cup(B\cup\hat B_{\rm min}))\,,
\end{align}
hence
\begin{align}
S'(A\cap B)+S'(A\cup B)&\le S((A\cup \hat A_{\rm min})\cap(B\cup \hat B_{\rm min}))+S((A\cup \hat A_{\rm min})\cup(B\cup\hat B_{\rm min})) \nonumber\\
&\le S(A\cup \hat A_{\rm min})+S(B\cup\hat B_{\rm min})\nonumber \\
&= S'(A)+S'(B)\,,
\end{align}
where in the second inequality we applied \eqref{props3}. Similar reasoning shows that $S'$ obeys \eqref{props4}.
\end{proof}

For the graph min cut function (see subsection \ref{sec:graphflows}), $S'$ is the min cut function obtained by converting the boundary vertices in $X\setminus X'$ into internal (i.e.\ non-boundary) vertices. This shows that partial minimization maps the holographic entropy cone (defined in that subsection) to itself. In the quantum setting, we do not know what kind of operation, if any, acting on the state on the parties $X$ with entropies $S$ gives the state on the parties $X'$ with entropies $S'$.\footnote{An interesting question in this context is whether partial minimization preserves the constrained inequalities valid for quantum states \cite{Linden:2004ebt,Cadney:2011vix}.}

We will now show that the entropohedra for $S'$ are related in a very simple way to the ones for $S$. Given a function $f':X'\to\R$, we define the function $f_{f'}:X\to\R$ by
\be
f_{f'}(x):=\begin{cases}f'(x)\,,\quad &x\in X' \\ 0\,,\quad &x\notin X'\end{cases}\,.
\ee
\begin{proposition}
Let $S$ be an entropy function for $X$ and $S'$ the entropy function for $X'\subset X$ defined by \eqref{S'def}. Then the function $f':X'\to \R$ is an EDF for $S'$ if and only if $f_{f'}$ is an EDF for $S$.
\end{proposition}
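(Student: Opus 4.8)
The plan is to prove both implications directly from the definitions; the only ingredient needed is the elementary observation that, since $f_{f'}$ vanishes on $X\setminus X'$, one has $\sum_{x\in B}f_{f'}(x)=\sum_{x\in B\cap X'}f'(x)$ for every $B\subseteq X$. Note in particular that this means neither SSA nor WM will be invoked here — those properties were needed only to show that $S'$ is itself an entropy function, not for this comparison of entropohedra.

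For the forward direction, I would assume $f'$ is an EDF for $S'$ and fix an arbitrary $B\subseteq X$. Writing $A:=B\cap X'$ and $\hat B:=B\setminus X'$, the observation gives $|\sum_{x\in B}f_{f'}(x)|=|\sum_{x\in A}f'(x)|$, which by hypothesis is at most $S'(A)=\min_{\hat A\subseteq X\setminus X'}S(A\cup\hat A)$. Since $B=A\cup\hat B$ is one of the sets appearing in that minimization, $S'(A)\le S(B)$, and therefore $|\sum_{x\in B}f_{f'}(x)|\le S(B)$, i.e.\ $f_{f'}$ is an EDF for $S$.

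For the converse, I would assume $f_{f'}$ is an EDF for $S$ and fix $A\subseteq X'$. Let $\hat A_{\min}\subseteq X\setminus X'$ achieve the minimum in $S'(A)=\min_{\hat A\subseteq X\setminus X'}S(A\cup\hat A)$. Applying the EDF bound for $f_{f'}$ to the set $A\cup\hat A_{\min}$ and using the observation once more, $|\sum_{x\in A}f'(x)|=|\sum_{x\in A\cup\hat A_{\min}}f_{f'}(x)|\le S(A\cup\hat A_{\min})=S'(A)$, so $f'$ is an EDF for $S'$.

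There is essentially no obstacle: the statement is soft, and the only point requiring a moment's care is that in the forward direction one must use that the chosen $B$ literally appears as a competitor in the defining minimization of $S'(B\cap X')$, so that the chain of inequalities points the correct way; this is immediate from the definition of $S'$.
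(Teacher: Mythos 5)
Your proof is correct and follows essentially the same route as the paper: the forward direction uses that $B$ itself is a competitor in the minimization defining $S'(B\cap X')$, and the converse applies the EDF bound for $f_{f'}$ to $A\cup\hat A_{\min}$ with $\hat A_{\min}$ the minimizer. Your remark that neither SSA nor WM is needed here is also consistent with the paper's argument.
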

\begin{proof}
Suppose $f'$ is an EDF for $S'$. Then, for any $A\subseteq X$,
\be
\left|\sum_{x\in A}f_{f'}(x)\right| = \left|\sum_{x\in A\cap X}f'(x)\right|\le S'(A\cap X)\le S(A)\,,
\ee
where in the first inequality we used the fact that $f'$ is an EDF for $S'$ and in the last one we used \eqref{S'def}.

Conversely, if $f_{f'}$ is an EDF for $S$, then for any $A\subseteq X'$ and $\hat A\subseteq X\setminus X'$,
\be\label{sumf'}
\left|\sum_{x\in A}f'(x)\right|=\left|\sum_{x\in A\cup \hat A}f_{f'}(x)\right|\le S(A\cup \hat A)\,.
\ee
Choosing $\hat A$ to be the minimizer in the definition \eqref{S'def} of $S'$, the right-hand side of \eqref{sumf'} equals $S'(A)$, so $f'$ is an EDF for $S'$.
\end{proof}
\begin{corollary}
Let $S$ be an entropy function for $X$ and $S'$ the entropy function for $X'\subset X$ defined by \eqref{S'def}. Then $F_{S'}$ is the intersection of $F_S$ with the subspace $\R^{X'}$ in $\R^X$.
\end{corollary}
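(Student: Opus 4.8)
The plan is to deduce this immediately from the proposition that precedes it. First I would pin down the identification of $\R^{X'}$ with a subspace of $\R^X$: namely the coordinate subspace $V := \{h\in\R^X : h(x)=0 \text{ for all } x\in X\setminus X'\}$. Under this identification, the assignment $f'\mapsto f_{f'}$ from $\R^{X'}$ to $\R^X$ is precisely the linear inclusion $\R^{X'}\hookrightarrow V\subseteq\R^X$, and conversely every element of $V$ equals $f_{f'}$ for a unique $f':X'\to\R$ (obtained by restricting to $X'$). So the statement to prove is exactly that $F_S\cap V$, viewed inside $V\cong\R^{X'}$, coincides with $F_{S'}$.

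Next I would run the equivalence through. By the definition of the entropohedron, a point $h\in V$ lies in $F_S\cap\R^{X'}$ if and only if $h$, regarded as an element of $\R^X$, is an EDF for $S$. Writing $h=f_{f'}$, the proposition just proved states that $f_{f'}$ is an EDF for $S$ precisely when $f'$ is an EDF for $S'$, i.e.\ when $f'\in F_{S'}$. Letting $h$ range over all of $V$, this shows that, under the identification $\R^{X'}\cong V$, one has $F_S\cap\R^{X'}=F_{S'}$, which is the claim.

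I do not expect any real obstacle here: the entire substantive content — the two-sided implication relating EDFs for $S$ to EDFs for $S'$ via the zero-extension $f_{f'}$ — is already contained in the proposition, whose proof used only the definition $S'(A)=\min_{\hat A\subseteq X\setminus X'}S(A\cup\hat A)$ together with the defining EDF inequalities (and not even SSA or WM). Consequently the only work in the corollary is the routine bookkeeping of the subspace identification described above; one could alternatively inline the two short inequalities from the proposition's proof, but citing the proposition is cleaner.
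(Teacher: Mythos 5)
Your proof is correct and matches the paper's intent exactly: the paper states this as an immediate corollary of the preceding proposition (that $f'$ is an EDF for $S'$ iff $f_{f'}$ is an EDF for $S$), and your argument is just the careful spelling-out of the identification of $\R^{X'}$ with the coordinate subspace of $\R^X$ under $f'\mapsto f_{f'}$, which is all that is needed.
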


\subsection{Relation to network flows}
\label{sec:graphflows}

Entropy functions can be derived not just from quantum states but also from networks. Specifically, given a weighted undirected graph, with a subset ${X}$ of the vertex set labelled as ``external'' vertices, the min cut function on $X$ satisfies \eqref{props1}--\eqref{props4}. (This function defines a ``pure state'' in the sense that $S({X})=0$.) For a given set $A\subseteq X $, by the max flow-min cut theorem, $S(A)$ equals the maximal flux through the graph from $A$ to ${X}\setminus A$. As we will show in theorem \ref{thm:graphEDF} below, there is a direct relationship between the set of flows and the entropohedron: A point $f\in\R^{X}$ is in $F_S$ if and only if it equals the boundary flux of some flow. This is the graph version of theorem \ref{thm:classicalflowEDF} from section \ref{sec:strictflows}.

Not every entropy function can be realized as the min cut function on a network. Network min cut functions obey additional inequalities beyond \eqref{props3}, \eqref{props4}, such as $-I_3(A:B:C)\ge0$ (where the negative tripartite information $-I_3$ was defined in \eqref{I3def}). The set of entropy functions obtainable from networks, for a given number of parties, is called the \emph{holographic entropy cone} \cite{Bao:2015bfa}. Its structure for a general number of parties remains unknown (see \cite{Grimaldi:2025jad} for an overview and references). It would be interesting if the entropohedron might have some special properties for such entropy functions, and if it could be used to shed some light on the holographic entropy cone.

Before formally stating theorem \ref{thm:graphEDF}, we remind the reader of the relevant definitions and facts about graphs. We'll use the term \emph{network} to refer to a weighted undirected graph with some vertices labelled as ``external'' and the rest as ``internal''. Specifically, a network $(X,I,E,w)$ consists of 
a set ${X}$ of external vertices, a set $I$ of internal vertices, a set $E$ of edges, and a positive function $w$ on $E$. Each edge $e\in E$ is an unordered pair of distinct vertices. A cut $r$ is a subset of $I\cup{X}$, its edge set $\eth r$ is the set of edges with one vertex inside and the other outside $C$, and its weight $|\eth r|$ is the total weight of its edge set:
\be
|\eth r|:=\sum_{e\in \eth r}w(e)\,.
\ee
A cut for $A$ is one such that $r\cap{X}=A$; we write $r\sim A$. The entropy function $S$ on $X$ is defined by the minimal-weight cut:
\be
S(A):=\min_{r\sim A}|\eth r|\,.
\ee
Although the graph is undirected, in order to define flows we need to assign an arbitrary fiducial orientation to the edges; we then define a sign $s(i,e)=\pm1$ for $i\in e$, with $s(i,e)=+1$ if the orientation of $e$ points toward the vertex $i$ and $s(i,e)=-1$ if it points away. A \emph{flow} on the graph is a function $v:E\to\R$ obeying
\be
\forall \,e\in E\,,\quad |v(e)|\le w(e)\,;\qquad
\forall\,i\in I\,,\quad \sum_{e\ni i}s(i,e)v(e)=0\,.
\ee
At an external vertex $x\in{X}$, the flux of a flow is given by
\be
\Phi_v(x):=\sum_{e\ni x}s(x,e)v(e)\,.
\ee
The set $\mathcal{V}$ of flows is convex and invariant under $v\to-v$. 
The max flow-min cut theorem states that the maximal flux through $A$ equals $S(A)$:
\be
S(A)=\sup_{v\in\mathcal{V}}\sum_{x\in A}\Phi_v(x)\,.
\ee

We are now ready to state and prove the relation between flows and the min cut entropohedron.

\begin{theorem}\label{thm:graphEDF}
Given a network $(X,I,E,w)$, 
let $S$ be the corresponding min cut function on $2^{X}$. A point $f\in\R^{X}$ is in $F_S$ if and only if $f=\Phi_v$ for some flow $v$ on the network.
\end{theorem}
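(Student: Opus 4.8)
The plan is to prove the two directions separately, treating this as the graph analogue of Theorem \ref{thm:classicalflowEDF} and mimicking its LP-duality structure. For the \emph{forward} direction ($f=\Phi_v$ for a flow $v$ implies $f\in F_S$), I would argue directly, without any duality: given any $A\in 2^X$ and any cut $r\sim A$, the net flux out of $r$ equals the signed sum of edge flows across $\eth r$, so $\bigl|\sum_{x\in A}\Phi_v(x)\bigr| = \bigl|\sum_{e\in\eth r} \pm v(e)\bigr| \le \sum_{e\in\eth r} w(e) = |\eth r|$, using conservation at internal vertices to kill all internal contributions and $|v(e)|\le w(e)$ edge by edge. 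Minimizing over $r\sim A$ gives $\bigl|\sum_{x\in A}f(x)\bigr|\le S(A)$, so $f\in F_S$. This is the easy half.

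For the \emph{converse} ($f\in F_S$ implies $f=\Phi_v$ for some flow $v$), I would set up a feasibility LP exactly as in Theorem \ref{thm:classicalflowEDF}. Consider the convex program of minimizing $\sum_{e\in E} w(e)|z(e)| - \sum_{x\in X}\phi(x)f(x)$ over functions $z:E\to\R$ and $\phi:(I\cup X)\to\R$ subject to the linear constraint $z(e) = \phi(i_+) - \phi(i_-)$ for each edge $e$ (where $i_\pm$ are the endpoints with the fiducial orientation). Since this has only equality constraints, strong duality is automatic. On the primal side I would solve it by the discrete coarea/layer-cake argument: writing $\phi$ via its level sets, the objective becomes an integral over thresholds $\hat\phi$ of $\bigl[\,|\eth r(\hat\phi)| - \operatorname{sgn}(\hat\phi)\sum_{x\in A(\hat\phi)} f(x)\,\bigr]$, which is nonnegative term-by-term precisely because $f\in F_S$, so the primal minimum is $0$, attained at $\phi\equiv 0$. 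On the dual side I would introduce a Lagrange multiplier $v(e)$ for each edge constraint; minimizing the Lagrangian over $z$ forces $|v(e)|\le w(e)$, and minimizing over $\phi$ forces flow conservation $\sum_{e\ni i}s(i,e)v(e)=0$ at internal vertices $i$ and $\Phi_v(x)=f(x)$ at boundary vertices $x$. Thus the dual is a feasibility program whose value is $0$ iff a flow with $\Phi_v=f$ exists; since strong duality equates the two optimal values, such a flow exists, proving the theorem.

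The main obstacle I anticipate is purely bookkeeping rather than conceptual: getting the signs and the fiducial-orientation conventions consistent between the primal objective, the coarea decomposition, and the dual constraints — in particular making sure the edge variable $z(e)$, the discrete gradient $\phi(i_+)-\phi(i_-)$, and the multiplier $v(e)$ all carry compatible orientations so that the dual constraint comes out as the standard flow-conservation law with the signs $s(i,e)$. Once the forward direction has fixed the correct inequality $\bigl|\sum_{x\in A}f(x)\bigr|\le S(A)$ as the relevant hypothesis, the layer-cake step is routine, and the dualization is formally identical to the continuum proof of Theorem \ref{thm:classicalflowEDF} with integrals replaced by sums. I would note in passing that, as remarked in the excerpt, this is also a special case of the discrete analogue of Theorem \ref{thm:strictsubcontour} with all bulk entropies set to zero, but the direct proof above is cleaner to present.
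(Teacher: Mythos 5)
Your proposal is correct and takes essentially the same approach as the paper: the easy direction by summing flow conservation over a cut and using $|v(e)|\le w(e)$ edge by edge, and the converse by exactly the LP-duality-plus-layer-cake argument the paper uses (explicitly in the continuum for theorem \ref{thm:classicalflowEDF}, and in mirror form for theorem \ref{thm:graphEDF} itself). The only difference is which member of the dual pair is called primal --- you dualize from the scalar $\phi$-program (solved by the level-set/coarea decomposition) to the flow-feasibility program, whereas the paper's graph proof starts from the flow-feasibility program and dualizes to the scalar $\lambda$-program --- a presentational rather than substantive distinction, since the same strong-duality relation and the same evaluation of the scalar program via $f\in F_S$ and the min-cut definition of $S$ do the work in both.
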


\begin{proof}
It follows directly from the definitions that, for any flow $v$, $\Phi_v\in F_S$. The more non-trivial statement is the converse: for any $f\in F_S$, there exists a flow $v$ such that $\Phi_v=f$. To prove this, we set up the following convex program:
\begin{align}
\text{Minimize $0$ over $v:E\to\R$ subject to:}&\nonumber\\
\forall \,e\in E\,,&\quad |v(e)|\le w(e) \nonumber\\
\forall\,x\in{X}\,,&\quad\sum_{e\ni x}s(x,e)v(e)=f(x)\nonumber\\
\forall\,i\in I\,,&\quad \sum_{e\ni i}s(i,e)v(e)=0\,.
\end{align}
Such a convex program with objective 0 is a feasibility test: If there exists a feasible flow, then the optimal value is 0; if not, it is $+\infty$. Dualizing the program, we will see which one is correct. Since the program can be written as a linear program, strong duality automatically holds.

We will impose the first constraint implicitly and introduce Lagrange multipliers $\lambda(x)$, $\lambda(i)$ for the second and third constraints respectively. The dual program is
\be
\text{Maximize }g[\lambda]:=
\sum_{x\in{X}}\lambda(x)f(x)
-\sum_{e\in E}\left|\Delta\lambda(e)\right|
\text{ over $\lambda:I\cup X\to\R$}\,,
\ee
where $\Delta\lambda(e)$ is the difference between the values of $\lambda$ at the two vertices of the edge $e$. (The dual program has no constraints.) Defining, for $\hat\lambda\in\R$, the cut $r(\hat\lambda)$ and boundary set $A(\hat\lambda)$ by,
\be
r(\hat\lambda):=\{i\in I\cup{X}:0<\lambda(i)<\hat\lambda\text{ or }\hat\lambda<\lambda(i)<0\}\,,\qquad
A(\hat\lambda):=r(\hat\lambda)\cap X\,,
\ee
we can rewrite the dual objective as
\be
\int d\hat\lambda\left[
\sgn(\hat\lambda)\sum_{x\in A(\hat\lambda)}f(x)-|\eth r(\hat\lambda)|
\right].
\ee
By the assumption that $f\in F_S$, the first term in the integrand is bounded above by $S(A(\hat\lambda))$. The second term is bounded above by $-S(A(\hat\lambda))$ by the definition of $S$. So the total is non-positive for all $\hat\lambda$. Therefore, the maximum is 0, achieved by $\lambda=0$. (As a check, note that, if $f$ violates the EDF condition for some $A\subseteq X $, then by making $\lambda$ constant on the minimal cut for $A$ and 0 elsewhere, the integral can be made arbitrarily large and positive. Therefore, the maximum is $+\infty$, so as expected the primal program does not admit a feasible point.)
\end{proof}

\subsection{Proofs}
\label{sec:EDFproofs}

In this subsection, we will prove theorems \ref{thm:saturatepos}, \ref{thm:restriction}, and \ref{thm:refine}.

\subsubsection{Disentangling lemmas}

Given a function $\mu:2^{X}\to\R$, we define the function $\phi_\mu:X\to\R$ as follows:
\be
\phi_\mu(x):=\sum_{A\ni x}\mu(A)\,.
\ee
We will need the following ``disentangling'' lemmas in the proofs in the following subsubsection.

\begin{lemma}\label{thm:disentangling1}
Given functions $\mu,\nu:2^X\to[0,+\infty)$, there exist functions $\mu',\nu':2^X\to[0,+\infty)$ with non-overlapping supports such that
\be
\phi_{\mu'}-\phi_{\nu'} = \phi_\mu-\phi_\nu
\ee
and
\be
\sum_{A\subseteq {X}}\left(\mu'(A)+\nu'(A)\right)S(A)\le
\sum_{A\subseteq {X}}\left(\mu(A)+\nu(A)\right)S(A)\,.
\ee
By ``non-overlapping supports'', we mean that, if $\mu'(A)>0$, $\nu'(B)>0$, then $A\cap B=\emptyset$.
\end{lemma}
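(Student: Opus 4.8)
The plan is to prove the lemma by an ``uncrossing'' argument driven by weak monotonicity \eqref{props4}, together with a potential-function argument to show the process can be carried to completion.

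The basic move is this. Suppose $\mu,\nu$ have overlapping supports, i.e.\ there are $A,B\in 2^X$ with $\mu(A)>0$, $\nu(B)>0$, and $A\cap B\neq\emptyset$. Set $\delta:=\min\{\mu(A),\nu(B)\}>0$ and pass to the new pair obtained by the replacements
\be
\mu(A)\mapsto\mu(A)-\delta,\quad \mu(A\setminus B)\mapsto\mu(A\setminus B)+\delta,\quad \nu(B)\mapsto\nu(B)-\delta,\quad \nu(B\setminus A)\mapsto\nu(B\setminus A)+\delta,
\ee
all other values left unchanged. I would verify four things by direct computation: (i) the new functions are still $[0,\infty)$-valued, since $\delta\le\mu(A)$ and $\delta\le\nu(B)$; (ii) $\phi_\mu-\phi_\nu$ is unchanged, because at every $x\in X$ the increments of $\phi_\mu$ and of $\phi_\nu$ are both equal to $-\delta$ if $x\in A\cap B$ and $0$ otherwise; (iii) the objective $\sum_C(\mu(C)+\nu(C))S(C)$ changes by $\delta\bigl(S(A\setminus B)+S(B\setminus A)-S(A)-S(B)\bigr)\le 0$, which is exactly where WM \eqref{props4} is used; and (iv) the auxiliary potential $P:=\sum_C(\mu(C)+\nu(C))\,|C|$ strictly decreases, by $2\delta\,|A\cap B|\ge 2\delta>0$ (using $|A\setminus B|=|A|-|A\cap B|$ and likewise for $B$). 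The degenerate case $A\subseteq B$, in which mass is pushed onto $\emptyset$, is harmless: $\emptyset$ contributes nothing to $\phi$, nothing to the objective since $S(\emptyset)=0$, and is never part of an overlap.

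To conclude, I would consider the polyhedron $\mathcal F$ of all pairs $(\mu',\nu')$ of $[0,\infty)$-valued functions on $2^X$ with $\phi_{\mu'}-\phi_{\nu'}=\phi_\mu-\phi_\nu$ and $\sum_C(\mu'(C)+\nu'(C))S(C)\le\sum_C(\mu(C)+\nu(C))S(C)$. This is a nonempty (it contains $(\mu,\nu)$), finite-dimensional polyhedron on which the linear functional $P\ge 0$ is bounded below, hence $P$ attains its minimum at some $(\mu^*,\nu^*)\in\mathcal F$. If $\mu^*,\nu^*$ had overlapping supports, applying the basic move to a witnessing pair would, by (i)--(iii), produce another element of $\mathcal F$, but by (iv) one with strictly smaller $P$ --- contradicting minimality. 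Hence $(\mu^*,\nu^*)$ has non-overlapping supports, and being in $\mathcal F$ it satisfies both stated conclusions.

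The one point requiring a little care --- the ``main obstacle'' such as it is --- is the passage from the local move to the global statement: naively iterating the uncrossing move decreases $P\ge 0$ at every step, but since $\delta$ may be arbitrarily small one should not simply invoke termination; the clean fix is to minimize $P$ over $\mathcal F$ and argue by optimality, as above. Note that only weak monotonicity \eqref{props4} enters here; strong subadditivity \eqref{props3} plays no role in this lemma, being needed only for the companion nesting lemma \ref{thm:disentangling2}.
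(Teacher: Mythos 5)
Your proof is correct, and its core is identical to the paper's: the same uncrossing move $\mu(A)\to\mu(A)-\delta$, $\mu(A\setminus B)\to\mu(A\setminus B)+\delta$, $\nu(B)\to\nu(B)-\delta$, $\nu(B\setminus A)\to\nu(B\setminus A)+\delta$, with weak monotonicity \eqref{props4} giving the non-increase of the objective and the pointwise cancellation on $A\cap B$ preserving $\phi_\mu-\phi_\nu$ (and, as you note, only WM is needed here, SSA entering only in the nesting lemma). Where you genuinely diverge is in how you globalize the local move. The paper iterates: it introduces the overlap functional $O=\sum_x\phi_\mu(x)\phi_\nu(x)=\sum_{A,B}\mu(A)\nu(B)\,|A\cap B|$, shows each step removes at least the chosen pair's contribution to $O$, greedily picks the largest contributor so that $O$ shrinks by a factor of at least $1-1/n$ per step, and concludes convergence only in the limit of infinitely many steps (explicitly conceding it has not shown finite termination, and with a somewhat informal ``the changes go to zero, so the functions converge'' step). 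You instead avoid iteration entirely: you minimize the linear potential $P=\sum_C(\mu(C)+\nu(C))\,|C|$ over the feasibility polyhedron $\mathcal F$, invoke attainment of the minimum (a linear functional bounded below on a nonempty polyhedron attains its infimum), and derive a contradiction from any residual overlap via the strict decrease $\Delta P=-2\delta\,|A\cap B|$. Your route buys an exact, non-limiting argument that sidesteps the convergence subtleties the paper glosses over; the paper's route is more algorithmic and quantitative (geometric decay of $O$), which is in the spirit of its continuum lemma \ref{thm:disentanglingcont} where an explicit limiting procedure is what one would actually carry out. Your handling of the degenerate case $A\subseteq B$ (mass pushed onto $\emptyset$) is also correct and consistent with $S(\emptyset)=0$.
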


\begin{lemma}\label{thm:disentangling2}
Given a function $\mu:2^X\to[0,+\infty)$, there exist subsets of ${X}$, $A_1\subset \cdots \subset A_m$, and a function $\mu':2^X\to[0,+\infty)$ supported on $\{A_1,\ldots,A_m\}$ such that
\be
\phi_{\mu'} = \phi_\mu
\ee
and
\be
\sum_{i=1}^m\mu'(A_i)S(A_i)\le
\sum_{A\subseteq {X}}\mu(A)S(A)\,.
\ee
\end{lemma}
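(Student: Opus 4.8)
The plan is to prove the lemma by an uncrossing (exchange) argument powered by submodularity \eqref{props3}, packaged as a two-stage linear optimization so that termination is automatic. Since the value $\nu(\emptyset)$ affects neither $\phi_\nu$, nor $\sum_A\nu(A)S(A)$ (because $S(\emptyset)=0$), we restrict throughout to functions with $\nu(\emptyset)=0$. Consider the set $P$ of functions $\nu:2^{X}\to[0,\infty)$ with $\nu(\emptyset)=0$ and $\phi_\nu=\phi_\mu$; this is non-empty (it contains $\mu$, after zeroing out $\mu(\emptyset)$, which does not change $\phi_\mu$) and compact, since it is cut out by finitely many linear constraints and is bounded by $0\le\nu(A)\le\min_{x\in A}\phi_\mu(x)$. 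On $P$, minimize the linear functional $L(\nu):=\sum_{A\in2^{X}}\nu(A)S(A)$; a minimizer exists, the minimizing set $P_0\subseteq P$ is a non-empty compact face, and because $\mu\in P$ we automatically get $\min_PL\le L(\mu)=\sum_A\mu(A)S(A)$. Finally, among the elements of $P_0$, choose one, $\mu'$, that in addition minimizes the auxiliary potential $\Pi(\nu):=\sum_{A\in2^{X}}\nu(A)\,g(|A|)$, where $g(k):=k(|X|-k)$ is strictly concave; such $\mu'$ exists by compactness of $P_0$.

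The core step is to show that $\mu'$ is supported on a chain, i.e.\ that any two sets $A,B$ with $\mu'(A)>0$ and $\mu'(B)>0$ are comparable. Suppose not, and let $A,B$ be incomparable with $\mu'(A),\mu'(B)>0$. Define the uncrossing direction $\delta:2^X\to\R$ by $\delta:=\mathbf 1_{\{A\cap B\}}+\mathbf 1_{\{A\cup B\}}-\mathbf 1_{\{A\}}-\mathbf 1_{\{B\}}$. The pointwise inclusion--exclusion identity $\chi_{A\cap B}(x)+\chi_{A\cup B}(x)=\chi_A(x)+\chi_B(x)$ gives $\phi_\delta\equiv0$, so $\phi_{\mu'+t\delta}=\phi_\mu$ for all $t$; moreover $\mu'+t\delta\ge0$ for $0\le t\le\min(\mu'(A),\mu'(B))$, an interval with positive right endpoint. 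Hence $\mu'+t\delta\in P$ for such $t$. Since $\mu'$ minimizes $L$ over $P$ and, by \eqref{props3},
\be
L(\mu'+t\delta)-L(\mu')=t\bigl(S(A\cap B)+S(A\cup B)-S(A)-S(B)\bigr)\le0\,,
\ee
the right-hand side must vanish for all such $t$, so in fact $\mu'+t\delta\in P_0$. On the other hand, incomparability of $A,B$ forces $|A\cap B|<|A|,|B|<|A\cup B|$ with $|A\cap B|+|A\cup B|=|A|+|B|$, and an elementary computation with $g$ (the product of two nonnegative numbers of fixed sum strictly increases as they move closer) gives $g(|A\cap B|)+g(|A\cup B|)<g(|A|)+g(|B|)$, so
\be
\Pi(\mu'+t\delta)-\Pi(\mu')=t\bigl(g(|A\cap B|)+g(|A\cup B|)-g(|A|)-g(|B|)\bigr)<0
\ee
for $t>0$. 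This contradicts the choice of $\mu'$ as a minimizer of $\Pi$ on $P_0$. Therefore $\mu'$ is supported on a chain $A_1\subsetneq\cdots\subsetneq A_m$, and, viewing $\mu'$ as supported on $\{A_1,\ldots,A_m\}$, we have $\phi_{\mu'}=\phi_\mu$ and $\sum_{i=1}^m\mu'(A_i)S(A_i)=L(\mu')\le L(\mu)=\sum_A\mu(A)S(A)$, which is the lemma.

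The only delicate point — and the reason for the two-stage optimization — is termination: naively iterating the replacement $(A,B)\mapsto(A\cap B,A\cup B)$ need not obviously halt, especially when \eqref{props3} is tight on many pairs so that $L$ fails to strictly decrease. Using $\Pi$ as a strict monovariant among the $L$-optimal configurations sidesteps this cleanly; equivalently, one could run the uncrossing process directly and invoke the fact that $\Pi$ is bounded below and strictly decreases at every genuine uncrossing step to conclude the process stops at a chain-supported $\mu'$ whose $L$-value has only decreased. I expect no other obstacle: the identities $\phi_\delta\equiv0$ and $|A\cap B|+|A\cup B|=|A|+|B|$ are immediate, and the two inequalities used are precisely submodularity \eqref{props3} and strict concavity of $g$.
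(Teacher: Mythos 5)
Your proof is correct, and while it uses the same core ingredient as the paper — the uncrossing move $(A,B)\mapsto(A\cap B,A\cup B)$, whose cost change is controlled by SSA \eqref{props3} and which leaves $\phi$ invariant by $\chi_{A\cap B}+\chi_{A\cup B}=\chi_A+\chi_B$ — the mechanism by which you reach a chain-supported configuration is genuinely different. The paper runs the uncrossing as an explicit iterative algorithm and proves convergence via the quadratic potential $V=\sum_{(A,B)}\mu(A)\mu(B)|A\setminus B|\,|B\setminus A|$, which requires an inclusion–exclusion computation to show that third-party terms do not increase, and even then only yields convergence in the limit of infinitely many steps (the authors explicitly say they have not proven finite termination). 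You instead make a one-shot variational argument: compactness of the fiber $P=\{\nu\ge0:\phi_\nu=\phi_\mu\}$, a first-stage minimization of the linear cost $L$, and a second-stage minimization over the optimal face $P_0$ of the strictly concave cardinality weight $\Pi(\nu)=\sum_A\nu(A)\,|A|(|X|-A|)$ — wait, $\Pi(\nu)=\sum_A\nu(A)\,g(|A|)$ with $g(k)=k(|X|-k)$ — so that any incomparable pair in the support yields a feasible direction that keeps $L$ at its minimum but strictly lowers $\Pi$, a contradiction. This buys you a cleaner existence statement with no convergence analysis and no bookkeeping of cross terms, at the mild price of invoking compactness and a lexicographic optimization rather than an explicit constructive procedure; the paper's route, conversely, is closer to an algorithm one could actually run. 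One small point to tidy: when $A\cap B=\emptyset$, the perturbed function $\mu'+t\delta$ acquires mass at $\emptyset$ and so formally leaves your set $P$ (which fixes $\nu(\emptyset)=0$); since $S(\emptyset)=0$ and $g(0)=0$, you can simply zero out that mass without affecting $\phi$, $L$, or $\Pi$, so the contradiction goes through unchanged — worth a sentence, but not a gap.
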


\begin{corollary}\label{thm:disentangling}
Given functions $\mu,\nu:2^X\to[0,+\infty)$, there exist subsets of ${X}$, $A_1\subset \cdots \subset A_m$, $B_1\subset \cdots \subset B_n$ such that $A_m\cap B_n=\emptyset$, and functions $\mu',\nu':2^X\to[0,+\infty)$ supported on $\{A_1,\ldots,A_m\}$, $\{B_1,\ldots,B_n\}$ respectively such that
\be
\phi_{\mu'}-\phi_{\nu'} = \phi_\mu-\phi_\nu
\ee
and
\be
\sum_{A\subseteq {X}}\left(\mu'(A)+\nu'(A)\right)S(A)\le
\sum_{A\subseteq {X}}\left(\mu(A)+\nu(A)\right)S(A)\,.
\ee
\end{corollary}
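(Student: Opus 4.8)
The plan is to derive the corollary simply by composing the two disentangling lemmas: Lemma~\ref{thm:disentangling1} removes the overlap between the two supports, and then Lemma~\ref{thm:disentangling2}, applied separately to each of the two resulting functions, turns each support into a chain. The only nontrivial point is to check that the disjointness produced by the first lemma survives the nesting step.

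First I would apply Lemma~\ref{thm:disentangling1} to the given pair $\mu,\nu$, obtaining $\mu_1,\nu_1:2^X\to[0,+\infty)$ with non-overlapping supports, with $\phi_{\mu_1}-\phi_{\nu_1}=\phi_\mu-\phi_\nu$ and $\sum_A(\mu_1(A)+\nu_1(A))S(A)\le\sum_A(\mu(A)+\nu(A))S(A)$. The only set that could lie in both supports is $\emptyset$, and since $S(\emptyset)=0$ and $\emptyset$ does not affect $\phi$, I may discard it and assume that $\mu_1(A)>0$, $\nu_1(B)>0$ force $A\cap B=\emptyset$ with $A,B$ nonempty. Next I would apply Lemma~\ref{thm:disentangling2} to $\mu_1$ alone, getting a chain $A_1\subset\cdots\subset A_m$ and $\mu'$ supported on it with $\phi_{\mu'}=\phi_{\mu_1}$ and $\sum_{i}\mu'(A_i)S(A_i)\le\sum_A\mu_1(A)S(A)$, and likewise apply it to $\nu_1$ to get a chain $B_1\subset\cdots\subset B_n$ and $\nu'$ with $\phi_{\nu'}=\phi_{\nu_1}$ and $\sum_j\nu'(B_j)S(B_j)\le\sum_A\nu_1(A)S(A)$. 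Combining the two $\phi$-identities with the one from Lemma~\ref{thm:disentangling1} gives $\phi_{\mu'}-\phi_{\nu'}=\phi_\mu-\phi_\nu$, and summing the three weighted-sum inequalities gives $\sum_i\mu'(A_i)S(A_i)+\sum_j\nu'(B_j)S(B_j)\le\sum_A(\mu(A)+\nu(A))S(A)$, which are exactly the two displayed conclusions of the corollary, once we also know that $A_m\cap B_n=\emptyset$.

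Establishing that last disjointness is the main thing to be careful about, and is where I expect the real content to sit. The point is that the nesting procedure behind Lemma~\ref{thm:disentangling2} only ever replaces a pair of incomparable sets $C,D$ in the current support by $C\cap D$ and $C\cup D$ --- an operation that leaves $\chi_C+\chi_D$, and hence $\phi$, unchanged, and does not raise the weighted sum thanks to SSA~\eqref{props3} --- so every set in the support of $\mu'$ is built from sets in the support of $\mu_1$ by finitely many unions and intersections, and in particular $A_m\subseteq\bigcup\{A:\mu_1(A)>0\}$. Similarly $B_n\subseteq\bigcup\{B:\nu_1(B)>0\}$. Since $A\cap C=\emptyset$ and $B\cap C=\emptyset$ give $(A\cup B)\cap C=\emptyset$, the union of all sets in the support of $\mu_1$ is disjoint from the union of all sets in the support of $\nu_1$, whence $A_m\cap B_n=\emptyset$. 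I would make this containment explicit in the write-up (it is implicit in the construction in the proof of Lemma~\ref{thm:disentangling2}) rather than re-running that proof. With it in hand the corollary follows, using no ingredients beyond the two lemmas themselves --- in particular, no additional appeal to SSA or WM beyond what those lemmas already invoke.
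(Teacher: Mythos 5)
Your proof is correct and matches the paper's intended argument: the corollary is meant to follow exactly by composing Lemma~\ref{thm:disentangling1} with Lemma~\ref{thm:disentangling2} applied to each function separately, and you correctly supply the one step left implicit, namely the disjointness $A_m\cap B_n=\emptyset$. One simplification: the containment $A_m\subseteq\bigcup\{A:\mu_1(A)>0\}$ already follows from the \emph{statement} of Lemma~\ref{thm:disentangling2} rather than its proof, since $\mu'\ge0$ and $\phi_{\mu'}=\phi_{\mu_1}$ force every point of every supported set to satisfy $\phi_{\mu_1}>0$, so no appeal to the internal union/intersection construction is needed.
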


\begin{proof}[Proof of lemma \ref{thm:disentangling1}] 
Suppose there exists a pair of sets $A,B\subseteq X $ with $\mu(A)>0$, $\nu(B)>0$, and $A\cap B\neq\emptyset$. Setting $\alpha:=\min(\mu(A),\nu(B))$, do the following replacement:
\be\label{overlapstep}
\mu(A\setminus B)\to \mu(A\setminus B)+\alpha\,,\quad
\nu(B\setminus A)\to \nu(B\setminus A)+\alpha\,,\quad
\mu(A)\to \mu(A)-\alpha\,,\quad
\nu(B)\to \nu(B)-\alpha\,,
\ee
leaving $\mu$ and $\nu$ unchanged for all other subsets of ${X}$. By the WM property \eqref{props4}, this does not increase $\sum_A(\mu(A)+\nu(A))S(A)$. It has the following effect on $\phi_\mu$, $\phi_\nu$:
\be
\phi_\mu\to\phi_\mu-\alpha\,,\quad
\phi_\nu\to\phi_\nu-\alpha\quad
\text{ on $A\cap B$}\,;\qquad
\phi_\mu\to\phi_\mu\,,\quad
\phi_\nu\to\phi_\nu\quad\text{ elsewhere}\,.
\ee
Therefore it leaves $\phi_\mu-\phi_\nu$ unchanged.

Now we check that repeating the above procedure converges to a pair of functions $\mu,\nu$ with non-overlapping supports. We define the following quantity, which quantifies the total overlap between $\mu$ and $\nu$:
\be\label{Odef}
O:=\sum_{x\in X}\phi_\mu(x)\phi_\nu(x) = \sum_{A,B\subseteq X}\mu(A)\nu(B)|A\cap B|\,,
\ee
Under the step described in the above paragraph, the change in $O$ is
\be
\Delta O = -\sum_{x\in A\cap B}\alpha\left(\phi_\mu(x)+\phi_\nu(x)-\alpha\right)\le-\mu(A)\nu(B)|A\cap B|\,,
\ee
where the inequality follows from the fact that, for any $x\in A\cap B$, $\phi_\mu(x)\ge \mu(A)$ and $\phi_\nu(x)\ge \nu(B)$. Therefore, the step reduces the quantity $O$ by at least the contribution of that pair $A,B$ to it in the second sum of \eqref{Odef}.

Let $n$ be the number of pairs $A,B$ that overlap, and therefore could potentially contribute to the second sum in \eqref{Odef}. The pair that contributes the most to the sum contributes at least $O/n$. Running the above step for that pair therefore reduces $O$ by at least a factor $1-1/n$. Therefore, in the limit of an infinite number of steps, $O$ must go to zero.\footnote{We suspect that the algorithm actually always terminates after a finite number of steps, but we have not proven that.} Furthermore, the change in the functions $\mu,\nu$ also goes to zero. So they converge to functions whose supports have no overlaps.
\end{proof}

\begin{proof}[Proof of lemma \ref{thm:disentangling2}]
The proof is similar to the previous one.  Suppose there exists a pair of distinct sets $A,B\subseteq {X}$ with $\mu(A)>0$, $\mu(B)>0$, and neither $A\subset B$ nor $B\subset A$. With $\alpha:=\min(\mu(A),\mu(B))$, do the following replacement:
\begin{multline}
\mu(A\cup B)\to\mu(A\cup B)+\alpha\,,\quad
\mu(A\cap B)\to\mu(A\cap B)+\alpha\,,\\
\mu(A)\to\mu(A)-\alpha\,,\quad
\mu(B)\to\mu(B)-\alpha\,,
\end{multline}
leaving $\mu$ unchanged for all other subsets of ${X}$. This 
leaves $\phi_\mu$ unchanged, and (by the SSA property \eqref{props3}) does not increase $\sum_A\mu(A)S(A)$.

Now we check that repeating the above procedure converges to a function $\mu$ with nested support. We define the following quantity, which quantifies the degree to which the support of $\mu$ is not nested:
\be\label{Vdef}
V:=\sum_{(A,B)}\mu(A)\mu(B)|A\setminus B|\,|B\setminus A|
\ee
(where the sum is over unordered pairs). Under the above step, the term in the sum in \eqref{Vdef} corresponding to the pair $A,B$ that we are acting on gets deleted. However, there are also effects on terms involving a third region. The change in the other terms is
\begin{align}
\alpha\sum_{C\neq A,B}\mu(C)&\left(
|(A\cup B)\setminus C|\,|C\setminus (A\cup B)| 
+|(A\cap B)\setminus C|\,|C\setminus (A\cap B)| \right.\nonumber\\
&\qquad\qquad\qquad\qquad\qquad\qquad\qquad\qquad\qquad
\left.-|A\setminus C|\,|C\setminus A|
-|B\setminus C|\,|C\setminus B|
\right) \nonumber\\
&=-\alpha\sum_{C\neq A,B}\mu(C)\left(|A\setminus(B\cup C)|\,|(B\cap C)\setminus A|+|B\setminus(A\cup C)|\,|(A\cap C)\setminus B|\right) \nonumber\\
&\le0\,,
\end{align}
where the equality follows from a short inclusion-exclusion calculation. The important result is that the contribution of the other terms to the change in $V$ is non-positive, so the change in $V$ in one step is
\be
\Delta V\le -\mu(A)\mu(B)|A\setminus B|\,|B\setminus A|\,.
\ee

Let $n$ be the total number of pairs $A,B$ that are non-nested, and therefore could potentially contribute to the sum in \eqref{Vdef}. The pair that contributes the most to the sum contributes at least $V/n$. Running the above step for that pair therefore reduces $V$ by at least a factor of $1-1/n$. Therefore, in the limit of an infinite number of steps, $V$ must go to zero. Furthermore, the change in the function $\mu$ also goes to 0, so it converges to a function whose support is nested.
\end{proof}

\subsubsection{Proofs about EDFs}

With the disentangling lemmas in hand, we are now in a position to prove the theorems quoted in the rest of the section. All of the proofs rely on the strong duality of convex programs. For all of the convex programs considered in this subsection, the constraints are affine (or can be written in affine form), so Slater's condition is automatically satisfied and strong duality is guaranteed.

\begin{proof}[Proof of theorem \ref{thm:saturatepos}] Consider the following convex program:
\begin{multline}\label{satthmprog}
\text{Maximize }\sum_{i=1}^m\sum_{x\in A_i}f(x)-\sum_{j=1}^n\sum_{x\in B_j}f(x)\text{ over $f:X\to\R$ subject to:} \\
\forall\,A\subseteq X \,,\quad\left|\sum_{x\in A}f(x)\right|\le S(A)\,;\qquad \forall \,x\in A_1\,,\quad f(x)\ge0\,;\qquad
\forall\,x\in B_1\,,\quad f(x)\le0\,.
\end{multline}
Clearly, the objective of this program cannot exceed
\be\label{objmax}
\sum_{i=1}^mS(A_i)+\sum_{j=1}^nS(B_j)\,,
\ee
and can achieve this value only if, for all $i=1,\ldots,m$, $\sum_{x\in A_i}f(x)=S(A_i)$ and, for all $j=1,\ldots,n$, $\sum_{x\in B_j}f(x)=S(B_j)$.

We now dualize \eqref{satthmprog}, using Lagrange multipliers $\mu$ and $\nu$, functions on $2^{X}$, for the constraints $\pm\sum_{x\in A}f(x)\le S(A)$ respectively, and imposing the last two constraints implicitly. Defining the function $\psi:X\to\mathbf{Z}$ by
\be
\psi(x):=|\{i:A_i\ni x\}|-|\{j:B_j\ni x\}|\,,
\ee
the dual program is the following:
\begin{align}\label{satdual}
\text{Minimize }\sum_{A\subseteq X }\left(\mu(A)+\nu(A)\right)S(A)&\text{ over functions $\mu,\nu:2^{X}\to[0,\infty)$ subject to:} \\
\phi_\mu(x)-\phi_\nu(x)&=\psi(x)\quad(x\not\in A_1\cup B_1)\,;\label{dualconstraint1}\\
\phi_\mu(x)-\phi_\nu(x)&\ge\psi(x)\quad(x\in A_1)\,;
\\
\phi_\mu(x)-\phi_\nu(x)&\le\psi(x)\quad(x\in B_1)\,.\label{dualconstraint3}
\end{align}

Appealing to corollary \ref{thm:disentangling}, we can assume that the minimizing pair $\mu,\nu$ is supported on nested and disjoint sets. The dual constraints \eqref{dualconstraint1}--\eqref{dualconstraint3}, together with the non-negativity of $\mu,\nu$, then imply that $\mu$ is supported precisely on $A_1,\ldots,A_m$, with $\mu(A_i)=1$, together possibly with some subsets of $A_1$; and that $\nu$ is supported precisely on $B=1,\ldots,n$, with $\nu(B_j)=1$, together possibly with some subsets of $B_1$. These extra subsets of $A_1$ and $B_1$ only increase the objective, so to minimize it we should eliminate them.\footnote{
As noted below the statement of theorem \ref{thm:saturatepos}, we cannot in general require $f$ to be non-negative on any of the $A_i$s except $A_1$: suppose $A_2\supset A_1$. If $S(A_2)<S(A_1)$ and we saturate on $A_1$, then the integral of $f$ on $A_2\setminus A_1$ must be negative. The place where the proof goes wrong if we attempt to make $f$ non-negative on $A_2$ is that, instead of the first line of \eqref{dualconstraint1}, we would have $\phi_\mu\ge\psi$ on $A_2$, which could be satisfied by setting $\mu(A_2)=2$ and $\mu(A_1)=0$. This would give a lower value of the objective than $\sum_iS(A_i)+\sum_jS(B_j)$.} So the value \eqref{objmax} is the solution for the dual program \eqref{satdual}, and therefore for the primal program \eqref{satthmprog} as well.
\end{proof}

\begin{proof}[Proof of theorem \ref{thm:restriction}]
The proof has a similar structure to that of theorem \ref{thm:saturatepos}. We consider the following convex program:
\be
\text{Minimize }\sum_{x\in{X}'}\left|f(x)-f'(x)\right|
\text{ over $f:X\to\R$ subject to: }
\forall\,A\subseteq X \,,\quad\left|\sum_{x\in A}f(x)\right|\le S(A)\,.
\ee
Our goal is to show that the solution to this convex program is 0. The dual program is:
\begin{multline}
\text{Maximize }-\sum_{A\subseteq X }\left(\mu(A)+\nu(A)\right)S(A)+\sum_{x\in{X}'}(\phi_\mu(x)-\phi_\nu(x))f'(x) \\
\shoveleft{\text{ over $\mu,\nu:2^{X}\to[0,\infty)$ subject to: }}\\ 
\forall\, x\in{X}'\,,\quad|\phi_\mu(x)-\phi_\nu(x)|\le1\,;\qquad
\forall \,x\in{X}\setminus{X}'\,,\quad\phi_\mu(x)=\phi_\nu(x)\,.
\end{multline}
By corollary \ref{thm:disentangling}, we can assume that any minimizing pair $\mu$, $\nu$ has disjoint supports. (We will not use the nesting of their supports in this proof, which therefore uses only the WM and not the SSA property of $S$.) The only way to satisfy the second constraint is if $\phi_\mu(x)=\phi_\nu(x)=0$ for all $x\in{X}\setminus{X}'$, in other words $\mu$, $\nu$ are non-zero only on subsets of ${{X}'}$. The objective then becomes
\be\label{dualobj}
-\sum_{A\subseteq {{X}'}}\left[\mu(A)\left(S(A)-\sum_{x\in A'}f'(x)\right)+\nu(A)\left(S(A)+\sum_{x\in A'}f'(x)\right)\right].
\ee
Since by assumption $f'$ is an EDF on ${X}'$, the coefficients in round parentheses are non-negative, so the objective is non-positive, and its maximum is 0, achieved by $\mu=\nu=0$.
\end{proof}

\begin{proof}[Proof of theorem \ref{thm:refine}]
This theorem is similar to that of theorem \ref{thm:restriction}, the only difference being that now we are constraining $\sum_{x\in{X}\setminus{X}'}f(x)=g(R)$, where $g(R)$ is given. We use the same strategy as in that proof, but need to add a new Lagrange multiplier $\lambda$ to enforce the new constraint, and therefore obtain a slightly more complicated dual program.

The primal program is:
\begin{multline}
\text{Minimize }\sum_{x\in{X}'}\left|f(x)-f'(x)\right|
\text{ over $f:X\to\R$ subject to:}\\
\forall\,A\subseteq X \,,\quad\left|\sum_{x\in A}f(x)\right|\le S(A)\,;\qquad
\sum_{x\in{X}\setminus{X}'}f(x)=g_B\,,
\end{multline}
and again our goal is to show that the optimal value is 0. The dual program is
\begin{multline}
\text{Maximize }-\sum_{A\subseteq X }\left(\mu(A)+\nu(A)\right)S(A)+\sum_{x\in{X}'}\left(\phi_\mu(x)-\phi_\nu(x)\right)f'(x) +\lambda g(R) \\
\shoveleft{\text{ over $\lambda\in\R$ and $\mu,\nu:2^{X}\to[0,\infty)$ subject to: }}\\ 
\forall\, x\in{X}'\,,\quad|\phi_\mu(x)-\phi_\nu(x)|\le1\,;\qquad
\forall \,x\in{X}\setminus{X}'\,,\quad\phi_\mu(x)-\phi_\nu(x)=\lambda\,.
\end{multline}
Now, appealing to corollary \ref{thm:disentangling}, we can assume that the optimal pair of functions $\mu,\nu$ is supported on nested and mutually disjoint subsets of ${X}$. Based on the second dual constraint, there are then three cases, depending on the sign of $\lambda$:
\begin{itemize}
\item $\lambda=0$: The supports of $\mu,\nu$ are contained in ${X}'$. It reduces to the situation in the proof of theorem \ref{thm:restriction}, and the maximal value is 0.
\item $\lambda>0$: The support of $\nu$ is contained in ${X}'$. The support of $\mu$ includes one region $A_m\supseteq({X}\setminus{X}')$ with $\mu(A_m)=\lambda$, with all other supoorted regions contained in ${X}'$. The dual objective now includes, in addition to \eqref{dualobj}, the following term:
\be
-\lambda\left(S(A_m)-\sum_{x\in A_m\cap{X}'}f'(x)-g(R)\right)=
-\lambda\left(T(\tilde A_m)-\sum_{x\in\tilde A_m}f'(x)\right),
\ee
where $\tilde A_m:=(A_m\cap{X}')\cup\{B\}$, and where we used \eqref{mergedEV}. Since $g$ is assumed to be an EDF for $T$, the coefficient in round parentheses is non-negative, so the term is maximized by taking $\lambda\to0$, giving 0. The terms in \eqref{dualobj} also have maximal value 0.
\item $\lambda<0$: The support of $\mu$ is contained in ${X}'$. The support of $\nu$ includes one region $B_N\supseteq({X}\setminus{X}')$ with $\nu(B_n)=-\lambda$, with all other supported regions contained in ${X}'$. The dual objective now includes, in addition to \eqref{dualobj}, the following term:
\be
\lambda\left(S(B_n)+\sum_{x\in B_n\cap{X}'}f'(x)+g(R)\right)=
\lambda\left(T(\tilde B_n)+\sum_{x\in\tilde B_n}f'(x)\right),
\ee
where $\tilde B_n:=(B_n\cap{X}')\cup\{B\}$, and where we used \eqref{mergedEV}. Since $g$ is assumed to be an EDF for $T$, the coefficient in round parentheses is non-negative, so the term is maximized by taking $\lambda\to0$, giving 0. The terms in \eqref{dualobj} also have maximal value 0.
\end{itemize}
We find that the optimizer is $\mu=\nu=\lambda=0$, and the optimal value is 0, proving the theorem.
\end{proof}

\acknowledgments

We would like to thank Juan Pedraza, Andy Svesko, and Brian Swingle for useful discussions. The work of MH is supported by the U.S.\ Department of Energy through award DE-SC0009986. He is also grateful to the Centro de Ciencias de Benasque Pedro Pascual, where part of this work was completed. The work of SRK is supported in part by ISF grant no. 2159/22, by Simons Foundation grant 994296 (Simons Collaboration on Confinement and QCD Strings), by the Minerva foundation with funding from the Federal German Ministry for Education and Research, and by the German Research Foundation through a German-Israeli Project Cooperation (DIP) grant “Holography and the Swampland”. SRK would like to thank Ofer Aharony, Micha Berkooz, Ashoke Sen, and also many people at ICTS Bangalore for useful discussions and for their hospitality during his stay at ICTS. The work of AR is supported by FWO-Vlaanderen project G012222N, the VUB Research Council through the Strategic Research Program High-Energy Physics, and FWO-Vlaanderen through a Senior Postdoctoral Fellowship 1223125N.

\appendix

\section{Comments on generalised entropy} \label{sec:gen_entropy}
In this appendix, we discuss generalised entropy and, in particular, its independence from the UV regulator $\epsilon$, and the generalised entropy of small regions, as this is needed in section~\ref{sec:cutin}. 

The generalised entropy of a region $r$ is~\cite{Bousso_2016} 
\bne \Sgen (r) = \lim_{\epsilon \to 0} \left[\frac{|\eth r|}{4 G_N^{(\epsilon)} } + S_b^{(\epsilon)}(r)  + \text{ higher curvature terms}\right] \label{eq:sgenq} \ene
The bare Newton constant has one-loop and higher counterterms:
\bne \frac{1}{G_N^{(\epsilon)}} = \frac{1}{G_N^{(ren.)}} + f_g \epsilon^{-(d-2)} + \text{ higher loops} \ene
$f_g$ is theory dependent, and has to be negative to cancel the positive $\epsilon$-dependent area-law divergence in $S_b^{(\epsilon)}$. Nonetheless, both $G_N^{(\epsilon)}$ and $G_{N}^{(ren.)}$ are positive, because we do not take $\epsilon < l_p^{(ren.)}$.

It is widely believe that the $\epsilon$-dependent terms in~\eqref{eq:sgenq} cancel so that $\Sgen $ is finite and independent of the UV cutoff:
\bne \Sgen (r) = \left[ \frac{|\eth r |}{4G_{N}^{(ren.)}} + S_b^{(ren.)}(r) + \text{ high curvature terms.} \right]\ene
That the $\epsilon$-dependence in $\Sgen $ cancels has only been shown in specific examples~\cite{Cooperman:2013iqr}; we will assume $\Sgen $ is always $\epsilon$-independent and finite. 

$\Sgen $ has bare area and entropy pieces that are $\epsilon$-dependent, and, as we vary $\epsilon$, any reduction in the area term is compensated by an increase in the entropy term, and vice versa.

The bare and renormalised Newton's constants are approximately the same: since $\epsilon^{d-2} \gg G_N^{(ren.)}$ - we do not take field theory UV cutoff $\epsilon$ smaller than the physical Planck length $(l_p^{(ren.)})^{d-2} = G_N^{(ren.)}$ - we have $G_N^{(\epsilon)} \approx G_N^{(ren.)}$ with corrections suppressed by $l_{p}^{(ren.)}/\epsilon$.

Now we consider an explicit example, to see how $\Sgen $ is $\epsilon$-independent and finite. Let us consider a vacuum state in 2+1d topological CFT and a one-parameter family of disk regions with radius $R$ for which the entanglement entropy is
\bne S^{(\epsilon)}_b (r) = \frac{2\pi R}{\epsilon} - \gamma \ene
where $\gamma > 0$ is the topological entropy, and $S^{(ren.)}  = -\gamma$. The generalised entropy of the disk is finite and $\epsilon$-independent:
\bne \Sgen  (r) = \frac{2\pi R}{G_N^{(ren.)}} -\gamma \ene

\paragraph{Small region limit.} Now we consider the generalised entropy of small regions because, in section~\ref{sec:cibtsc}, in deriving local bounds on $|v|$ in the strict cutoff-independent prescription, we considered $\Sgen  (r)$ with $|\eth r|$ in the regime $\frac{|v|}{|\nabla |v||} \gg |\eth r| \gg \epsilon^{d-2}$. In particular, we derived the bound 
\bne |v(x)| \leq \inf_{\eth r \ni x} \frac{\Sgen (\eth r)}{|\eth r|}. \label{eq:lcnbd}\ene

Assuming that the length scales of $|\eth r|$ are below the length scales of excitations in the bulk state, we can approximate the reduced state on $r$ as the vacuum state. 

For our 2+1d example, for small disks, we get
\bne \frac{\Sgen (r)}{|\eth r|} \approx \frac{1}{4G_{N}^{(\epsilon)}} +\frac{1}{\epsilon} - \frac{\gamma}{2\pi R} = \frac{1}{4G_{N}^{(ren.)}} - \frac{\gamma}{2\pi R}. \label{eq:2dtcf} \ene

The generalised entropy decreases as the disk shrinks, but is positive while $R$ is parametrically larger than $l_p^{(ren.)}$. The renormalised bulk entropy correction to $\Sgen $ is negative and small.

When the UV-finite part of $S_{b}^{(\epsilon)}(r)$ scales with the volume of $r$, $S^{(ren.)}(r) \propto \text{Vol}(r)$, a volume-law entropy, 
\bne \frac{\Sgen (r)}{|\eth r|} \approx \frac{1}{4G_{N}^{(ren.)}} + \# \frac{\text{Vol}(r)}{|\eth r|}. \label{eq:volle} \ene
So, for~\eqref{eq:volle}, the regions that give the tightest bound in~\eqref{eq:lcnbd} are small and spherical. As in~\eqref{eq:2dtcf}, in~\eqref{eq:volle} the ratio decreases as the region shrinks, so the tightest bound on $|v(x)|$ in~\eqref{eq:lcnbd} comes from the smallest regions in our allowed set with $|\eth r| \gg \epsilon^{d-2}$, and from spherical regions because those minimise Vol$(r)/|\eth r|$ for fixed $|\eth r|$.
\bibliographystyle{JHEP}
\bibliography{biblio.bib}

\end{document}